%% file: bare_jrnl.tex
\documentclass[journal]{IEEEtran}
\ifCLASSINFOpdf
\else
\fi
\usepackage[utf8]{inputenc} % allow utf-8 input
\usepackage[T1]{fontenc}    % use 8-bit T1 fonts
\usepackage{hyperref}       % hyperlinks
\usepackage{url}            % simple URL typesetting
\usepackage{booktabs}       % professional-quality tables
\usepackage{amsfonts}       % blackboard math symbols
\usepackage{nicefrac}       % compact symbols for 1/2, etc.
\usepackage{microtype}      % microtypography
\usepackage{xcolor}         % colors

\usepackage{graphicx} % Required for inserting images
\usepackage{amssymb}
\usepackage{enumerate}
\usepackage{cite}
\usepackage{algorithm}
\usepackage{algorithmic}
\usepackage{amsmath}
\usepackage{enumitem}
\usepackage{amsthm}
\usepackage{bbm}
\usepackage{dsfont}
\usepackage[numbers,sort&compress]{natbib}
\usepackage{comment}
\usepackage{enumitem} 
\usepackage[caption=false,font=footnotesize]{subfig}
\usepackage{capt-of}

\newtheorem{remark}{Remark}
\newtheorem{definition}{Definition}
\newtheorem{theorem}{Theorem}
\newtheorem{lemma}{Lemma}

\newtheorem{assumption}{Assumption}
\newtheorem{problem}{Problem}

\begin{document}
%
% paper title
% Titles are generally capitalized except for words such as a, an, and, as,
% at, but, by, for, in, nor, of, on, or, the, to and up, which are usually
% not capitalized unless they are the first or last word of the title.
% Linebreaks \\ can be used within to get better formatting as desired.
% Do not put math or special symbols in the title.
\title{
% Orchestrating Teams in Game via Supervisor Network: Convergence and Robustness
Distributed Team Orchestration via Supervisor Networks: Convergence, Optimality, and Resilience 

}
%
%
% author names and IEEE memberships
% note positions of commas and nonbreaking spaces ( ~ ) LaTeX will not break
% a structure at a ~ so this keeps an author's name from being broken across
% two lines.
% use \thanks{} to gain access to the first footnote area
% a separate \thanks must be used for each paragraph as LaTeX2e's \thanks
% was not built to handle multiple paragraphs
%
\author{Juntian~Zhu, 
        Guanpu~Chen,
        Tongtian~Zhu,
        Miguel de Carvalho,
        Zhouwang Yang, and
        Fengxiang~He
\thanks{J. Zhu and Z. Yang are with the School of Artificial Intelligence and Data Science, University of Science and Technology of China, Hefei 230026, China. Email: zjt1229@mail.ustc.edu.cn, yangzw@ustc.edu.cn.}
\thanks{G. Chen is with the School of Automation, Southeast University, Nanjing 210096, China. Email: guanpu\_chen@seu.edu.cn.}
\thanks{T. Zhu is with the College of Computer Science and Technology, Zhejiang University, Hangzhou 310027, China. Email: raiden@zju.edu.cn.}
\thanks{M. de Carvalho is with the School of Mathematics, University of Edinburgh, Edinburgh EH9 3FD, Scotland. Email: Miguel.deCarvalho@ed.ac.uk.}
\thanks{F. He is with the School of Informatics, University of Edinburgh, Edinburgh EH8 9AB, Scotland. Email: fhe@ed.ac.uk.}
% \thanks{This work was partially conducted when J. Zhu was visiting the University of Edinburgh.}
% \thanks{Corresponding author: Guanpu Chen.}
}

\maketitle

% As a general rule, do not put math, special symbols or citations
% in the abstract or keywords.
\begin{abstract}
In this paper, we study zero-sum potential team games with a supervisor network, where agents rely on supervisor-provided belief information rather than accurate common beliefs.
The main challenge is that such belief information can be inaccurate because of supervisors' belief-estimation errors and the misreporting of joint actions by Byzantine teams.
We propose the distributed team-orchestrating algorithm (DTOA), which combines team fictitious play with supervisor-based distributed belief learning.
We prove the convergence of supervisors' belief estimates and establish that the induced learning dynamics converge to a near team-Nash equilibrium (TNE) in terms of the team-Nash gap (TNG). 
In the Byzantine setting, we consider a misreporting attack model and develop a Byzantine-resilient DTOA. 
We further provide probabilistic guarantees for Byzantine-team identification and establish an asymptotic bound on the honest TNG.
Numerical experiments illustrate the theoretical findings, compare DTOA with baseline learning methods, and evaluate its performance in a Markov decision process setting. 
\end{abstract}

% Note that keywords are not normally used for peerreview papers.
\begin{IEEEkeywords}
%Team theory, incomplete information, 
Team game, distributed learning, algorithmic convergence, equilibrium, Byzantine resilience
\end{IEEEkeywords}

% For peer review papers, you can put extra information on the cover
% page as needed:
% \ifCLASSOPTIONpeerreview
% \begin{center} \bfseries EDICS Category: 3-BBND \end{center}
% \fi
%
% For peerreview papers, this IEEEtran command inserts a page break and
% creates the second title. It will be ignored for other modes.
\IEEEpeerreviewmaketitle

\input{sections_IEEE/1_Introduction}
\input{sections_IEEE/2_Related_work}

\input{sections_IEEE/3_Preliminaries}
\input{sections_IEEE/4_Game_formulation}

\input{sections_IEEE/5_Convergence}

\input{sections_IEEE/6_Robustness}
\input{sections_IEEE/7_Experiments}

\input{sections_IEEE/8_Conclusion}

\bibliographystyle{IEEEtran}
\bibliography{reference}

%%%%%%%%%%%%%%%%%%%%%%%%%%%%%%%%%%%%%%%%%%%%%%%%%%%%%%%%%%%%

\appendix

\input{appendices_IEEE/proof_of_belief_convergence}
\input{appendices_IEEE/Proof_of_gap}
\input{appendices_IEEE/proof_of_byzantine}

\ifCLASSOPTIONcaptionsoff
  \newpage
\fi

% trigger a \newpage just before the given reference
% number - used to balance the columns on the last page
% adjust value as needed - may need to be readjusted if
% the document is modified later
%\IEEEtriggeratref{8}
% The "triggered" command can be changed if desired:
%\IEEEtriggercmd{\enlargethispage{-5in}}

% references section

% can use a bibliography generated by BibTeX as a .bbl file
% BibTeX documentation can be easily obtained at:
% http://mirror.ctan.org/biblio/bibtex/contrib/doc/
% The IEEEtran BibTeX style support page is at:
% http://www.michaelshell.org/tex/ieeetran/bibtex/
%\bibliographystyle{IEEEtran}
% argument is your BibTeX string definitions and bibliography database(s)
%\bibliography{IEEEabrv,../bib/paper}
%
% <OR> manually copy in the resultant .bbl file
% set second argument of \begin to the number of references
% (used to reserve space for the reference number labels box)

% biography section
% 
% If you have an EPS/PDF photo (graphicx package needed) extra braces are
% needed around the contents of the optional argument to biography to prevent
% the LaTeX parser from getting confused when it sees the complicated
% \includegraphics command within an optional argument. (You could create
% your own custom macro containing the \includegraphics command to make things
% simpler here.)
%\begin{IEEEbiography}[{\includegraphics[width=1in,height=1.25in,clip,keepaspectratio]{mshell}}]{Michael Shell}
% or if you just want to reserve a space for a photo:

\begin{IEEEbiography}[{\includegraphics[width=1in,height=1.25in,clip,keepaspectratio]{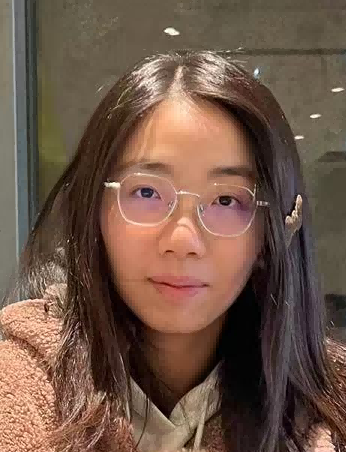}}]{Juntian Zhu}
is a Ph.D. candidate with the School of Artificial Intelligence and Data Science, University of Science and Technology of China.
She received the B.Sc. degree in mathematics from Sichuan University, Chengdu, China, in 2022.
Her research interests are in (1) learning theory in uncertain AI environments; (2) multi-agent learning, decision-making and game-theoretic analysis under limited information; and (3) uncertainty-aware reasoning in large language models.
\end{IEEEbiography}

\begin{IEEEbiography}[{\includegraphics[width=0.95in,height=1.2in,clip,keepaspectratio]{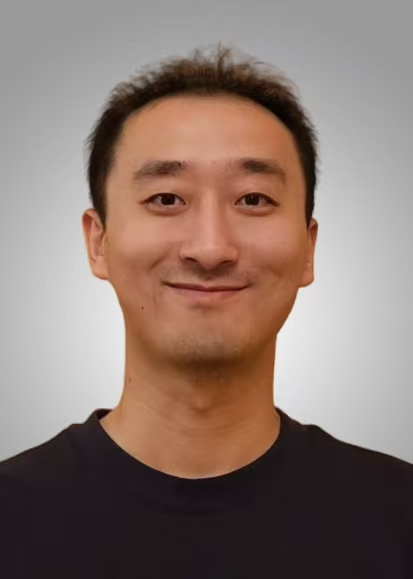}}]{Guanpu Chen} received his B.Sc. degree from University of Science and Technology of China, Hefei, China, in 2017, and Ph.D. degree from Academy of Mathematics and Systems Science, Chinese Academy of Sciences (CAS), Beijing, China, in 2022. He is currently a Professor with the School of Automation, Southeast University, Nanjing, China. He used to be a postdoctoral researcher with the School of Electrical Engineering and Computer Science, KTH Royal Institute of Technology, Stockholm, Sweden. His research interests include multi-agent systems, network games, as well as robustness, resilience, and security in cyber-physical systems. Prof. Chen was the recipient of the President Award of CAS, the Best Paper Award at IEEE International Conference on Control and Automation (ICCA), Guan Zhao-Zhi Award at Chinese Control Conference (CCC), and the Best Student Paper Honorable Mention at IEEE CSS TCSP. 
\end{IEEEbiography}

\begin{IEEEbiography}[{\includegraphics[width=1in,height=1.25in,clip,keepaspectratio]{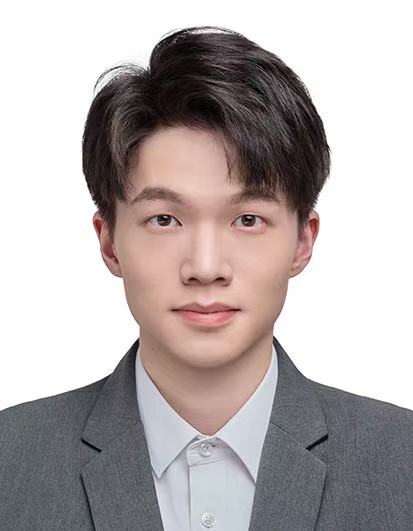}}]{Tongtian Zhu}
is a Ph.D. candidate with the College of Computer Science and Technology, Zhejiang University. His research interests are in (1) understanding modern deep learning systems from an optimization-theoretic perspective; and (2) the theoretical foundations and algorithms for scalable and autonomous decentralized, distributed, and multi-agent learning. He has made pioneering contributions to the study of implicit sharpness bias and data influence in decentralized learning, with related findings published at top-tier machine learning conferences, including ICML, NeurIPS, and ICLR, as oral and spotlight presentations.
\end{IEEEbiography}

\begin{IEEEbiography}[{\includegraphics[width=1in,height=1.25in,clip,keepaspectratio]{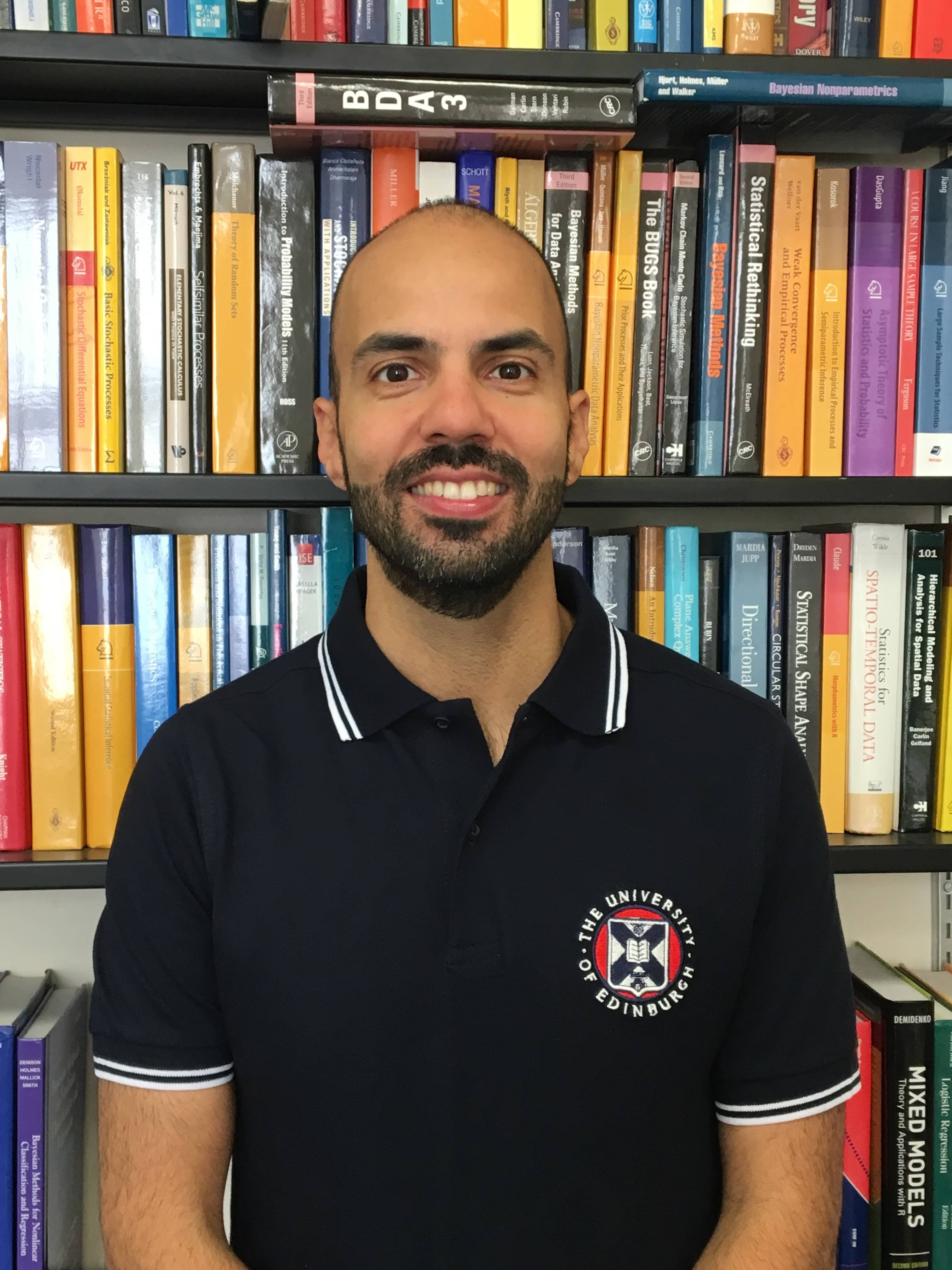}}]{Miguel de Carvalho}
is the Professor and Chair of Statistical Data Science, a Fellow of Generative AI Laboratory, and the Co-Director of the Edinburgh Centre for Financial Innovations at the University of Edinburgh. He is also an Honorary Professor at Universidade de Aveiro. He received a BSc in Mathematics from the NOVA School of Science and Technology in 2004, an MSc in Economics from the NOVA School of Business and Economics in 2009, and a PhD in Mathematics with emphasis on Statistics from the NOVA School of Science and Technology in 2009. His research is in Extreme Value Theory, Interfaces between Statistics and AI, and Bayesian Analysis. He is the Editor-in-Chief of the Springer book series on Courses in Advanced Statistics and Data Science, and an Associate Editor of American Statistician, the Annals of Applied Statistics, Bayesian Analysis, Computational Statistics \& Data Analysis, Extremes, the Journal of the American Statistical Association, and Statistics and Public Policy. 
\end{IEEEbiography}

\begin{IEEEbiography}[{\includegraphics[width=1in,height=1.25in,clip,keepaspectratio]{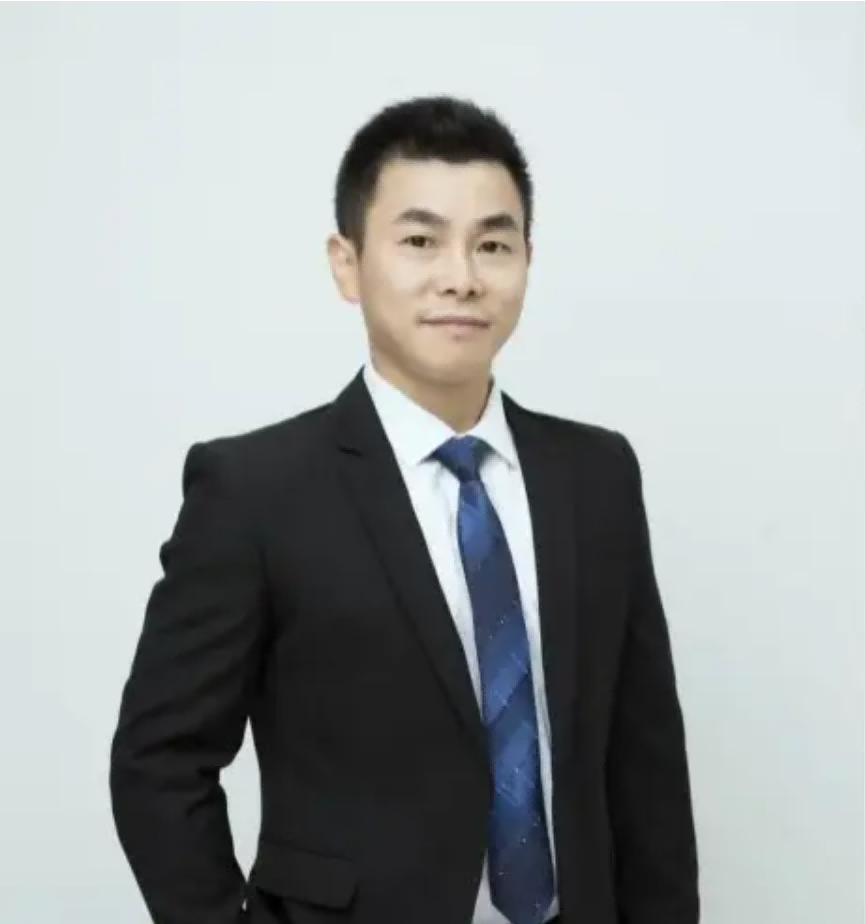}}]{Zhouwang Yang}
is a Professor in the School of Mathematical Sciences at University of Science and Technology of China (USTC). He received his Bachelor degree, Master degree and PhD degree in Mathematics from USTC in 1997, 2000 and 2005, respectively. He worked at Seoul National University as a postdoctoral researcher from December 2006 to November 2007. He was a visiting scholar in School of Industrial and Systems Engineering (ISyE) at Georgia Institute of Technology during August 2010 - August 2011. He has been working on geometric modeling and processing; optimization methods and their applications in sparse recovery. He also has a particular interest in the integration of optimization, machine learning and statistics for solving big data problems.
\end{IEEEbiography}

% if you will not have a photo at all:
%\begin{IEEEbiographynophoto}{John Doe}
%Biography text here.
%\end{IEEEbiographynophoto}

% insert where needed to balance the two columns on the last page with
% biographies
%\newpage

\begin{IEEEbiography}[{\includegraphics[width=1in,height=1.25in,clip,keepaspectratio]{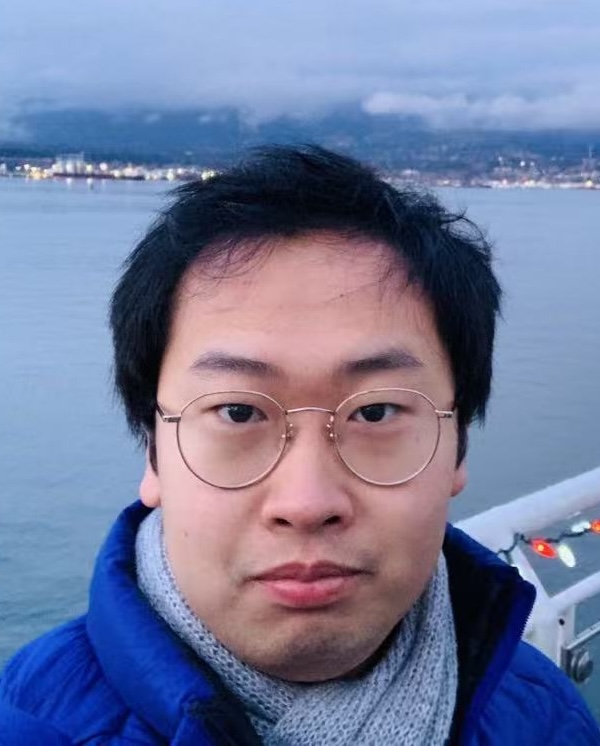}}]{Fengxiang He}
is a Lecturer at the University of Edinburgh, and a Fellow of its Generative AI Laboratory. He received a BSc in statistics from the University of Science and Technology of China, an MPhil and a PhD in computer science from the University of Sydney in 2017, 2019, and 2021, respectively. 
His research interests are in (1) understanding AI from both learning-theoretical and game-theoretical views; (2) new models and algorithms leveraging symmetries in data, task, environment; (3) collaboration and interactions between AI agents in decentralized and multi-agent settings; and (4) applications in economics and finance.
He is an Area Chair of ICML, NeurIPS, ICLR, UAI, and AISTATS, an Editorial Board member of Machine Learning, and an Associate Editor of Pattern Recognition. 
\end{IEEEbiography}

\vfill

% You can push biographies down or up by placing
% a \vfill before or after them. The appropriate
% use of \vfill depends on what kind of text is
% on the last page and whether or not the columns
% are being equalized.

%\vfill

% Can be used to pull up biographies so that the bottom of the last one
% is flush with the other column.
%\enlargethispage{-5in}

% that's all folks
\end{document}

%% file: sections_IEEE/1_Introduction.tex
\section{Introduction}

% \vspace{-3pt}
With the rapid development of artificial intelligence (AI), decision-making problems in multi-agent systems have received increasing attention.
A large body of work has investigated cooperative and competitive interactions in multi-agent systems, which can be broadly classified into three categories: purely cooperative single-team problems, single-team problems involving external adversaries, and multi-team games.
Single-team problems, with or without external adversaries, consider settings in which all agents cooperate to achieve a common objective \citep{1102542,9844798,11226888,11239499}.
However, many complex decision-making environments involve multiple teams whose members cooperate internally while competing strategically with other teams, and such settings are naturally modeled by multi-team games \citep{kim2022games,donmez2024team,xiao2025solving}.

% Purely cooperative single-team problems, in which all agents share a common objective and face no external adversary, have been extensively studied, from early continuous-time and stochastic formulations 
% \citep{1102542}
% % \citep{1102542,1104636,121618} 
% to subsequent developments in sequential team problems \citep{9844798} and average tracking
% \citep{11226888}.
% % \citep{11226888,10491342,9802705}
% % , and team learning and task allocation \cite{9729464,8115275}.
% Studies on single-team problems with external adversaries have mainly focused on security settings 
% %\cite{11239499,9049118,10772632,10731572,9204421} 
% \cite{11239499}.
% % and robust team control under uncertainty 
% \cite{10256148}.
%\cite{10476651,10256148,10130652,10018886,9904296,9428540,8340791}.
% Multi-team games provide a natural framework for characterizing the interplay between intra-team cooperation and inter-team competition in distributed decision-making systems \cite{kim2022games, donmez2024team, xiao2025solving}.
% \vspace{-3pt}
Existing studies develop theoretical and algorithmic tools for multi-team interactions from several perspectives.
Feng et al. \citep{feng2021mixed} study a two-team mean-field game that incorporates both within-team cooperation and inter-team competition.
Other studies consider coalition games in which each fixed coalition is treated as a team and analyze the corresponding equilibrium properties \citep{zeng2019generalized,ma2022fully,zhang2023distributed,deng2024distributed}.
Approaches based on mean-field games and coalition games typically take intra-team cooperation as a premise induced by a team-level objective. 
In contrast, team fictitious play (team-FP) focuses on team-level learning dynamics under complete observation of all agents.
Dönmez et al. \citep{donmez2024team} study a multi-team setting in which self-interested agents learn team-level cooperative behavior and propose team-FP, which is shown to converge to a team-Nash equilibrium (TNE).

% \vspace{-3pt}
However, team-FP \citep{donmez2024team} does not address multi-team learning in which self-interested agents learn to exhibit team-level cooperative behavior under incomplete and possibly unreliable information about opponents.
This setting is relevant to applications such as market competition \citep{shi2020multi} and security problems \citep{alcantara2020repeated}, where self-interested agents may not naturally coordinate toward a team-level objective and decisions are often made under incomplete information about other teams.
In addition, the information available for learning can be unreliable or deliberately manipulated, for example, through Byzantine attacks, which may affect agents' decisions and long-term strategic behavior.
The challenge is to establish how self-interested agents can learn team-level cooperative behavior when incomplete information and Byzantine attacks affect their decisions and, consequently, their long-term behavior.

% A key assumption underlying (independent) Team-FP is that each agent can fully observe the actions of all other agents.
% However, learning under limited information has become increasingly important in practice due to privacy and security considerations.
% In scenarios such as defence problems and market competition, agents often do not have direct access to complete information about opponents.
% As a result, agents can no longer update their strategies based on exact observations of all other agents' actions, and the resulting learning dynamics may differ substantially from those under full observation.
% Consequently, (independent) Team-FP cannot be directly applied in such settings.
% The challenge is that imperfect information about opponents affects agents' decisions throughout the learning process, and its effect may accumulate over time, ultimately influencing agents' long-term behaviours.
% This raises the question of how self-interested agents can learn team behaviours under a limited information structure. 
% Motivated by this, in this paper we study multi-team games with self-interested agents under limited information.
% \vspace{-3pt}
To address this challenge, it is important to introduce supervisors as high-level information intermediaries and construct a supervisor network for distributed information learning.
The design of supervisors is motivated by information-intermediary roles studied in organizational management and networked systems \citep{neal2022defining,boutcher2022roles}.
Each supervisor is associated with a subset of teams, receives the joint actions reported by these teams, and provides estimates of team-related information to the agents in these teams.
Supervisors then exchange information over the supervisor network and update their estimates of team-related information for all teams.
The supervisor network provides a basis for distributed learning under incomplete information and Byzantine attacks in multi-team games.

% This supervisor-based framework facilitates multi-team learning under incomplete information and provides a basis for Byzantine-attack identification.
% \vspace{-3pt}
The above motivates the development of algorithms for team orchestration based on distributed information learning over the supervisor network.
For algorithm design, supervisors estimate and exchange information about all teams, and correspondingly, agents choose their actions using the information provided by supervisors.
In addition, algorithmic resilience against Byzantine attacks should be taken into account since such attacks can corrupt supervisors' information learning and affect agents' strategy optimization.
On this basis, we will explore the convergence of the learning dynamics and investigate whether self-interested agents can learn team-level cooperative behavior, as well as the resilience against Byzantine attacks.
% This model captures scenarios in which information about opponent teams is not directly observed but can be inferred through information exchange.
% Consequently, agents update their strategies based on imperfect information conveyed through supervisor networks rather than full observations of all agents' actions.
% Within this framework, we examine whether self-interested agents can learn team behaviour under imperfect information, and analyse its impact on the long-term behaviour and convergence properties of the learning dynamics.
% Building on the above model, we develop a learning framework for multi-team games under limited information.
% We analyse the resulting learning dynamics, establish their convergence properties, and further examine their resilience to Byzantine attacks.

% \vspace{-3pt}
The primary contributions of this paper are as follows:
% \vspace{-6pt}

\begin{itemize}[leftmargin=*]
    \item We design a supervisor-network mechanism for distributed learning of team-related information. Compared with existing studies on multi-team games \citep{feng2021mixed,donmez2024team,deng2024distributed}, the proposed mechanism does not require direct observation of agents in other teams or a priori intra-team cooperation. This enables agents to optimize their strategies using supervisor-mediated belief estimates, thereby supporting team orchestration.
    \item We develop the distributed team-orchestrating algorithm (DTOA) for team-FP learning over the supervisor network. We establish the convergence of the belief-estimation errors over the uniformly connected supervisor network using a gossip-matrix contraction argument (Theorem \ref{belief_convergence_dynamic}). We further derive an upper bound on the team-Nash gap (TNG) by comparing the actual learning dynamics with ideal and reference dynamics (Theorem \ref{convergence_of_gap}).
    \item We investigate a Byzantine attack setting where some teams misreport their actions to supervisors. We propose the Byzantine-resilient DTOA (BR-DTOA) and show the convergence of the belief-estimation errors for honest teams (Theorem \ref{byzantine_belief_errors_convergence}). An upper bound on the honest TNG is revealed, and the near-TNE convergence for honest teams is resiliently preserved against Byzantine attacks with high-probability identification guarantees (Theorem \ref{honest_TNG}).
\end{itemize}

% \begin{comment}
% \vspace{-3pt}    
The rest of this paper is organized as follows.
Section \ref{sec:related_work} reviews related work.
Section \ref{sec:game_formulation} formulates zero-sum potential team games (ZSPTGs) with a supervisor network and their Byzantine extension.
Section \ref{sec:algorithm_1} presents DTOA and establishes convergence and near-TNE guarantees.
Section \ref{sec:algorithm_2} develops BR-DTOA and analyzes its convergence and resilience.
Section \ref{sec:experiments} presents the experimental results.
Section \ref{sec:conclusion} concludes this paper.
The code is available at \url{https://github.com/zjt-1229/team_game_with_supervisor_network}.

%% file: sections_IEEE/2_Related_work.tex
\section{Related Work}
\label{sec:related_work}
This section provides a literature review.

% \vspace{-3pt}
\textbf{Mean-field games (MFG).} Studies on MFGs primarily characterize optimal responses of individuals or representative agents to the mean field in large-scale weakly coupled systems, together with a consistency condition between the induced aggregate statistics and the hypothesized mean field.
In standard MFGs, participants are typically modeled as anonymously coupled individuals through aggregate distributions 
%\citep{huang2006large,lasry2006jeux,lasry2006jeux2}
\citep{huang2006large}.
Subsequent studies further investigate population-level consistency induced by mean-field responses \citep{moll2026mean}, as well as associated learning and computational methods \citep{wu2024population}.
In parallel, the mean-field team literature focuses on team-optimal decision-making under mean-field coupling \citep{sanjari2020optimal} 
%\citep{arabneydi2015team,sanjari2020optimal},
and extends this line to more general uncertainty settings \citep{feng2025mean}.
More recently, mixed cooperative--competitive mean-field models have incorporated both cooperation and competition into the mean-field framework \citep{feng2021mixed,11052671}. Related studies have also used mixed-coalition formulations with intra-group cooperation and inter-group competition \citep{huang2024linear} and large-scale competitive team learning \citep{jeloka2025learning} to characterize richer collective interactions.

% \vspace{-3pt}
Despite these advances, mean-field-based approaches typically characterize agents' responses to aggregate population statistics rather than team-level equilibrium learning among strategically interacting teams.
In contrast, our work addresses the complementary problem of equilibrium formation in repeated multi-team games under limited information, where agents' strategy optimization depends on estimated information acquired through the learning process.

% \vspace{-3pt}
\textbf{Coalition games.} 
Coalitional-game studies take coalitions as the basic units of analysis, focusing either on coalition formation, coalition values, and the stability of payoff allocations, or, under a fixed coalition partition, on equilibria induced by coalition-level objectives.
The early coalition-game literature mainly developed around hedonic preferences \citep{dreze1980hedonic}
% , endogenous coalition formation \citep{hart1983endogenous}, 
and broader modeling and stability analysis of coalition formation \citep{hajdukova2006coalition}.
Recent studies further examine the endogenous grouping of self-interested agents and the structural stability of the resulting partitions from a group-formation perspective \citep{9683571}, and provide a systematic account of major classes of coalitional games and their distributed-network applications \citep{saad2010hedonic}.
%\citep{saad2009coalitional,saad2010hedonic}.
In parallel, the literature on coalition games with fixed coalition partitions treats each coalition as a composite agent and studies the existence and distributed computation of generalized Nash equilibria (GNEs) and variational generalized Nash equilibria (vGNEs), together with extensions to nonsmooth, constrained, and dynamic settings \citep{zeng2019generalized,ma2022fully,zhang2023distributed,deng2024distributed}.
% \citep{zeng2019generalized,ma2022fully,li2023distributed,deng2024distributed,sun2024prescribed,sun2025prescribed}.

% \vspace{-3pt}
However, studies on multi-coalition games with fixed coalition partitions typically model each coalition as a unified agent with a prescribed coalition-level objective and therefore do not capture how team-level behavior emerges from self-interested agent-level interactions.
In contrast, our work studies equilibrium formation among teams induced by agent-level learning, which is relevant to multi-team systems without centralized team control.

\textbf{Byzantine-resilient mechanisms.} 
Uncertainty is an unavoidable issue in many multi-agent game-theoretic settings, and seeking robust or resilient equilibria has become a common approach for preserving desirable performance in uncertain environments \citep{zhu2015game,chen2021distributed,chen2025inverse}.
Among different sources of uncertainty, Byzantine attacks represent a particularly challenging form, in which malicious agents can distort the information available to others by sending falsified or inconsistent reports.
Research on Byzantine-resilient multi-agent systems mainly focuses on how normal agents maintain coordination, identify malicious information sources, and suppress the influence of such sources when adversarial agents disrupt the system by sending falsified messages \citep{wang2023resilient}.
Early work focuses primarily on resilient consensus and establishes convergence mechanisms in the presence of adversarial or Byzantine nodes \citep{leblanc2013resilient}.
%\citep{leblanc2011consensus,leblanc2013resilient}.
Building on this line, subsequent studies introduce mechanisms such as 
% trusted nodes \citep{abbas2014resilient} and 
distributed detection \citep{yuan2021secure}.
These ideas have also been extended to more general control and learning settings, including Byzantine-resilient output regulation \citep{yan2021resilient} and multi-agent reinforcement learning \citep{li2024byzantine}.

% \vspace{-3pt}
These works mainly develop Byzantine-resilient mechanisms for consensus, control, and cooperative learning, with an emphasis on filtering malicious information and preserving coordination among normal agents.
They provide useful insights for the Byzantine-resilient algorithm developed in this paper for multi-team games with self-interested agents.

%% file: sections_IEEE/4_Game_formulation.tex
\section{Problem Formulation}
\label{sec:game_formulation}
% \vspace{-3pt}
In this section, we first revisit ZSPTGs and then formulate ZSPTGs with a supervisor network. 
We also specify the Byzantine attack setting considered in this paper.

\subsection{Revisiting Zero-sum Potential Team Game}
\label{sec:revisiting}
% \vspace{-3pt}
In multi-team games, teams compete against others, while agents in each team cooperate.
A multi-team game is characterized by the tuple $\mathcal{G}\!=\!(\mathcal{I},\mathcal{T}, \{\mathcal{A}^i\}_{i \in \mathcal{I}}, \{ \!u^i \!\}_{i \in \mathcal{I}})$, where $\mathcal{I}$ and $\mathcal{T}$ denote the index sets of agents and teams, respectively.
Let $\mathcal{I}^m$ denote the index set of agents in team $m\! \in\! \mathcal{T}$.
These sets form a partition of $\mathcal{I}$, i.e., $\mathcal{I}^m \!\cap \mathcal{I}^{m'}\!\!\!=\!\! \varnothing$ for $m \!\!\neq\!\! m'$ and $\bigcup_{m \in\! \mathcal{T}} \!\mathcal{I}^m \!\!=\!\! \mathcal{I}$.
Here, $\mathcal{A}^i$ denotes the finite action set of agent $i$, $\mathcal{A} \!\triangleq\! \prod_{i \in \mathcal{I}} \mathcal{A}^i$ denotes the finite joint action set of all agents, and $u^i\!\!:\!\mathcal{A} \!\rightarrow\! \mathbb{R}$ denotes the utility function of agent $i$.
Let $\underline{\mathcal{A}}^m \!\!\triangleq \!\prod_{i \in \mathcal{I}^m}\mathcal{A}^i$ denote the joint action set of team $m \in \mathcal{T}$.
The mixed-strategy spaces of agent $i$, team $m$, and all agents are $\Delta(\mathcal{A}^i)$, $\Delta(\underline{\mathcal{A}}^m)$, and $\Delta(\mathcal{A})$, respectively.
% The function $u^i:A \rightarrow \mathbb{R}$ denotes the utility of agent $i$. 
% We assume that the utility of each agent is finite.
% Thus there exists some $u_{max}>0$ such that $0 \le u^i(a) \le u_{max},\forall i \in \mathcal{I},\forall a \in A$.
We next recall the definition of ZSPTGs, TNG and TNE \citep{donmez2024team}.

\begin{definition}[Zero-sum Potential Team Game]
A multi-team game is a ZSPTG if, for every team $m \in \mathcal{T}$, there exists a potential function $\phi^m: \mathcal{A} \rightarrow \mathbb{R}$ such that
% \vspace{-3em}
\begin{align}
\label{potential_function_difference}
\phi^m(\hat{a}^i\!,a^{-i}\!,\underline{a}^{-m})\!-\!\phi^m(a)\!=\!u^i(\hat{a}^i\!,a^{-i}\!,\underline{a}^{-m})\!-\!u^i(a),
\end{align}
% \vspace{-2em}
for all $(\hat{a}^i,a) \in \mathcal{A}^i \times \mathcal{A}$ and all $i \in \mathcal{I}^m$, where $a^{-i} \triangleq \{a^j\}_{j \in \mathcal{I}^m\setminus\{i\}}$ and $\underline{a}^{-m} \triangleq \{\underline{a}^l\}_{l \in \mathcal{T}\setminus\{m\}}$.
Moreover, the potential functions satisfy the zero-sum condition
% \vspace{-3em}
\begin{align}
    \sum\nolimits_{m \in \mathcal{T}} \phi^m(a)=0,\quad \forall a \in \mathcal{A}.
\end{align}
% \vspace{-2em}
The cross-team interactions are network-separable, so the potential and utility functions can be decomposed as
% \vspace{-3em}
\begin{align}
    \phi^m=\sum\nolimits_{l \neq m} \phi^{ml}, \text{ and } u^i=\sum\nolimits_{l \neq m} u^{il},\ \forall i \in \mathcal{I}^m,
\end{align}
% \vspace{-2em}
for some $\phi^{ml}:\underline{\mathcal{A}}^m \times \underline{\mathcal{A}}^l \rightarrow \mathbb{R}$ and $u^{il}:\underline{\mathcal{A}}^m \times \underline{\mathcal{A}}^l \rightarrow \mathbb{R}$.
\end{definition}

\begin{definition}[Team-Nash Gap]
\label{TNG_definition}
Given a team strategy profile $\pi\!= \!\{ \pi^m\! \in \!\Delta(\underline{\mathcal{A}}^m) \}_{m \in \mathcal{T}}$, the TNG for each team $m \in \mathcal{T}$ is defined as
% \vspace{-3em}
\begin{align*}
TNG^m(\pi) \triangleq  \mathop{\max}\nolimits_{\pi' \in \Delta (\underline{{\mathcal{A}}}^m)} \{\phi^m(\pi',\pi^{-m})\}-\phi^m(\pi).
\end{align*}
% \vspace{-2em}
The TNG is then defined as $TNG(\pi) \!\!\triangleq\!\! \sum_{m \in \mathcal{T}} \!TNG^m(\pi)$, where $\pi^{-m} = \{ \pi^l \}_{l \in \mathcal{T}\setminus{m}}$.
Correspondingly, $\pi$ is called a TNE if $TNG(\pi)=0$.
Furthermore, for any $\varepsilon>0$, $\pi$ is called an $\varepsilon$-TNE if $TNG(\pi) < \varepsilon$.
\end{definition}

% \vspace{-3pt}
Intuitively, a small TNG means that the induced team strategy profile is nearly stable against unilateral deviations of any single team from its current strategy.
When the team strategy profile has a small TNG, the resulting behavior reflects an approximate pattern of within-team cooperation and inter-team competition.
Thus, we use the TNG to evaluate the extent to which self-interested agents in each team behave cooperatively while competing with other teams.

% \vspace{-3pt}
To study how self-interested agents maximize their own utilities, we next revisit the smoothed best response used in the subsequent learning algorithms.

% \textcolor{blue}{
% \begin{definition}[Social Welfare]
%     Given a policy $\pi$, we define the social welfare of $\pi$ as
%     \begin{align*}
%         W(\pi)=\sum_{i \in \mathcal{I}} u^i(\pi). 
%     \end{align*}
%     Furthermore, $\pi^*$ is a optimal policy if $W(\pi^*)=\mathop{max}_{\pi \in \Pi_{m \in \mathcal{T}}\Delta(\underline{A}^m)} W(\pi) \triangleq W^*$.
% \end{definition}
% }

\begin{definition}[Smoothed Best Response]
\label{smoothed_best_response}
Given a finite set $X$, for a distribution $D \!\!\in\!\! \Delta(\!X\!)$ and a function $f\!:\!X \!\!\rightarrow\!\! \mathbb{R}$, let $f(D)\!\!=\!\!\sum_{x\in \!X}\!D(x)f(x)$.
For a temperature parameter $\tau>0$, the smoothed best response to $f$ is defined as
% \vspace{-3em}
\begin{align*}
% \label{smmothed_best_response}
    br_{\tau}(f)(x)=\frac{e^{\frac{f(x)}{\tau}}}{\sum_{\tilde{x} \in X} e^{\frac{f(\tilde{x})}{\tau}}},\ \forall x \in X.
\end{align*}
% \vspace{-2em}
Equivalently, $br_{\tau}(f)\!\!=\!\!\mathop{\arg\max}_{D \in \Delta(\!X\!)}\!\! \left\{ f(D)\!+\!\tau \mathcal{H}(D) \right \}$ with entropy regularization $\mathcal{H}(D) \!\!\triangleq\!\! \sum_{x\in\! X}\! -\!D(\!x\!)\!\log\! D(\!x\!) $.
\end{definition}

% \vspace{-3pt}
The smoothed best response provides a regularized decision rule that assigns positive probability to all feasible actions and balances objective optimization with exploration.
The temperature parameter $\tau$ controls the degree of smoothing: smaller values make the response closer to a pure best response, whereas larger values induce more exploratory behavior.
The maximizer is unique because $f$ is linear and, for $\tau>0$, the entropy regularization term makes the objective strictly concave on $\Delta(X)$.

% \vspace{-3pt}
Note that the utilities of all agents need not sum to zero.
Although the definition of a ZSPTG requires the team potentials to sum to zero, the same analysis applies when their sum is a constant independent of the action profile $a \!\in \!\mathcal{A}$ and agents choose actions according to Definition \ref{smoothed_best_response}.
Indeed, the potentials can be normalized by subtracting this constant from any one team potential, which preserves all potential differences and therefore leaves the smoothed best-response structure unchanged.

\subsection{ZSPTG with a supervisor network}
% \vspace{-3pt} 
In this paper, we focus on ZSPTGs with a supervisor network, where supervisors serve as distributed information intermediaries.
A multi-team game with a supervisor network is characterized by the tuple $\mathcal{G}=(\mathcal{I},\mathcal{T}, \mathcal{S},ST,\{SN_k\}_{k \ge 0},\{\mathcal{A}^i\}_{i \in \mathcal{I}}, \{ u^i \}_{i \in \mathcal{I}})$, where $\mathcal{I}$, $\mathcal{T}$, and $\mathcal{S}$ denote the index sets of agents, teams, and supervisors, respectively.
We follow the notation in Section \ref{sec:revisiting}: $\mathcal I^m$, $\mathcal{A}^i$, $\underline{\mathcal{A}}^m$, $\mathcal{A}$, and $u^i:\mathcal{A}\to\mathbb R$ denote the index set of agents in team $m$, the action set of agent $i$, the joint action set of team $m$, the joint action set of all agents, and the utility function of agent $i$, respectively.
We assume that each agent observes only the actions of agents in the same team.
The matrix $ST\in \{ 0,1 \}^{\vert \mathcal{T} \vert \times \vert \mathcal{S} \vert}$ denotes the supervision relationships between teams and supervisors, and the matrix $SN_k\in \{ 0,1 \}^{\vert \mathcal{S} \vert \times \vert \mathcal{S} \vert}$ denotes the communication structure among supervisors at round $k$.

% \vspace{-3pt}
Each supervisor supervises a subset of teams: it receives action reports from these teams and provides their agents with information about other teams.
Specifically, $ST(m,s)\!\!=\!\!1$ indicates that team $m$ is supervised by supervisor $s$, and $ST(m,s)\!\!=\!\!0$ otherwise.
Similarly, $SN_k(s,s')=1$ indicates that supervisors $s$ and $s'$ exchange information at round $k$, and $SN_k(s,s')=0$ otherwise.
Each agent aims to maximize its utility based on the information provided by the supervisors.
For each supervisor $s \!\in\! \mathcal{S}$, let $\mathcal{T}^s \!\subseteq\! \mathcal{T}$ denote the set of teams supervised by $s$.
The supervisor communication structure at round $k$ can be equivalently represented by an undirected graph $G_k\!\!=\!\!(\mathcal{S},\!E_k)$, where $\mathcal{S}$ is the node set and $E_k\!\!=\!\!\big\{\! \{s,s'\}\!:\! SN_k(s,s')\!\!=\!\!1 \!\big\}$ is the edge set, for all $k\!\ge\!0$.

% We make the following assumptions in this paper.
% \begin{assumption}
% \label{team_supervisor_assumption}
%     Every team is supervised by at least one supervisor, i.e., $\cup_{s \in \mathcal{S}} \mathcal{T}^s=\mathcal{T}$.
% \end{assumption}

% % \begin{assumption}
% % \label{supervisor_network_assumption}
% % $G$ is a undirected connected graph.
% % \end{assumption}

% \begin{assumption}
% \label{dynamic_assumption}
%     The supervisor network $SN_k=G_k \triangleq (\mathcal{S},E_k)$ changes and is uniformly connected over time \citep{moreau2005stability}. 
%     Specifically, let $B$ be the smallest positive integer such that, for any $n \in \mathbb{N}_+$, the graph $G_{(n)} \triangleq(\mathcal{S},\cup_{k=nB}^{(n+1)B-1}E_k)$ is connected.
% \end{assumption}

% Assumption \ref{team_supervisor_assumption} indicates that each team's information is exposed to the supervisor network via at least one supervisor.
% Assumption \ref{dynamic_assumption} ensures that any two supervisors can communicate with each other indirectly within limited time steps (Lemma \ref{dynamic_path_lemma}).

% In Team-FP-type learning dynamics, agents update their actions based on beliefs about other teams' strategies \citep{donmez2024team}; hence, inaccurate belief information may distort action updates and affect the resulting team behaviours.
% In ZSPTGs with supervisor networks, such inaccuracy may arise from supervisors' belief-estimation errors.
% Therefore, it is crucial to establish whether the learning dynamics still converge to a near TNE when agents rely on inaccurate belief information.
% \vspace{-5pt}
In the team-FP learning dynamics, agents update their actions based on common beliefs about other teams' strategies \citep{donmez2024team}, which are formed through direct observation of the actions of other agents.  
In ZSPTGs with a supervisor network, however, these beliefs are provided by supervisors that receive action reports from supervised teams and exchange information over the supervisor network.
Consequently, belief-estimation errors can arise because each supervisor receives reports from only a subset of teams.
It is therefore crucial to characterize how such errors affect the convergence of the learning dynamics and the induced TNG.
This motivates the following problem.

\begin{problem}
\label{problem_1}
    How to design a distributed team-orchestrating algorithm for ZSPTGs with a supervisor network while providing theoretical guarantees? 
\end{problem}
% \textit{Problem 1:}
% How can (independent) Team-FP be extended to ZSPTGs with supervisor networks while preserving convergence to a near TNE under belief-estimation errors?

% \vspace{-3pt}
This problem concerns whether the common-belief requirement in team-FP can be relaxed through distributed belief learning over the supervisor network. 
The key challenge is to control the long-term effect of supervisors' belief-estimation errors on the induced team behavior.
We first characterize the belief-estimation process over the supervisor network and then quantify its effect on the induced learning dynamics and the TNG.
We impose the following assumptions.
% \vspace{3pt} 
\begin{assumption}
\label{team_supervisor_assumption}
    Every team is supervised by at least one supervisor, i.e., $\bigcup_{s \in \mathcal{S}} \mathcal{T}^s=\mathcal{T}$.
\end{assumption}

\begin{assumption}
\label{dynamic_assumption}
    The supervisor communication graph $G_k \triangleq (\mathcal{S},E_k)$, which represents the supervisor network, is time-varying and uniformly connected over time \citep{moreau2005stability}. 
    Specifically, let $B$ be the smallest positive integer such that, for any $n \in \mathbb{N}^+$, the union graph $G_{(n)} \triangleq(\mathcal{S},\bigcup_{k=nB}^{(n+1)B-1}E_k)$ is connected.
\end{assumption}

% \vspace{-3pt}
Assumption \ref{team_supervisor_assumption} ensures that every team has at least one reporting link to the supervisor network and hence no team is completely isolated from it.
Assumption \ref{dynamic_assumption} ensures that information can propagate among all supervisors within each finite communication window.

\subsection{Byzantine Attack}
% \vspace{-3pt}
We consider a Byzantine attack setting in which some teams can misreport their actions to the supervisors at each round.
Given a ZSPTG with a supervisor network $\mathcal{G}=(\mathcal{I},\mathcal{T},\mathcal{S},ST,\{SN_k\}_{k\ge 0},\{ \mathcal{A}^i \}_{i \in \mathcal{I}},\{ u^i \}_{i \in \mathcal{I}})$, honest teams and Byzantine teams are defined as follows.

\begin{definition}[Honest Team and Byzantine Team]
    A team $m$ is called honest if it reports its joint action truthfully to the supervisors that supervise it, i.e., $\underline{a}_{k,r}^m=\underline{a}_k^m$ for all $k \ge 0$, where $\underline{a}_{k,r}^m$ denotes the joint action reported by team $m$ at round $k$.
    In contrast, a team $m$ is called Byzantine if it can misreport its joint action to the supervisors that supervise it, i.e., $\underline{a}_{k,r}^m \neq \underline{a}_k^m$ for some $k \ge 0$.
\end{definition}
% \vspace{-3pt}
Let $\mathcal{H} \subseteq \mathcal{T}$ and $\mathcal{B} \subseteq \mathcal{T}$ denote the sets of honest and Byzantine teams, respectively.
% We next define the honest TNG.

% Based on the definitions of honest and Byzantine teams, we define the honest team-Nash gap, referred to as honest TNG.
\begin{definition}[Honest Team-Nash Gap]
    Given a team strategy profile $\pi= \{ \pi^m \in \Delta(\underline{\mathcal{A}}^m) \}_{m \in \mathcal{T}}$, the honest TNG is defined as 
    $TNG_{\mathcal{H}}(\pi) \triangleq \sum_{m \in \mathcal{H}} TNG^m(\pi)$.
    Correspondingly, the team strategy profile $\pi$ is called an $\varepsilon$-honest TNE if $TNG_{\mathcal{H}}(\pi)<\varepsilon$.
\end{definition}
% \vspace{-3pt}
In the definition of the honest TNG, the strategies of Byzantine teams are treated as fixed exogenous factors, and the gap is evaluated only over honest teams.
Since our objective is to optimize the performance of honest teams under Byzantine attacks, including the gaps for Byzantine teams in the honest TNG does not reflect the intended performance measure.

% In this Byzantine attacking setting, the learning dynamics is exposed to Byzantine teams that may misreport their actions to supervisors. 
% These misreports will distort the learning process and influence the long-term behaviours of agents within honest teams. 
% This raises the following resilient learning problem.
% \vspace{-3pt}
In this Byzantine attack setting, the learning dynamics are affected by Byzantine teams, whose misreports can distort supervisors' belief learning processes and influence the long-term behavior of agents in honest teams. 
This leads to the following Byzantine-resilient learning problem.
\begin{problem}
\label{problem_2}
    How to design a distributed resilient algorithm for ZSPTGs with a supervisor network against Byzantine attacks?
\end{problem}

% \vspace{-3pt}
The key challenge is to identify Byzantine teams using limited information while preventing their misreports from disrupting the long-term learning behavior of honest teams.
We address this challenge by developing a supervisor-based mechanism and quantifying the effect of Byzantine teams on the learning dynamics of honest teams.
To this end, each supervisor can check whether the action reports from its supervised teams are consistent with their true actions. 
The checking outcome is not necessarily accurate: an honest report can be incorrectly regarded as a misreport, and a misreport can fail to be detected.
% To this end, supervisors can check the action reports from the teams under their supervision, but the checking results may be imperfect.
% This checking process is specified by the following assumption.
% \begin{assumption}
% \label{assumption:imperfect_checking}
% Each supervisor is able to check whether the action reports from its supervised teams are consistent with their true actions.  
% The checking result is not necessarily accurate: an honest report may be incorrectly regarded as a misreport, and a misreport may fail to be detected.
% \end{assumption}

% The introduction of supervisor networks gives rise to belief-estimation errors, so teams can no longer have access to the accurate common beliefs assumed in \citet{donmez2024team}.
% Such estimation errors affect agents' strategy updates and propagate over rounds, thereby influencing the evolution of team behaviours.
% We therefore adapt (Independent) Team-FP \citep{donmez2024team} to multi-team games with supervisor networks and analyse the convergence of belief learning and the team-Nash gap under belief-estimation errors. 
% Furthermore, we incorporate Byzantine teams into the proposed framework and design a resilient mechanism to mitigate their impact on the learning process.

%% file: sections_IEEE/5_Convergence.tex
\section{Distributed Team-Orchestrating Algorithm}
\label{sec:algorithm_1}
% \vspace{-3pt}
This section addresses Problem \ref{problem_1} by developing the DTOA.
We then analyze the convergence of supervisors' belief-estimation errors and derive an upper bound on the TNG.

%We first present the TOA, which combines team fictitious play with decentralised belief learning over supervisor networks.
%We then establish the convergence of supervisors' belief estimates.
%Finally, we show that the learning process under TOA converges to a near TNE in terms of the TNG.

\subsection{Algorithm Design}

\begin{algorithm}[t]
    \caption{\!Distributed \!Team-Orchestrating\! Algorithm (DTOA)}
    \label{alg:team_fp_with_decentralized_belief_learning}
    \begin{algorithmic}[1] 
        \STATE  \textbf{Initialize:} $\{ \pi_{s,0}^m \}_{m \in \mathcal{T}}$ and $\{a_{-1}^i \}_{i \in \mathcal{I}}$ arbitrarily
        % , $\{ \lambda^m_0=\lambda_{min} \}_{m \in \mathcal{T}}$

        \WHILE{round $k = 0, 1, \dots$}
            \FOR{$m \in \mathcal{T}$}
                \STATE select agent $i \in \mathcal{I}^m$ in team $m$ randomly 
                \STATE select $s \in \left\{s : i \in \mathcal{T}^s \right\}$ randomly to provide belief estimates about other teams
                \STATE agent $i$ updates its action based on $a_{k-1}^{-i}$ and $\pi_k^{-m,s}$: $a_k^i \sim br_{\tau} \left( u^i(\cdot,a_{k-1}^{-i},\pi_k^{m,s}) \right)$
                \FOR {agent $j \in \mathcal{I}^m \backslash \{ i \}$}
                    \STATE repeat the last action: $a_k^j=a_{k-1}^j$
                    % \STATE $\lambda_{k+1}^m=\lambda_{k}^m$
                \ENDFOR
                % \STATE update $\lambda^m_{k+1}=\mathop{max}\left\{\lambda^m_k+\frac{1}{t_{\lambda,k}} c^m\left (\underline{a}_{k}^m \right),\lambda_{min} \right\}$
            \ENDFOR
            \FOR{$s \in \mathcal{S}$ and $m \in \mathcal{T}$}

                \IF{$m \in \mathcal{T}^s$}
                    \STATE update: $\pi_{k+1}^{m,s}=\pi_{k}^{m,s}+\alpha_k \left( \underline{a}_k^m -\pi_{k}^{m,s} \right)$
                \ELSE
                    \STATE update: $\pi_{k\!+\!1}^{m,s}\!\!=\!\!
                    \begin{cases}
                        \!\!\frac{1}{\vert \mathcal{N}^s_k \vert}\!\!\sum_{s'\! \in \mathcal{N}^s_k} \!\pi_{k}^{m,s'}\!\!\!, \!\! &\text{for }\vert \mathcal{N}^s_k \vert\!\!>\!\!0\\
                        \!\pi_{k}^{m,s}, \!\!  &\text{for }\vert \mathcal{N}^s_k \vert\!\!=\!\!0
                    \end{cases}$
                    
                \ENDIF
            \ENDFOR

        \ENDWHILE
    \end{algorithmic}
\end{algorithm}

% \vspace{-3pt}
We first specify the main components of DTOA for ZSPTGs with a supervisor network.
At round $k$, let $a_k^i$ denote the action of agent $i$, and let $\underline{a}_k^m=\left(a_k^i\right)_{i \in \mathcal{I}^m}$ denote the joint action of team $m$.
To compensate for the lack of complete opponent-action information, DTOA uses supervisors to provide belief estimates.
Let $\pi_k^{m,s} \in \Delta\left(\underline{\mathcal{A}}^m\right)$ denote the belief of supervisor $s$ regarding the strategy of team $m$.
To induce team-level strategic behavior under incomplete opponent-action information, agent $i$ in team $m$ updates its action according to the smoothed best response in Definition \ref{smoothed_best_response}, using the previous actions of the other agents in the same team and the beliefs provided by a supervisor $s$ that supervises team $m$:
% \vspace{-3em}
\begin{align}
\label{agent_action_update}
a_k^i \sim br_{\tau} \left( u^i(\cdot,a_{k-1}^{-i},\pi_k^{-m,s}) \right),
\end{align}
% \vspace{-2em}
where $a_k^i \sim p$ means that $a_k^i$ is sampled from $p \in \Delta\left(\mathcal{A}^i\right)$.

% \vspace{-3pt}
The rule above specifies the action update of a selected agent.
In DTOA, one agent per team is randomly selected at each round to update its action, while the other agents in the same team keep their actions unchanged.
When such coordination is unavailable, we consider an independent variant, called independent DTOA (iDTOA), in which each agent updates its action independently with probability $\delta \in (0,1)$.

% \vspace{-3pt}
To maintain belief estimates of teams' strategies, we design two belief-update rules: one for teams directly supervised by a given supervisor and the other for teams not supervised by that supervisor.
Given a supervisor $s \in \mathcal{S}$, for a team $m \in \mathcal{T}^s$, supervisor $s$ directly updates its belief using the joint action reported by team $m$:
% \vspace{-3em}
\begin{align*}      
    \pi_{k+1}^{m,s}=\pi_k^{m,s}+\alpha_k \left( \underline{a}_k^{m} -\pi_k^{m,s} \right),
\end{align*}
% \vspace{-2em}
where $\left\{\! \alpha_k \!\right\}_{\!k \ge 0}$ is the step-size sequence used by all supervisors, and $\underline{a}_k^m$ denotes the one-hot representation in $\Delta\left(\underline{\mathcal{A}}^m\right)$ for notational simplicity.
Let $\mathcal{N}^s_k\!\!= \!\!\{\! s' \!\!\in\!\! \mathcal{S}\!\!:\! SN_k(\!s,\!s'\!)\!\!=\!\!1 \!\}$ denote the neighbor set of supervisor $s$ at round $k$.
For a team $m \!\!\notin\!\! \mathcal{T}^s$, if $\vert \mathcal{N}^s_k \vert \!\!>\!0$, supervisor $s$ updates its beliefs using information from its neighbors:
% \vspace{-3em}
\begin{align*}
    \pi_{k+1}^{m,s}=\frac{1}{\vert \mathcal{N}^s_k \vert}\sum_{s' \in \mathcal{N}^s_k} \pi_{k}^{m,s'}.
\end{align*}
% \vspace{-2em}
If $\vert \mathcal{N}^s_k \vert =0$, supervisor $s$ keeps its beliefs unchanged, i.e., $\pi_{k+1}^{m,s}=\pi_k^{m,s}$ for all $m \notin \mathcal{T}^s$.
% where $\mathcal{N}^s_k= \{ s' \in \mathcal{S}: SN_k(s,s')=1 \}$ denotes the set of neighbours of supervisor $s$ at round $k$.
The implementation of DTOA is summarized in Algorithm \ref{alg:team_fp_with_decentralized_belief_learning}.

\subsection{Convergence Analysis}
% \vspace{-3pt}
We now turn to the convergence analysis of DTOA. 
Following the step-size conditions considered in \citet{donmez2024team}, we impose the following assumption.
\begin{assumption}
\label{step_size_assumption}
    The step-size sequence $\{\alpha_k\}_{k \ge 0}$ satisfies the following conditions: 

    % \vspace{-1em}
    \begin{itemize}
        \item $\alpha_k\in [0,1]$, and $\alpha_k \rightarrow 0$ as $k \rightarrow \infty$;
        \item $\sum_{k=0}^{\infty} \alpha_k=\infty$ and $\sum_{k=0}^{\infty} \alpha_k^2 < \infty$;
        \item $\lim_{k \to \infty} {\alpha_k}/{\alpha_{k+1}}=1$ and $\alpha_k-\alpha_{k+1} \ge \alpha_k \alpha_{k+1}$.
    \end{itemize}
\end{assumption}

% \vspace{-1em}
The first two conditions in Assumption \ref{step_size_assumption} ensure that the belief updates continue to incorporate new action information while the effect of sampling fluctuations is asymptotically averaged out.
The last condition further ensures that action information from adjacent periods has comparable influence on the beliefs.
A standard choice satisfying these conditions is $\alpha_k\!\!=\!\!{1}/{(k\!+\!1)}$, which corresponds to empirical averaging over past actions.

% \vspace{-3pt}
To quantify supervisors' belief-estimation errors, we define the true belief updates in the full-supervision case where each supervisor supervises all teams as
% \vspace{-3em}
\begin{align}
\label{belief_update}
    \pi_{k+1}^m=\pi_k^m+\alpha_k \left( \underline{a}_k^m -\pi_k^m \right),\ \forall m \in \mathcal{T}.
\end{align}
% \vspace{-2em}
Under Assumption \ref{step_size_assumption}, the effect of the initial beliefs vanishes asymptotically. 
Hence, the initialization does not affect the asymptotic convergence results. 
For ease of analysis, we set $\pi_0^{m,s}=\pi_0^m$ for all $m\in\mathcal{T}$ and $s\in\mathcal{S}$.

\begin{remark}
    In the full-supervision case, for any supervisor $s\!\in\!\mathcal{S}$ and any team $m\!\in\!\mathcal{T}$, we have $\pi_k^{m,s}\!=\!\pi_k^{m}$ for all $k\ge 0$.
    By contrast, under a general supervision structure, a supervisor $s\in\mathcal{S}$ may not directly supervise a team $m\!\in\!\mathcal{T}$, i.e., $m\!\notin\!\mathcal{T}^s$, in which case the equality $\pi_k^{m,s}\!=\!\pi_k^{m}$ may no longer hold.
    % This motivates the analysis of belief-estimation errors under general supervisor networks.
\end{remark}

% \vspace{-3pt}
For supervisor $s\!\in\!\mathcal{S}$ and team $m\!\in\!\mathcal{T}$, we define the belief-estimation error with respect to the true belief as $\Vert \pi_k^{m,s}\!-\!\pi_k^m \!\Vert_{\infty}$.
The following theorem establishes the asymptotic convergence of all supervisors' belief-estimation errors under DTOA.
The proof is provided in Appendix \ref{proof_of_believe_convergence}.

% \begin{enumerate}
%     \item Bound on CTNG? (Similar framework in \citep{donmez2024team})
%     \item Bound on CV?
% \end{enumerate}

% \begin{theorem}
% \label{belief_convergence}
% Given a ZSPTG with supervisor network $\mathcal{G}=(\mathcal{I},\mathcal{T},\mathcal{S},ST,SN,\{ A^i \}_{i \in \mathcal{I}},\{ u^i \}_{i \in \mathcal{I}})$, let supervisors follow Algorithm \ref{alg:decentralized_belief_learning}.
% If Assumption \ref{team_supervisor_assumption}, \ref{supervisor_network_assumption} and \ref{step_size_assumption} hold, the supervisors' beliefs about all team will converge to the true belief, i.e., 
% \begin{align}
%     \lim_{k \to \infty} \Vert \pi_k^{m,s}-\pi_k^m \Vert_1=0,\ \forall s \in \mathcal{S},\ \forall m \in \mathcal{T}.
% \end{align}
% If we set $\alpha_k=\frac{1}{k}$, then the error of supervisors' beliefs satisfy $\Vert \pi_k^{m,s}-\pi_k^m \Vert_1 \le O(\frac{1}{k}),\ \forall s \in \mathcal{S},\ \forall m \in \mathcal{T}$.
% \end{theorem}

% \subsection{Convergence}
\begin{theorem}
\label{belief_convergence_dynamic}
For a ZSPTG with a supervisor network 
% \vspace{-3em}
\begin{align*}
    \mathcal{G}=(\mathcal{I},\mathcal{T},\mathcal{S},ST,\{SN_k\}_{k\ge 0},\{ \mathcal{A}^i \}_{i \in \mathcal{I}},\{ u^i \}_{i \in \mathcal{I}}),
\end{align*}
% \vspace{-2em}
under Assumptions \ref{team_supervisor_assumption} and \ref{dynamic_assumption}, the supervisors' belief-estimation errors in DTOA converge to zero; specifically,
% \vspace{-3em}
\begin{align}
\label{theorem_1_statement}
    \Vert \pi_k^{m,s}\!-\!\pi_k^m \Vert_{\infty}\!\le\! O(\rho^{[\frac{k}{2}]})\!+\!O(\alpha_{[\frac{k}{2}]}),\forall s \!\in\! \mathcal{S}, \forall m \!\in \!\mathcal{T},
\end{align}
% \vspace{-2em}
where $\rho\!=\!\left(\!1\!-\!\left(\frac{1}{\vert \mathcal{S} \vert}\right)^{(\vert \mathcal{S} \vert+1)B}\right)^{\frac{1}{(\vert \mathcal{S} \vert+1)B}}\!<\!1$.
If Assumption \ref{step_size_assumption} also holds, (\ref{theorem_1_statement}) simplifies to $\Vert \pi_k^{m,s}\!-\!\pi_k^m \Vert_{\infty}\!\le\! O(\alpha_{[\frac{k}{2}]})$.
\end{theorem}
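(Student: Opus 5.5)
Fix a team $m$ and stack the supervisors' beliefs about $m$ into $X_k\in\mathbb{R}^{|\mathcal{S}|\times|\underline{\mathcal{A}}^m|}$. The plan is to compare $X_k$ with $\mathbf{1}(\pi_k^m)^\top$ using a product-of-matrices contraction.

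\textbf{Step 1: linear error recursion.} Let $\mathcal{S}^m=\{s: m\in\mathcal{T}^s\}$, which is nonempty by Assumption \ref{team_supervisor_assumption}. For $s\in\mathcal{S}^m$, subtracting (\ref{belief_update}) from the supervisor update and using $\pi_0^{m,s}=\pi_0^m$ shows by induction that $\pi_k^{m,s}=\pi_k^m$ for all $k$. So rows in $\mathcal{S}^m$ are exact. For $s\notin\mathcal{S}^m$, the row is either a neighbor average or unchanged. Write the update as $X_{k+1}=W_kX_k+D_k$, where:
\begin{itemize}
\item rows of $W_k$ indexed by $\mathcal{S}^m$ are unit vectors $e_s^\top$;
\item the other rows are the uniform averaging weights $1/|\mathcal{N}^s_k|$, or $e_s^\top$ if $\mathcal{N}^s_k=\varnothing$;
\item $D_k$ has rows $\alpha_k(\underline a_k^m-\pi_k^m)^\top$ on $\mathcal{S}^m$ and zero elsewhere.
\end{itemize}
Set $E_k=X_k-\mathbf{1}(\pi_k^m)^\top$. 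Since $W_k$ is row-stochastic,
\[
E_{k+1}=W_kE_k+\mathbf{1}(\pi_k^m-\pi_{k+1}^m)^\top .
\]
The rows of $E_k$ on $\mathcal{S}^m$ are identically zero, and the forcing term has $\infty$-norm at most $2\alpha_k$.

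\textbf{Step 2: contraction over windows.} Rows with zero error can be regarded as absorbing. It then suffices to show that the restriction of any product $W_{k+L-1}\cdots W_k$ to the rows and columns of $\mathcal{S}\setminus\mathcal{S}^m$ has maximum row sum at most $1-(1/|\mathcal{S}|)^{L}$, with $L=(|\mathcal{S}|+1)B$. Nonzero entries of each $W_k$ are at least $1/|\mathcal{S}|$. By Assumption \ref{dynamic_assumption}, within each block of $B$ rounds the union graph is connected. Hence within at most $|\mathcal{S}|$ consecutive blocks (the extra block absorbs misalignment of $k$ with the block grid) every unsupervised supervisor has a time-respecting path to some node in $\mathcal{S}^m$. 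This path-existence step is the main obstacle.

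The mass placed on $\mathcal{S}^m$ along such a path is at least $(1/|\mathcal{S}|)^{L}$. This uses that a supervisor with nonempty neighbor set puts weight on the neighbor the path follows. If the averaging excludes self, the walk may fail to wait at a node. I would handle waiting by noting that along a connected window one can always route forward, and if self-loops are missing I would absorb the issue into the window length.

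Iterating the window bound gives
\[
\|E_k\|_\infty\le \rho^{k}\|E_0\|_\infty+C\sum_{j<k}\rho^{\,k-j}\alpha_j,
\]
with $\rho$ as in the statement, since $(1-|\mathcal{S}|^{-L})^{1/L}$ is the per-step rate. Here $E_0=0$, but the general form gives the $O(\rho^{[k/2]})$ term.

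\textbf{Step 3: split the sum.} Bound the convolution sum by splitting at $j=[k/2]$. The terms with $j<[k/2]$ contribute at most $C\rho^{\,k-[k/2]}\sup_j\alpha_j/(1-\rho)=O(\rho^{[k/2]})$. The remaining terms contribute at most $C\max_{j\ge[k/2]}\alpha_j/(1-\rho)$, which is $O(\alpha_{[k/2]})$ once $\alpha_j$ is nonincreasing. If monotonicity is not available, this term is read as $O(\sup_{j\ge[k/2]}\alpha_j)$. This yields (\ref{theorem_1_statement}).

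Under Assumption \ref{step_size_assumption}, the condition $\alpha_k-\alpha_{k+1}\ge\alpha_k\alpha_{k+1}$ gives monotonicity and $1/\alpha_{k+1}\ge 1/\alpha_k+1$. Hence $\alpha_k\ge c/k$ decays at most polynomially, so $\rho^{[k/2]}=o(\alpha_{[k/2]})$ and the bound collapses to $O(\alpha_{[k/2]})$.
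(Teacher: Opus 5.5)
Your main line is the paper's proof. The supervised rows stay exact. The error on the unsupervised block follows a linear recursion driven by $\pi_k^m-\pi_{k+1}^m$. Products over windows of length $L=(|\mathcal{S}|+1)B$ have row sums at most $1-|\mathcal{S}|^{-L}$ on that block. The convolution sum is split at $[k/2]$.

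There is a small slip: your displayed recursion drops $D_k$. It should be $E_{k+1}=W_kE_k+D_k+\mathbf{1}(\pi_k^m-\pi_{k+1}^m)^\top$, whose forcing vanishes on $\mathcal{S}^m$, as in the paper's $b^m_{k+1}$. As written it contradicts your claim that those rows stay zero, but nothing downstream depends on it.

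Two steps, however, fail as you state them.

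\textbf{The final simplification.} Dividing $\alpha_k-\alpha_{k+1}\ge\alpha_k\alpha_{k+1}$ by $\alpha_k\alpha_{k+1}$ gives $1/\alpha_{k+1}\ge 1/\alpha_k+1$, hence $\alpha_k\le 1/(k+1/\alpha_0)$. That is an upper bound, not $\alpha_k\ge c/k$. This condition cannot give the lower bound you need: $\alpha_k=2^{-k}$ satisfies it, yet $\rho^{[k/2]}/\alpha_{[k/2]}=(2\rho)^{[k/2]}\to\infty$, since $\rho\ge 7/8$ whenever $|\mathcal{S}|\ge 2$.

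The right ingredient is $\lim_k\alpha_k/\alpha_{k+1}=1$. For any $r\in(\rho,1)$ there is $k_0$ with $\alpha_{k+1}\ge r\alpha_k$ for $k\ge k_0$. Hence $\alpha_k\ge\alpha_{k_0}r^{k-k_0}$ and $\rho^{[k/2]}=o(\alpha_{[k/2]})$. The paper asserts this simplification without argument, so this is what you need to supply.

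\textbf{Missing self-loops.} Your fallback is false, and without self-loops the theorem itself fails. Let $\mathcal{S}=\{1,2,3\}$, with team $m$ supervised only by supervisor $3$, and take $SN_k(s,s)=0$. Let $E_k=\{\{1,2\},\{2,3\}\}$ for even $k$ and $E_k=\{\{1,2\}\}$ for odd $k$, so uniform connectivity holds with $B=2$.

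Then $\pi^{m,1}_{k+1}=\pi^{m,2}_k$ for all $k$, and $\pi^{m,2}_{k+1}=\pi^{m,1}_k$ for odd $k$. So $\pi^{m,2}_{2j}=\pi^{m,2}_0=\pi^m_0$ for every $j$. The initial estimate is handed back and forth between supervisors $1$ and $2$ and is never averaged with supervisor $3$'s value. The error $\|\pi^m_0-\pi^m_{2j}\|_\infty$ therefore does not vanish in general, whatever window length you allow.

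The paper's proof of its path lemma relies on $W_l(s',s')>0$, i.e.\ it implicitly takes $SN_k(s,s)=1$. State this assumption explicitly. With it, your path step is exactly the paper's argument: within each aligned block, the union graph has an edge leaving the current reachable set. Self-loops let the walk wait for that edge, so the set grows by at least one supervisor per block.
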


% \vspace{-3pt}
Theorem \ref{belief_convergence_dynamic} shows that the convergence rate of the supervisors' beliefs depends on the supervisor network $\{SN_k\}_{k \ge 0}$, the number of supervisors $\vert \mathcal{S} \vert$, and the step-size sequence $\{\alpha_k\}_{k\ge0}$.
Here, $B$ characterizes the length of the communication window over which information is propagated among supervisors.
The first term in \eqref{theorem_1_statement} indicates that, for fixed $B$ and $\{\alpha_k\}_{k\ge0}$, a larger number of supervisors $\vert \mathcal{S} \vert$ leads to slower convergence.
We next fix $\vert \mathcal{S} \vert$ and $\{\alpha_k\}_{k\ge0}$ and examine how $B$ affects the convergence rate.
Since the logarithm is strictly increasing, the monotonicity of $\rho$ with respect to $B$ is equivalent to the monotonicity of $\log \rho$ with respect to $B$.
For all $\vert \mathcal{S} \vert >1$ and all $B>0$, we have
% \vspace{-3em}
\begin{align*}
    \frac{\partial (\log \rho)}{\partial B}= \frac{(\vert \mathcal{S} \vert+1)B \cdot \frac{\log\vert \mathcal{S}\vert}{\vert \mathcal{S}\vert^{(\vert \mathcal{S} \vert+1)B}-1}}{(\vert \mathcal{S} \vert+1)^2B^2}>0.
\end{align*}
% \vspace{-2em}
This implies that $\rho$ increases with $B$; hence, a larger $B$ leads to a slower convergence rate.
The second term in \eqref{theorem_1_statement} further shows that the step-size sequence $\{\!\alpha_k\!\}_{\!k\ge0}$ also affects the convergence rate of the supervisors' beliefs.

\subsection{TNG Analysis}
% \vspace{-3pt}
Having established the convergence of supervisors' belief estimates, we next analyze TNG convergence under DTOA.
The main challenge in proving TNG convergence for DTOA is that supervisors' belief-estimation errors affect agents' actions at each round, thereby influencing subsequent belief updates and action decisions.
Although supervisors' belief-estimation errors converge to zero as $k \rightarrow \infty$, it remains nontrivial to show that their cumulative influence on agents' long-term behavior also vanishes asymptotically.
% This distinguishes our analysis from the setting without belief-estimation errors in \citep{donmez2024team}.

% \vspace{-3pt}
To address this difficulty, we introduce an ideal scenario for the analysis of DTOA and refer to the original setting with belief-estimation errors as the actual scenario.
The repeated play is divided into epochs, each consisting of $T$ rounds.
In the ideal scenario, at the beginning of each epoch, all supervisors are initialized with the history of all agents' actions from the actual scenario and supervise all teams throughout the epoch. Consequently, they share common beliefs at each round, which are referred to as ideal beliefs.
% The ideal scenario can be viewed as a special case of the actual scenario in which each supervisor supervises all teams.
Since the actual and ideal scenarios use different supervision structures during epoch $n$, they generally induce different distributions over the joint actions of all teams at round $k$ of epoch $n$.
\begin{remark}
\label{true_and_ideal_beliefs_remark}
    The ideal beliefs generally differ from the true beliefs.
    Specifically, the true beliefs depend on the action history in the actual scenario, whereas the ideal beliefs depend on both the actions during the current epoch in the ideal scenario and the action history generated before epoch $n$ in the actual scenario.
    Nevertheless, at the beginning of each epoch, the ideal beliefs coincide with the true beliefs because both are constructed from the action history generated before epoch $n$ in the actual scenario.
\end{remark}

% \vspace{-3pt}
The key idea is to use an epoch-wise comparison between the ideal and actual scenarios to quantify how supervisors' belief-estimation errors propagate to the induced action distributions.
Let $\nu_{k,(n)}^m$ and $a\text{-}\nu_{k,(n)}^m$ denote the joint-action distributions of team $m$ at round $k$ of epoch $n$ in the ideal and actual scenarios, respectively.
The following lemma quantifies how the difference between $\nu_{k,(n)}^m$ and $a\text{-}\nu_{k,(n)}^m$ is bounded in terms of supervisors' belief-estimation errors.

\begin{lemma}
\label{distribution_error}
For a ZSPTG with a supervisor network $\mathcal{G}=(\mathcal{I},\mathcal{T},\mathcal{S},ST,\{SN_k\}_{k\ge0},\{ \mathcal{A}^i \}_{i \in \mathcal{I}},\{ u^i \}_{i \in \mathcal{I}})$, under Assumptions \ref{team_supervisor_assumption} and \ref{dynamic_assumption}, the difference between the induced action distributions for DTOA in the actual and ideal scenarios can be bounded as
% \vspace{-3em}
\begin{align*}
    \Vert \nu_{k,(n)}^m-a \text{-} \nu_{k,(n)}^m \Vert_1 \le C \delta_{(n)},
\end{align*}
% \vspace{-2em}
where $\delta_{(n)}\!\!=\!\!\mathop{\max}_{k \in [nT\!-\!1,(\!n\!+\!1\!)T\!-\!1]}\!\sum_{m \in \mathcal{T}} \!\mathop{\max}_{s \in \mathcal{S}} \!\Vert \pi_{k}^{m,s}\!-\!\pi_{k}^m \Vert_1$ denotes the maximum aggregate supervisor belief-estimation error in epoch $n$.
Since Theorem \ref{belief_convergence_dynamic} implies $\delta_{(n)} \!\!\rightarrow\! 0$ as $n \!\!\rightarrow\! \infty$, it follows that $\left\Vert \nu_{k,(n)}^m\!\!-\!a\text{-}\nu_{k,(n)}^m \!\right\Vert_1\!\!\rightarrow\! 0$.
\end{lemma}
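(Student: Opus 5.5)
We plan to couple the two scenarios inside epoch $n$ and propagate the per-round discrepancy through a one-step Lipschitz estimate. Both scenarios start epoch $n$ from the same action history, so by Remark \ref{true_and_ideal_beliefs_remark} the ideal beliefs equal the true beliefs at round $nT$. The joint-action process $\{\underline{a}_k\}$ within the epoch is a (time-inhomogeneous) Markov chain. Its transition kernel at round $k$ is determined by three ingredients: the previous joint action $\underline{a}_{k-1}$, the random choice of the updating agent $i\in\mathcal{I}^m$ and of the supervisor $s$, and the smoothed best response $br_\tau\big(u^i(\cdot,a_{k-1}^{-i},\pi^{-m}_k)\big)$. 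In the ideal scenario $\pi_k^{-m}$ is the common ideal belief. In the actual scenario it is $\pi_k^{-m,s}$.

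\textbf{Step 1 (Lipschitz property of the response).} The softmax map $f\mapsto br_\tau(f)$ is Lipschitz from $\ell_\infty$ to $\ell_1$, with constant at most $2/\tau$ (its Jacobian is $\tau^{-1}(\mathrm{diag}(p)-pp^\top)$). By network separability, $u^i(\cdot,a^{-i},\pi^{-m})=\sum_{l\neq m}u^{il}(\cdot,a^{-i},\pi^l)$ is linear in each $\pi^l$. Hence it differs by at most $u_{\max}\sum_{l\ne m}\Vert\pi^l-\tilde\pi^l\Vert_1$ in sup norm, where $u_{\max}$ bounds $|u^{il}|$. So, for fixed previous actions, the one-step action distributions of team $m$ under the two belief inputs differ in $\ell_1$ by at most $L\sum_{l}\max_s\Vert\pi^{l,s}_k-\tilde\pi^l_k\Vert_1$, with $L=2u_{\max}/\tau$. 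This bound survives averaging over the random choice of $i$ and $s$.

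\textbf{Step 2 (splitting the belief error).} The actual belief $\pi_k^{l,s}$ differs from the ideal belief $\tilde\pi_k^l$ in two ways. The first is the estimation error $\Vert\pi_k^{l,s}-\pi_k^l\Vert_1$, which is bounded by $\delta_{(n)}$ over the window $[nT-1,(n+1)T-1]$. The second is $\Vert\pi^l_k-\tilde\pi^l_k\Vert_1$, which measures how far the true and ideal beliefs have drifted apart because they were updated with different realized actions. Both follow the recursion (\ref{belief_update}) from a common start, so this drift is at most $\sum_{j=nT}^{k-1}\alpha_j\Vert\underline a_j^l-\tilde{\underline a}_j^l\Vert_1$.

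\textbf{Step 3 (Gronwall-type recursion).} We bound distributions rather than realizations by working with the total-variation distance between the joint laws of the actual and ideal histories up to round $k$, denoted $D_k$. We use the standard coupling inequality for Markov chains: $D_{k+1}\le D_k+\sup_{\text{history}}\Vert K_k-\tilde K_k\Vert_1$. On a common history the ideal and true beliefs coincide, so only the estimation-error term from Step 2 remains. This gives $D_{k+1}\le D_k+L|\mathcal T|\delta_{(n)}$ with $D_{nT}=0$. Summing over at most $T$ rounds yields $D_k\le LT|\mathcal T|\delta_{(n)}$. Marginalizing onto team $m$'s joint action at round $k$ then gives the claim with $C=LT|\mathcal{T}|$. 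This constant depends on $T$, $\tau$, $u_{\max}$ and $|\mathcal T|$, but not on $n$. The final sentence of the lemma then follows immediately from Theorem \ref{belief_convergence_dynamic}, since $\delta_{(n)}\to0$.

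The main obstacle is Step 3. The definition of $a\text{-}\nu$ versus $\nu$ must be handled carefully, because the beliefs themselves are random and history-dependent. The argument needs a coupling in which the ideal and actual histories share the same randomness (same selected agents and supervisors) whenever possible, so that the mismatch accumulates only through the estimation error and not through the belief drift. If that coupling formulation turns out to be awkward, the fallback is to keep the drift term from Step 2. Its contribution is then bounded by $\alpha_{nT}T$ times the accumulated distribution discrepancy, leading to a discrete Gronwall bound with constant $C=LT|\mathcal T|(1+\alpha_{nT}T)^{T}$. This constant is still uniform in $n$ because $\alpha_k\le1$.
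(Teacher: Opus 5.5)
Your proposal follows essentially the paper's proof: a Lipschitz bound on the smoothed best response turns the aggregate estimation error into a one-step gap between the actual and ideal transition kernels, the inequality $D_{k+1}\le D_k+\sup_{\text{history}}\Vert K_k-\tilde K_k\Vert_1$ is summed over the $T$ rounds of the epoch, and the result is marginalized onto team $m$. Your use of laws of whole histories is a cleaner version of the paper's Markov-chain recursion, and it makes your Step 2 drift term and the Gronwall fallback unnecessary, since ideal and true beliefs agree on a common history; note only that bounding the supremum over histories by $\delta_{(n)}$ requires reading $\delta_{(n)}$ as the path-independent bound from Theorem \ref{belief_convergence_dynamic}, which is the same silent substitution the paper makes.
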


% \vspace{-3pt}
Lemma \ref{distribution_error} shows that the difference between the action distributions induced by the actual and ideal beliefs is bounded by a term proportional to the maximum aggregate supervisor belief-estimation error in epoch $n$.
This is crucial for analyzing agents' long-term behavior in the actual scenario.
The proof is provided in Appendix \ref{proof_of_gap}.
\begin{remark}
    At round $k\!=\!nT$, Remark \ref{true_and_ideal_beliefs_remark} and the properties of the smoothed best response yield that $\Vert \nu_{nT,(n)}^m\!-\!a\text{-}\nu_{nT,(n)}^m \Vert_1$ can be controlled by $\Vert \pi_{nT}^{m,s}\!\!-\!\!\pi_{nT}^m \Vert_1$.
    However, this argument does not directly extend to $nT\!\!\!<\!\!k\!\!\le\!\! (\!n\!+\!1\!)T\!\!-\!1$, since the action distributions are affected by the accumulated discrepancy between the actual and ideal scenarios.
\end{remark}

% \vspace{-3pt}
We now present the TNG result.
Building on Theorem \ref{belief_convergence_dynamic} and Lemma \ref{distribution_error}, the following theorem establishes an almost-sure upper bound on the TNG under DTOA.

\begin{theorem}
\label{convergence_of_gap}
For a ZSPTG with a supervisor network $\mathcal{G}=(\mathcal{I},\mathcal{T},\mathcal{S},ST,\{SN_k\}_{k\ge0},\{ \mathcal{A}^i \}_{i \in \mathcal{I}},\{ u^i \}_{i \in \mathcal{I}})$, under Assumptions \ref{team_supervisor_assumption}, \ref{dynamic_assumption}, and \ref{step_size_assumption}, DTOA satisfies, $\forall s \in \mathcal{S}$,
% \vspace{-3em}
\begin{align*}
     \limsup_{k \to \infty}T\!N\!G(\pi_k^s)
     \!\!\le\!\!  \begin{cases}
        \!\tau\! \log \!\vert \mathcal{A} \vert, \!\!\!\!\!&\text{for DTOA},\\
        \!\tau\! \log \!\vert \mathcal{A} \vert\!+\!\vert \mathcal{T} \vert^2 \overline{\phi}  \Lambda(\delta,\varepsilon_{\phi}), \!\!\!&\text{for iDTOA},
    \end{cases}
\end{align*}
% \vspace{-2em}
where $\overline{\phi}\!\!=\! \!\mathop{\max}_{(m,l,a)} \vert \phi^{ml}(a) \vert$, $\pi_k^s\!\!=\!\!\{ \pi_k^{m,s} \}_{m \in \mathcal{T}}$, and $\Lambda(\delta,\varepsilon_{\phi})$ is a function decaying to zero as $\delta \rightarrow 0^+$ for any $0<\varepsilon_{\phi} \le \min_{a \in \mathcal{A}} br_{\tau} \left( a , a^{-i} \right)$.
\end{theorem}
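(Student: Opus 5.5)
The plan is to reduce the statement to the team-FP analysis of \citet{donmez2024team}, which gives exactly these bounds when all agents share the true common beliefs $\pi_k$. The reduction uses the epoch-wise coupling between the actual and ideal scenarios, Lemma \ref{distribution_error}, and Theorem \ref{belief_convergence_dynamic}. First, since $\pi_k^{m,s}$ and $\pi_k^m$ lie in the simplex and $\phi^m$ is multilinear, hence Lipschitz in each team strategy with constant of order $|\mathcal{T}|\overline{\phi}$, we have
\begin{align*}
|TNG(\pi_k^s)-TNG(\pi_k)|\le C_1\sum_{m\in\mathcal{T}}\Vert\pi_k^{m,s}-\pi_k^m\Vert_1 .
\end{align*}
By Theorem \ref{belief_convergence_dynamic} together with Assumption \ref{step_size_assumption}, this difference vanishes. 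It therefore suffices to bound $\limsup_k TNG(\pi_k)$ for the true beliefs $\pi_k$ generated by the actual play.

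Second, I would follow the Lyapunov argument of team-FP. Define
\begin{align*}
V_k=\sum_{m}\Big[\max_{\pi'}\{\phi^m(\pi',\pi_k^{-m})+\tau\mathcal{H}(\pi')\}-\phi^m(\pi_k)-\tau\mathcal{H}(\pi_k^m)\Big],
\end{align*}
a smoothed TNG, so that $TNG(\pi_k)\le V_k+\tau\log|\mathcal{A}|$ because entropy is bounded by $\log|\underline{\mathcal{A}}^m|$. The zero-sum condition makes $\sum_m\phi^m(\pi_k)$ cancel, and a Taylor expansion of the update \eqref{belief_update} gives
\begin{align*}
V_{k+1}\le V_k+\alpha_k\,\langle\nabla V_k,\ \underline{a}_k-\pi_k\rangle+O(\alpha_k^2).
\end{align*}
I split $\underline{a}_k-\pi_k$ into three pieces: the martingale noise $\underline{a}_k-a\text{-}\nu_k$; the coupling error $a\text{-}\nu_k-\nu_k$; and the ideal drift $\nu_k-\pi_k$.

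The noise term is summable almost surely because $\sum\alpha_k^2<\infty$. Within each epoch of length $T$, the ideal scenario is exactly team-FP with common beliefs started from $\pi_{nT}$, so the team-FP analysis applies. Under the one-agent-per-team potential dynamics, the intra-team joint-action chain mixes toward the smoothed best response $br_\tau$ of $\phi^m(\cdot,\pi^{-m})$ within a bounded number of rounds. Since $\alpha_k/\alpha_{k+1}\to1$, beliefs barely move across an epoch, so the epoch-averaged drift makes $\langle\nabla V,\nu-\pi\rangle\le -cV+\epsilon(T)$, with $\epsilon(T)\to0$ as $T\to\infty$. The coupling error contributes at most $C\delta_{(n)}$ per round by Lemma \ref{distribution_error}, and $\delta_{(n)}\to0$.

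Summing over an epoch gives
\begin{align*}
V_{(n+1)T}\le(1-c\textstyle\sum_{\text{epoch}}\alpha_k)V_{nT}+\sum_{\text{epoch}}\alpha_k\big(\epsilon(T)+C\delta_{(n)}\big)+\text{noise}+O\big(\textstyle\sum\alpha_k^2\big).
\end{align*}
A standard stochastic-approximation lemma, using $\sum\alpha_k=\infty$, then yields $\limsup V_{nT}\le\epsilon(T)/c$. Letting $T\to\infty$, the only remaining term is the entropy gap $\tau\log|\mathcal{A}|$. Between epoch boundaries, $V$ changes by $O(T\alpha_{nT})\to0$, so the bound holds along the full sequence.

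For iDTOA, several agents in a team may revise simultaneously, so the intra-team chain is no longer the exact potential-game logit dynamics. Following \citet{donmez2024team}, I would bound the probability of multi-agent revisions by a function of $\delta$ and of the lower bound $\varepsilon_\phi$ on logit probabilities, which gives the distortion $\Lambda(\delta,\varepsilon_\phi)$ in the stationary team distribution. Each such distortion moves $\phi^{ml}$ by at most $\overline{\phi}$, and there are $|\mathcal{T}|^2$ team pairs, which produces the extra term $|\mathcal{T}|^2\overline{\phi}\Lambda$.

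I expect the main obstacle to be the within-epoch drift step. It requires controlling how fast the ideal intra-team chain approaches $br_\tau$ while the beliefs drift. It also requires that the coupling constant $C$ in Lemma \ref{distribution_error} is uniform in $n$ (it may depend on $T$), so that $C(T)\delta_{(n)}\to0$ for each fixed $T$ before $T\to\infty$ is taken.
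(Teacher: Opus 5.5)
Your proposal is correct in outline, but it closes the argument differently from the paper.

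\textbf{What you share with the paper.} The skeleton is the same: the epoch recursion for the true beliefs; the split into martingale noise; the actual-versus-ideal error of Lemma \ref{distribution_error}; and the mixing error of Lemma \ref{bound_on_error}, which gives the $c/(T(1-\rho))$ term. Both then send $\varepsilon\to0$ and $T\to\infty$.

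\textbf{Where the routes differ.} The paper never expands $V$ step by step. It writes $\pi_{(n+1)}-\pi_{(n)}-\beta_{(n)}\omega_{(n+1)}\in\beta_{(n)}F(\pi_{(n)})$ for a perturbed set-valued map $F$. It then defers to the stochastic-differential-inclusion argument of Appendix A.2 of \citet{donmez2024team}, where the Lyapunov function only has to decrease along continuous-time solutions of the inclusion. Your discrete recursion with a Robbins--Siegmund-type lemma is more elementary and could yield rates. Your Lipschitz transfer from $TNG(\pi_k)$ to $TNG(\pi_k^s)$, and from epoch boundaries to all $k$, also makes explicit a step the paper leaves implicit.

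\textbf{What your route costs.} You need three checks that the inclusion machinery absorbs for free.
(i) $-\tau\mathcal{H}$ has Hessian $\tau\,\mathrm{diag}(1/\pi)$, so your $O(\alpha_k^2)$ remainder is really of order $\tau\alpha_k^2/\pi_k^m(\underline{a}_k^m)$. Also, $\nabla V_k$ is undefined on the boundary; for instance $\pi_1$ is a vertex when $\alpha_0=1$. You must therefore show that the true beliefs eventually stay in a compact subset of the relative interior. This holds because every team joint action is reached within $|\mathcal{I}^m|$ rounds with probability bounded below.
(ii) $a\text{-}\nu_k$ is conditioned on $\mathcal{F}_{(n)}$, while $\nabla V_k$ is not $\mathcal{F}_{(n)}$-measurable. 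Your noise term is therefore a martingale difference only after you freeze the gradient at round $nT$, at a cost of $O(T\alpha_{nT})$ per step.
(iii) You should derive the drift bound $\langle\nabla V(\pi),(br_\tau(\phi^m(\cdot,\pi^{-m}))-\pi^m)_{m\in\mathcal{T}}\rangle\le -V(\pi)$, which gives $c=1$. This is exactly where the zero-sum condition and the separability $\phi^m=\sum_{l\neq m}\phi^{ml}$ enter, through the envelope theorem and the concavity of $\mathcal{H}$.
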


% \vspace{-3pt}
To proceed with the TNG convergence analysis, we introduce a reference scenario following \citep{donmez2024team}.
In this reference scenario, at any round $k$ in epoch $n$, agents update their actions using the true beliefs at the beginning of epoch $n$.
Specifically, for $nT \!\le\! k \!\le\! (n\!+\!1)T\!-\!1$, the selected agent $i$ follows the reference update rule $a_k^i \!\sim \!br_{\tau} \left( u^i(\cdot,a_{k-1}^{-i},\pi_{nT}^{-m}) \right)$.
The comparison between the actual and reference scenarios can then be decomposed into two parts: the comparison between the actual and ideal scenarios and the comparison between the ideal and reference scenarios.
The proof is in Appendix \ref{proof_of_gap}.

% \vspace{-3pt}
Theorem \ref{convergence_of_gap} shows that the long-term behavior of all agents induces a near TNE.
This implies that agents in the same team learn to cooperate with one another while competing against other teams, even though each agent only knows its own utility.
The resulting TNG bound matches that in Theorem 4.2 of \citep{donmez2024team}.
In contrast, our analysis explicitly accounts for belief-estimation errors induced by supervisor-network learning.
Accordingly, DTOA replaces the common-belief requirement in team-FP with supervisor-based distributed belief learning.
DTOA enables agents to optimize their strategies for team orchestration while ensuring the convergence of belief-estimation errors and providing a small-TNG guarantee. 
These results provide an affirmative answer to Problem \ref{problem_1}.

%% file: sections_IEEE/6_Robustness.tex
\section{Byzantine resilience}
\label{sec:algorithm_2}
% \vspace{-3pt}

This section addresses Problem \ref{problem_2} by developing a supervisor-based Byzantine-identification mechanism and adapting DTOA to the Byzantine attack setting.
We consider a Byzantine attack setting for ZSPTGs with a supervisor network, adapted from Byzantine-resilient multi-agent learning \citep{li2022byzantine} and imperfect verification \citep{haeberlen2006case}.
In this setting, each team is either honest or Byzantine; that is, $\mathcal{H} \cap \mathcal{B}=\emptyset$ and $\mathcal{H} \cup \mathcal{B}=\mathcal{T}$.
Agents in honest teams update their actions according to \eqref{agent_action_update} and report their joint actions truthfully to their supervisors, whereas agents in Byzantine teams update their actions arbitrarily and can misreport their joint actions.
Supervisors aim to identify Byzantine teams and mitigate their influence on honest teams.
The supervisors' checking policy and the Byzantine teams' attack policy are specified below.

% \vspace{-3pt}
\textbf{Supervisors' checking policy.}
For a supervisor $s\in\mathcal{S}$ and a team $m \in \mathcal{T}^s$, supervisor $s$ checks, with a prescribed probability, whether the joint action reported by team $m$ is consistent with the joint action actually taken by team $m$.
We use a binary variable $v_k^{m,s}\in\{0,1\}$ to indicate this decision, where $v_k^{m,s}=1$ means that supervisor $s$ checks team $m$ at round $k$.
After checking team $m$ at round $k$, supervisor $s$ receives a verification signal $z_k^{m,s} \in \{\mathrm{fail},\mathrm{pass}\}$.
The verification signals $z_k^{m,s}$ are not necessarily accurate.
Specifically, we assume that there exist constants $\eta_{FP} \in (0,1]$ and $\eta_{FN} \in [0,1)$ such that, for any $s \in \mathcal{S}$, $m \in \mathcal{T}^s$, $k \ge 0$,
% \vspace{-3em}
\begin{align*}
    &\mathbb{P}\left( z_k^{m,s}=\mathrm{fail} \mid v_k^{m,s}=1,\ \underline{a}_{k,r}^m = \underline{a}_k^m \right) \le \eta_{FP},\\
    &\mathbb{P}\left( z_k^{m,s}=\mathrm{pass} \mid v_k^{m,s}=1,\ \underline{a}_{k,r}^m \neq \underline{a}_k^m \right) \le \eta_{FN}.
\end{align*}
% \vspace{-2em}
Here, $\eta_{FP}$ and $\eta_{FN}$ denote upper bounds on the false-positive and false-negative probabilities, respectively.
% Smaller values of $\eta_{FP}$ and $\eta_{FN}$ correspond to more accurate verification.

% \begin{assumption}
%     There exist small $\eta_{FP} \in (0,1]$ and $\eta_{FN} \in [0,1)$ such that for any $m \in \mathcal{T}$, $s \in \mathcal{S}$ and $k \ge 0$
%     \begin{align*}
%         &\mathbb{P}\left( z_k^{s,m}={fail} \big\vert v_k^{s,m}=1,\ \underline{a}_{k,r}^m = \underline{a}_k^m \right) \le \eta_{FP},\\
%         &\mathbb{P} \left( z_k^{s,m}={pass} \big\vert v_k^{s,m}=1,\ \underline{a}_{k,r}^m \neq \underline{a}_k^m \right) \le \eta_{FN}.
%     \end{align*}
%     $\eta_{FP}$ and $\eta_{FN}$ represents the probabilities of getting false positive and false negative verification signals.
%     Note that smaller $\eta_{FP}$ and $\eta_{FN}$ imply higher verification signal accuracy.
% \end{assumption}
% \vspace{-3pt}
\textbf{Byzantine attack policy.}
A Byzantine team can be identified more readily if it misreports its joint action at each round.
We therefore assume that there exists $p_{\mathrm{lie}} \in [0,1]$ such that each Byzantine team misreports its joint action to its supervisors with probability $p_{\mathrm{lie}}$ at each round.
If a Byzantine team does not misreport at a given round, then it reports its true joint action at that round.
In this case, the Byzantine team is indistinguishable from an honest team from the supervisors' perspective, because the verification signal depends only on whether the reported joint action is consistent with the joint action actually taken. Moreover, supervisors have no access to any agent's utility function or any team's potential function.

% \vspace{-3pt}
In practice, a verification signal is meaningful only if it is informative about whether a team misreports.
We impose the following consistency condition on the verification signals.
Specifically, $\eta_{FP}$, $\eta_{FN}$, and $p_{\mathrm{lie}}$ satisfy
% \vspace{-3em}
\begin{align}
\label{frequency}
        \eta_{FP} < (1-p_{\mathrm{lie}})\eta_{FP}+p_{\mathrm{lie}}(1-\eta_{FN}) .
\end{align}
% \vspace{-2em}
Condition \eqref{frequency} means that a checked Byzantine team has a higher probability of generating a $\mathrm{fail}$ signal than a checked honest team.
Equivalently, supervisors are more likely to receive a $\mathrm{pass}$ signal after checking an honest team than after checking a Byzantine team.

% \vspace{-3pt}
The Byzantine attack setting described above is referred to as a Byzantine ZSPTG (B-ZSPTG) with a supervisor network and is defined by the tuple 
% \vspace{-3em}
\begin{align*}
    \mathcal{G}_{B}=(\mathcal{I},\mathcal{H},\mathcal{B},\mathcal{S},ST,\{SN_k\}_{k\ge 0},\{ \mathcal{A}^i \}_{i \in \mathcal{I}},\{ u^i \}_{i \in \mathcal{I}},p),
\end{align*}
% \vspace{-2em}
where $\mathcal{H}$ and $\mathcal{B}$ denote the sets of honest and Byzantine teams, respectively, and $p=(\eta_{FP},\eta_{FN},p_{\mathrm{lie}})$ denotes the parameters satisfying \eqref{frequency}.
This formulation serves as the basis for the subsequent analysis of Problem \ref{problem_2}.

\subsection{Algorithm Design}
% \vspace{-3pt}

\begin{algorithm}[t]
    \caption{Byzantine-resilient DTOA (BR-DTOA)}
    \label{alg:robust_team_fp_with_decentralized_belief_learning}
    \begin{algorithmic}[1] 
        \STATE  \textbf{Initialise:} $\{ \pi_{s,0}^m \}_{m \in \mathcal{T}}$ and $\{a_{-1}^i \}_{i \in \mathcal{I}}$ arbitrarily, $F_0^{m,s}=0$
        % , $\{ \lambda^m_0=\lambda_{min} \}_{m \in \mathcal{T}}$

        \WHILE{round $k = 0, 1, \dots$}
            \FOR{$m \in \mathcal{T}$}
                \STATE select agent $i \in \mathcal{I}^m$ in team $m$ randomly
                \IF{$m$ is an honest team}
                    \STATE select $s \in \left\{s : i \in \mathcal{T}^m \right\}$ randomly to provide belief estimates about other teams
                    \STATE agent $i$ updates its action based on $a_{k-1}^{-i}$ and $\pi_k^{m,s}$: $a_k^i \sim br_{\tau} \left( u^i(\cdot,a_{k-1}^{-i},\pi_k^{m,s}) \right)$
                \ELSE
                    \STATE agent $i$ updates its action randomly
                \ENDIF

                \FOR {$j \in \mathcal{I}^m \backslash \{ i \}$}
                    \STATE repeat the last action: $a_k^j=a_{k-1}^j$
                    % \STATE $\lambda_{k+1}^m=\lambda_{k}^m$
                \ENDFOR
                \STATE report the joint action $\underline{a}_{k,r}^m$ to supervisors $\mathcal{T}^m$
                % \STATE update $\lambda^m_{k+1}=\mathop{max}\left\{\lambda^m_k+\frac{1}{t_{\lambda,k}} c^m\left (\underline{a}_{k}^m \right),\lambda_{min} \right\}$
            \ENDFOR
            \FOR{$s \in \mathcal{S}$ and $m \in \mathcal{T}$}
                
                \IF{$m \in \mathcal{T}^s$}
                    \STATE check team $m$ for misreporting with probability $q$ and receive $z_k^{m,s} \in \{ \mathrm{pass},\mathrm{fail} \}$
                    \vspace{0.2em}
                    \STATE update $F_k^{m,s}=F_{k-1}^{m,s}+{1}\left(z_k^{m,s}=fail\right)$
                    \vspace{0.3em}
                    \STATE update $f_k^{m,s}={F_k^{m,s}}/{k}$
                    \vspace{0.3em}
                    \IF {$k<K$ or $f_k^{m,s}<f$}
                    \STATE update: $\pi_{k+1}^{m,s}\!=\!\pi_{k}^{m,s}\!+\alpha_k \!\left( \!\underline{a}_{k,r}^m \!-\!\pi_{k}^{m,s} \!\right)$
                    \ENDIF
                    % \STATE $\pi_{k+1}^{m,s}=\pi_{k}^{m,s}+\alpha_k \left( \underline{a}_{k,r}^m -\pi_{k}^{m,s} \right)$
                \ELSE
                    \STATE update: $\pi_{k+1}^{m,s}\!\!=\!\!
                    \begin{cases}
                        \!\frac{1}{\vert \mathcal{N}^s_k \vert}\!\!\sum_{s' \!\in \mathcal{N}^s_k} \!\!\pi_{k}^{m,s'}\!\!\!, \!\!&\text{for }\vert \mathcal{N}^s_k \vert\!\!>\!\!0\\
                        \pi_{k}^{m,s}, \!\!&\text{for }\vert \mathcal{N}^s_k \vert\!\!=\!\!0
                    \end{cases}$
                \ENDIF
            \ENDFOR            
        \ENDWHILE
    \end{algorithmic}
\end{algorithm}

To address the Byzantine attack setting described above, we extend DTOA with a Byzantine-resilient mechanism that operates at the supervisor-team level.
The resulting algorithm, referred to as BR-DTOA, is summarized in Algorithm \ref{alg:robust_team_fp_with_decentralized_belief_learning}.

% \vspace{-3pt}
\textbf{Byzantine-resilient mechanism.}
Since each checking outcome can be inaccurate, supervisors do not rely on a single $\mathrm{fail}$ verification signal to identify Byzantine teams.
Instead, they accumulate the evidence over time.
For a supervisor $s\!\in\!\mathcal{S}$ and a team $m \!\in\! \mathcal{T}^s$, let $F_k^{m,s}$ denote the number of times that supervisor $s$ has received a $\mathrm{fail}$ signal from team $m$ up to round $k$.
Supervisor $s$ checks team $m$ with probability $q$ and updates $F_k^{m,s}$ as
% \vspace{-3em}
\begin{align*}
    F_k^{m,s}=F_{k-1}^{m,s}+\mathbf{1}\{v_k^{m,s}=1,\ z_k^{m,s}=\mathrm{fail}\}.
\end{align*}
% \vspace{-2em}
The empirical frequency of receiving a $\mathrm{fail}$ signal from team $m$ is then defined as $f_k^{m,s}\!=\!F_k^{m,s}/k$.
To reduce the effect of early-stage fluctuations, supervisors start labeling teams only after a burn-in period of $K$ rounds.
Given a threshold $f$ satisfying $q \eta_{FP} \!<\! f\! <\! q \left[(1\!-\!p_{\mathrm{lie}})\eta_{FP}\!+\!p_{\mathrm{lie}}(1\!-\!\eta_{FN}) \right]$, if $k>K$ and $f_k^{m,s}>f$ hold, then supervisor $s$ labels team $m$ as Byzantine.
Such an $f$ exists by Condition \eqref{frequency}.

% The threshold $f$ for the empirical $\mathrm{fail}$ frequency is chosen satisfying $q \eta_{FP} < f < q  \left[(1-p_{lie})\eta_{FP}+p_{lie}(1-\eta_{FN}) \right]$.
% \begin{align*}
%     q \eta_{FP} < f < q  \left[(1-p_{lie})\eta_{FP}+p_{lie}(1-\eta_{FN}) \right].
% \end{align*}
% \vspace{-5pt}
Once a team $m$ is labeled as Byzantine by a supervisor $s$, supervisor $s$ keeps this label in all subsequent rounds and stops providing belief estimates to team $m$.
This label is also used to prevent identified Byzantine teams from further affecting supervisors' belief estimates.
At round $k$, supervisor $s$ updates its belief regarding team $m$ according to
% \vspace{-3em}
\begin{align*}
    \pi_{k\!+\!1}^{m,s}\!\!\!=\!\!
    \begin{cases}
        \!\pi_{k}^{m,s}\!\!\!+\!\omega_k^{m,s} \!\alpha_k\!\! \left(\! \underline{a}_{k,r}^m \!\!-\!\pi_{k}^{m,s} \!\right)\!, &\!\!\!\! \text{for } m \!\in \!\mathcal{T}^s,\\
       \!\frac{1}{\vert \mathcal{N}^s_k \vert}\sum_{s' \in \mathcal{N}^s_k} \pi_{k}^{m,s'}, & \!\!\!\!\text{for } m \!\!\notin\!\! \mathcal{T}^s \text{and } \vert \mathcal{N}^s_k \vert\!\!>\!\!0,\\
       \!\pi_{k}^{m,s}, & \!\!\!\!\text{for } m \!\!\notin\!\! \mathcal{T}^s \text{and } \vert \mathcal{N}^s_k \vert\!\!=\!\!0,
    \end{cases}   
\end{align*}
% \vspace{-2em}
where $\omega_k^{m,s}=1$ if team $m$ has not been labeled as Byzantine by supervisor $s$, and $\omega_k^{m,s}=0$ otherwise.

\subsection{Resilience Analysis}
% \vspace{-3pt}
We analyze BR-DTOA in two steps.
We first examine the Byzantine-resilient mechanism and then study the convergence of BR-DTOA and the associated honest TNG.
The following lemma provides exponential bounds on Byzantine-team identification errors.

\begin{lemma}
\label{theorem_of_checking_performance}
Given a B-ZSPTG with a supervisor network $\mathcal{G}_{B}=(\mathcal{I},\mathcal{H},\mathcal{B},\mathcal{S},ST,\{SN_k\}_{k\ge 0},\{ \mathcal{A}^i \}_{i \in \mathcal{I}},\{ u^i \}_{i \in \mathcal{I}},p)$, for any supervisor $s$ and any team $m \in \mathcal{T}^s$, the following statements hold under BR-DTOA:

% \vspace{-1em}
\begin{itemize}
    \item if $m \in \mathcal{H}$, then $\mathbb{P}\left( f_k^{m,s}\! >\!f \right) \!\le\! \mathop{\exp} \left( -kD(f \Vert q \eta_{FP}) \right)$;
    \item if $m \in \mathcal{B}$, then
        % \vspace{-1em}
        \begin{align*}
            \mathbb{P}\left( f_k^{m,s} \!\!\!<\!\!f \right) 
            \!\!\le\!\! \mathop{\exp} \!\Big( \!\!\!-\!\!kD\big(f \Vert q \!\left[(\!1\!\!-\!\!p_{\mathrm{lie}}\!)\eta_{F\!P}\!+\!p_{\mathrm{lie}}(\!1\!\!-\!\!\eta_{F\!N}\!) \right]\! \big) \!\Big);
        \end{align*}
\end{itemize}

% \vspace{-1em}
where $D(x \Vert y) \triangleq x \log  \left(\frac{x}{y}\right)+(1-x)\log \left(\frac{1-x}{1-y}\right)$ denotes the binary relative entropy.
\end{lemma}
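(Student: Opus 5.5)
The plan is to write $F_k^{m,s}$ as a sum of per-round indicators and apply a Chernoff--Hoeffding bound in relative-entropy form. For $t=1,\dots,k$, set $X_t\triangleq\mathbf{1}\{v_t^{m,s}=1,\ z_t^{m,s}=\mathrm{fail}\}$, so that $f_k^{m,s}=\frac1k\sum_{t=1}^k X_t$. The $X_t$ need not be independent, since the verification signals may depend on the history. I will therefore work with the natural filtration $\mathcal{F}_{t-1}$ generated by all actions, reports, checking decisions and signals up to round $t-1$. The key step is to bound the conditional success probability $p_t\triangleq\mathbb{P}(X_t=1\mid\mathcal{F}_{t-1})$ from the appropriate side.

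For $m\in\mathcal{H}$ we have $\underline{a}_{t,r}^m=\underline{a}_t^m$ at every round. The check is performed with probability $q$, independently of the past, and the false-positive bound then gives $p_t\le q\,\eta_{FP}$.

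For $m\in\mathcal{B}$, the team misreports with probability $p_{\mathrm{lie}}$, independently of the check. Conditioning on a misreport, a fail occurs with probability at least $1-\eta_{FN}$. Conditioning on a truthful report, I will treat a fail as occurring with probability $\eta_{FP}$, reading the model as the honest case that generates fails at the nominal rate. This gives $p_t\ge q[(1-p_{\mathrm{lie}})\eta_{FP}+p_{\mathrm{lie}}(1-\eta_{FN})]\triangleq\bar p$.

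With these conditional bounds in hand, the next step is a stochastic-domination argument. If $p_t\le p$ for every $t$, then for any $\lambda\ge0$ the tower property gives
\[
\mathbb{E}\big[e^{\lambda\sum_t X_t}\big]\le (1-p+pe^{\lambda})^k,
\]
obtained by peeling off one conditional expectation at a time. Markov's inequality then yields $\mathbb{P}(f_k^{m,s}>f)\le \exp\!\big(-k\sup_{\lambda\ge 0}[\lambda f-\log(1-p+pe^\lambda)]\big)$. Since $f>p=q\eta_{FP}$ by the choice of the threshold, the Legendre transform evaluates to $D(f\Vert q\eta_{FP})$, which proves the first bullet. The Byzantine case is symmetric: take $\lambda\le 0$ and use $p_t\ge \bar p$ together with $f<\bar p$ to obtain $\mathbb{P}(f_k^{m,s}<f)\le\exp(-kD(f\Vert\bar p))$.

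I expect the main obstacle to be justifying the conditional bounds rigorously, in particular the lower bound in the Byzantine case when the team tells the truth. That requires a lower bound on the false-positive probability, whereas the model only gives $\le\eta_{FP}$. I would first handle this by taking the honest-report fail probability to equal $\eta_{FP}$, which is the interpretation underlying Condition \eqref{frequency}. Failing that, I would weaken the truthful-round contribution to $0$ and obtain a bound with $q\,p_{\mathrm{lie}}(1-\eta_{FN})$ in its place. A secondary point is to make sure that the randomness of the checks is independent of $\mathcal{F}_{t-1}$ and of the misreport decision, so that the product $q\cdot(\cdot)$ is legitimate inside the conditional expectation.
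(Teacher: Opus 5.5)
Your proposal is correct and follows the paper's route: the paper also writes $F_k^{m,s}$ as a sum of the indicators $\mathbf{1}\{v_k^{m,s}=1,\,z_k^{m,s}=\mathrm{fail}\}$, sets their success probability to exactly $q\eta_{FP}$ (honest) or $q[(1-p_{\mathrm{lie}})\eta_{FP}+p_{\mathrm{lie}}(1-\eta_{FN})]$ (Byzantine), and invokes the KL-form Chernoff bound with the threshold condition on $f$, so your conditional moment-generating-function argument is a more careful version that removes the paper's implicit i.i.d.\ assumption. You are right that the Byzantine bullet needs the truthful-round fail probability to equal $\eta_{FP}$, not merely be at most $\eta_{FP}$, and the paper adopts exactly this reading by writing equalities; your fallback with $q\,p_{\mathrm{lie}}(1-\eta_{FN})$ would instead prove a different statement, which need not apply because the prescribed threshold $f$ may exceed $q\,p_{\mathrm{lie}}(1-\eta_{FN})$.
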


% \vspace{-3pt}
Lemma \ref{theorem_of_checking_performance} shows that the probabilities of falsely labeling an honest team as Byzantine and failing to identify a Byzantine team by round $k$ decay exponentially as $k \rightarrow \infty$.
A larger checking probability $q$ and smaller error rates $\eta_{FP}$ and $\eta_{FN}$ can improve the identification performance by reducing the misidentification probabilities.
The threshold $f$ controls the trade-off between missed identifications of Byzantine teams and false alarms for honest teams.
Increasing $f$ reduces the probability of falsely labeling an honest team as Byzantine but increases the probability of failing to identify a Byzantine team.
Conversely, decreasing $f$ facilitates Byzantine-team identification but raises the risk of misclassifying honest teams.

% \vspace{-3pt}
We now derive convergence guarantees for BR-DTOA from the identification result in Lemma \ref{theorem_of_checking_performance} and the convergence analysis of DTOA in the previous section.
The first result concerns supervisors' belief-estimation errors for honest teams.

\begin{theorem}
\label{byzantine_belief_errors_convergence}
For a Byzantine ZSPTG with a supervisor network $\mathcal{G}_{\!B}\!\!=\!\!\left(\mathcal{I},\!\mathcal{H},\!\mathcal{B},\!\mathcal{S},\!ST,\!\{\!SN_k\!\}_{k\ge 0},\!\{ \!\mathcal{A}^i \!\}_{i \in \mathcal{I}},\!\{ \!u^i\! \}_{i \in \mathcal{I}},\!p\right)$,
% \begin{align*}
%     \mathcal{G}_{B}=(\mathcal{I},\mathcal{H},\mathcal{B},\mathcal{S},ST,\{SN_k\}_{k\ge 0},\{ \mathcal{A}^i \}_{i \in \mathcal{I}},\{ u^i \}_{i \in \mathcal{I}},p),
% \end{align*}
under the conditions of Theorem \ref{belief_convergence_dynamic}, for any supervisor $s \in \mathcal{S}$, the following statement holds for BR-DTOA with probability at least $1-\delta_B$:
% \vspace{-3em}
\begin{align}
\label{theorem_3_statement}
    \Vert \pi_k^{m,s}-\pi_k^m \Vert_{\infty}\le O(\rho^{[\frac{k}{2}]})+O(\alpha_{[\frac{k}{2}]}),\quad \forall  m \in \mathcal{H},
\end{align}
% \vspace{-2em}
where $\delta_B=\delta_B^{\mathcal{H}}+\delta_B^{\mathcal{B}}$ with $\delta_B^{\mathcal{H}}=\vert \mathcal{H} \vert \cdot \frac{\mathop{\exp} \left( -K \cdot D(f \Vert q \eta_{FP}) \right)}{1-\mathop{\exp} \left( -D(f \Vert q \eta_{FP}) \right)}$ and $\delta_B^{\mathcal{B}}
=\vert \mathcal{B} \vert \cdot \mathop{\exp} \Big( -K \cdot D \big(f \Vert q \big[(1-p_{\mathrm{lie}})\eta_{FP}+p_{\mathrm{lie}}(1-\eta_{FN}) \big] \big) \Big)$, and $\rho$ is defined as in Theorem \ref{belief_convergence_dynamic}.
If Assumption \ref{step_size_assumption} also holds, (\ref{theorem_3_statement}) simplifies to $\Vert \pi_k^{m,s}-\pi_k^m \Vert_{\infty}\le O(\alpha_{[\frac{k}{2}]})$.
\end{theorem}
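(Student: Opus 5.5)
We will prove Theorem \ref{byzantine_belief_errors_convergence} by conditioning on a ``good identification'' event $\mathcal{E}$ on which, after the burn-in round $K$, every supervisor labels exactly the Byzantine teams it supervises. On $\mathcal{E}$ the belief dynamics for honest teams coincide with those of DTOA, so Theorem \ref{belief_convergence_dynamic} applies. The argument has three steps: (i) define $\mathcal{E}$ and bound $\mathbb{P}(\mathcal{E}^c)\le\delta_B$ via Lemma \ref{theorem_of_checking_performance} and a union bound; (ii) show that on $\mathcal{E}$ the honest-team belief recursion has the same structure as in DTOA; (iii) rerun the gossip-matrix contraction argument from the proof of Theorem \ref{belief_convergence_dynamic}.

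For step (i), let $\mathcal{E}^{\mathcal{H}}$ be the event that, for every honest $m$, $f_k^{m,s}\le f$ for all $k\ge K$. Then
\begin{align*}
\mathbb{P}\big((\mathcal{E}^{\mathcal{H}})^c\big)\le\sum_{m\in\mathcal{H}}\sum_{k\ge K}e^{-kD(f\Vert q\eta_{FP})},
\end{align*}
and summing the geometric series gives exactly $\delta_B^{\mathcal{H}}$. The union over the supervisors of $m$ is implicit here; if it is needed, I will note that the paper's constant appears to absorb it. Let $\mathcal{E}^{\mathcal{B}}$ be the event that each Byzantine team is flagged at round $K$. By the second bullet of Lemma \ref{theorem_of_checking_performance} at $k=K$, $\mathbb{P}\big((\mathcal{E}^{\mathcal{B}})^c\big)\le\delta_B^{\mathcal{B}}$. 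Since labels are permanent, on $\mathcal{E}^{\mathcal{B}}$ we have $\omega_k^{m,s}=0$ for all $k>K$ and $m\in\mathcal{B}$. Taking $\mathcal{E}=\mathcal{E}^{\mathcal{H}}\cap\mathcal{E}^{\mathcal{B}}$ gives $\mathbb{P}(\mathcal{E})\ge 1-\delta_B$.

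For step (ii), on $\mathcal{E}$ every honest team $m$ has $\omega_k^{m,s}=1$ for all $k$, and its reports are truthful, $\underline{a}_{k,r}^m=\underline{a}_k^m$. Hence, for $m\in\mathcal{H}$, the supervised update is exactly the DTOA update and the unsupervised update is the same neighbor averaging. The recursion for team $m$ across all supervisors is decoupled from the other teams' beliefs: it depends only on the realized actions $\underline{a}_k^m$, which enter as an exogenous input. The Byzantine teams influence the honest teams' actions, but they do not enter the honest teams' belief-error recursion.

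Step (iii) restates the proof of Theorem \ref{belief_convergence_dynamic} for a fixed honest $m$. Stack $e_k^m=(\pi_k^{m,s}-\pi_k^m)_{s\in\mathcal{S}}$. Supervisors in $\mathcal{T}^m$ satisfy $e_{k+1}^{m,s}=(1-\alpha_k)e_k^{m,s}$, while the others average their neighbors' errors and pick up the perturbation $-\alpha_k(\underline{a}_k^m-\pi_k^m)$, which is $O(\alpha_k)$. Under Assumptions \ref{team_supervisor_assumption} and \ref{dynamic_assumption}, products of these matrices over windows of length $(|\mathcal{S}|+1)B$ contract by the factor $\rho^{(|\mathcal{S}|+1)B}$. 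Splitting the variation-of-constants sum at $[k/2]$ then yields $O(\rho^{[k/2]})+O(\alpha_{[k/2]})$, and this holds pathwise, hence on $\mathcal{E}$. Under Assumption \ref{step_size_assumption}, geometric decay is dominated by $\alpha_{[k/2]}$, giving the simplified bound. Equivalently, the bound of Theorem \ref{belief_convergence_dynamic} is deterministic given the action path, so it transfers verbatim.

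The main obstacle is step (i). The second bullet of Lemma \ref{theorem_of_checking_performance} controls only the single round $K$, so I will argue that the labelling rule (label once $k>K$ and $f_k^{m,s}>f$) makes detection at $K$ sufficient through permanence of labels. The honest side, by contrast, needs uniform control over all $k\ge K$, which forces the geometric sum. One remaining subtlety is the behaviour before round $K$, when no team is labelled. It matters only for Byzantine teams, whose estimation errors the statement does not claim to control.
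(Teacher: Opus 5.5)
Your proposal is correct and takes the paper's route. The paper builds the same good-identification event from Lemma~\ref{theorem_of_checking_performance}: a geometric tail over all $k\ge K$ for honest teams, the single-round bound at $K$ plus permanence of labels for Byzantine teams, and a union bound giving $1-\delta_B$. On that event it transfers the pathwise bound of Theorem~\ref{belief_convergence_dynamic} to honest teams.

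On the point you hedged, the union over supervisors is genuinely needed, and the stated constant does not absorb it. If any $s'$ with $m\in\mathcal{T}^{s'}$ falsely labels an honest $m$, then $\pi^{m,s'}_k$ freezes and gossip carries the resulting non-vanishing error to $s$, so the factor $|\mathcal{H}|$ in $\delta_B^{\mathcal{H}}$ should really count pairs $(m,s')$. The paper's own event $\mathcal{E}_0(t_0)$ is also defined for a single fixed $s$, so this looseness is shared with the paper rather than a defect of your argument relative to it.
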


% \vspace{-3pt}
Lemma \ref{theorem_of_checking_performance} shows that, after the burn-in period, BR-DTOA avoids falsely labeling honest teams and identifies Byzantine teams with high probability.
On this event, Theorem \ref{belief_convergence_dynamic} can be applied to bound the belief-estimation errors for honest teams, yielding Theorem \ref{byzantine_belief_errors_convergence}.
Thus, the convergence rate remains the same as in the non-Byzantine case, while the guarantee holds with probability at least $1-\delta_B$ because of possible identification errors.
Moreover, $\delta_B \rightarrow 0$ as $K \rightarrow \infty$, so a longer burn-in period improves the reliability.
The next result establishes the honest-TNG convergence guarantee under BR-DTOA.

\begin{figure*}[t]
    \centering

    \subfloat[Three supervisors.]{
        \begin{minipage}[t]{0.27\textwidth}
            \centering
            \includegraphics[width=\linewidth,trim=0 0 20 33, clip]{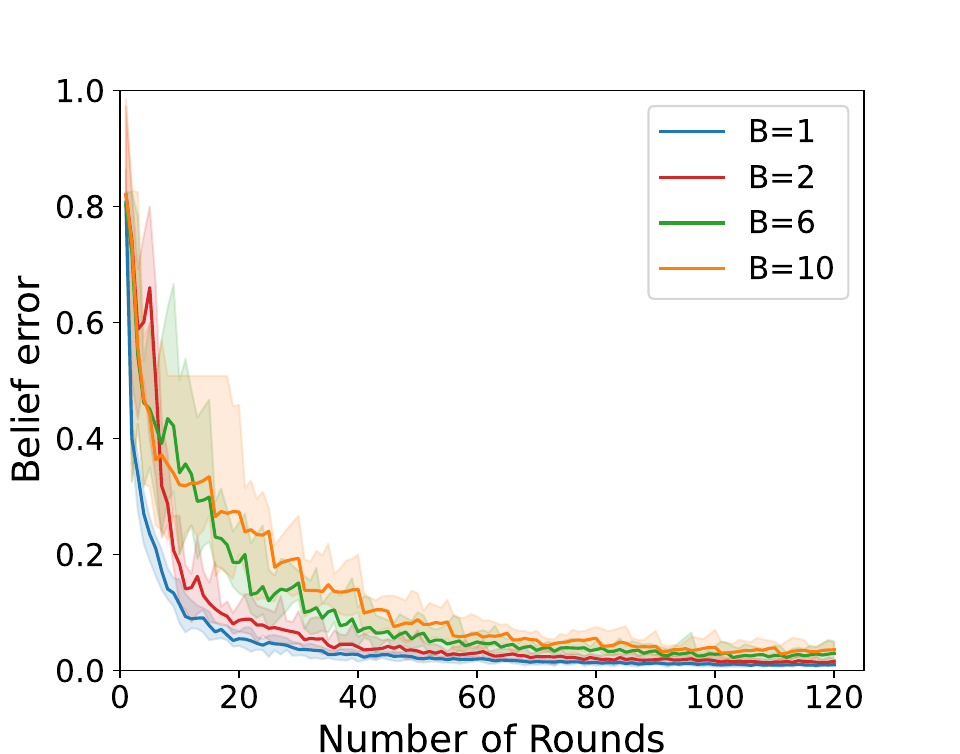}
            \label{experiment_1_1}
        \end{minipage}
    }
    \hfill
    \subfloat[Five supervisors.]{
        \begin{minipage}[t]{0.27\textwidth}
            \centering
            \includegraphics[width=\linewidth,trim=0 0 20 33, clip]{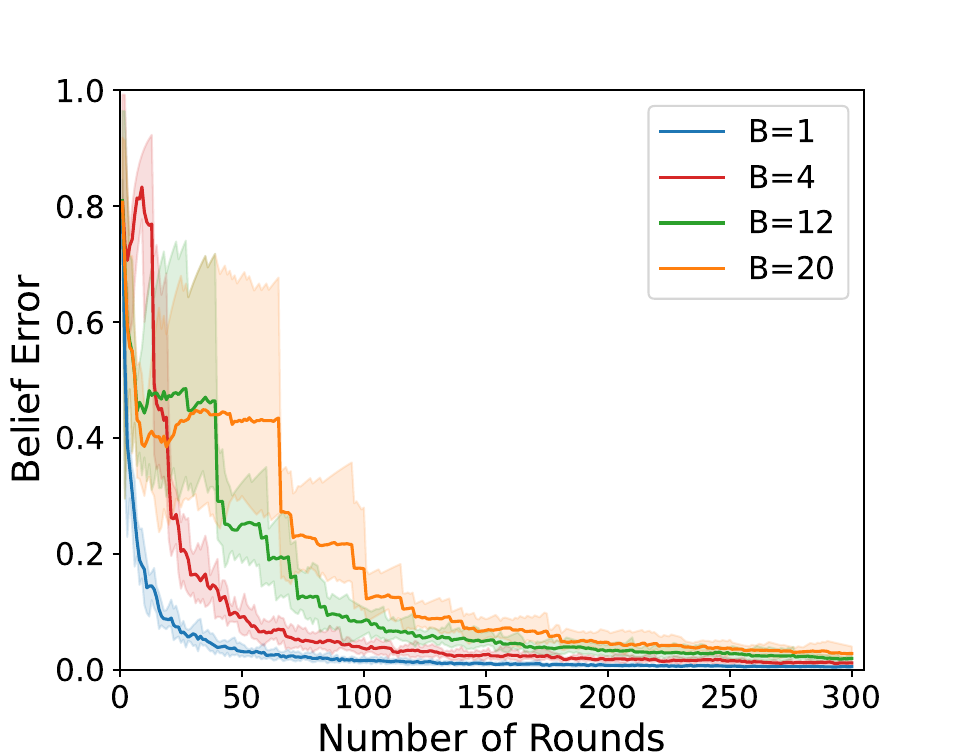}
            \label{experiment_1_2}
        \end{minipage}
    }
    \hfill
    \subfloat[Rounds needed for small errors.]{
        \raisebox{0.05em}{
            \begin{minipage}[t]{0.4\textwidth}
                \centering
                \includegraphics[width=\linewidth,trim=0 10 0 0, clip]{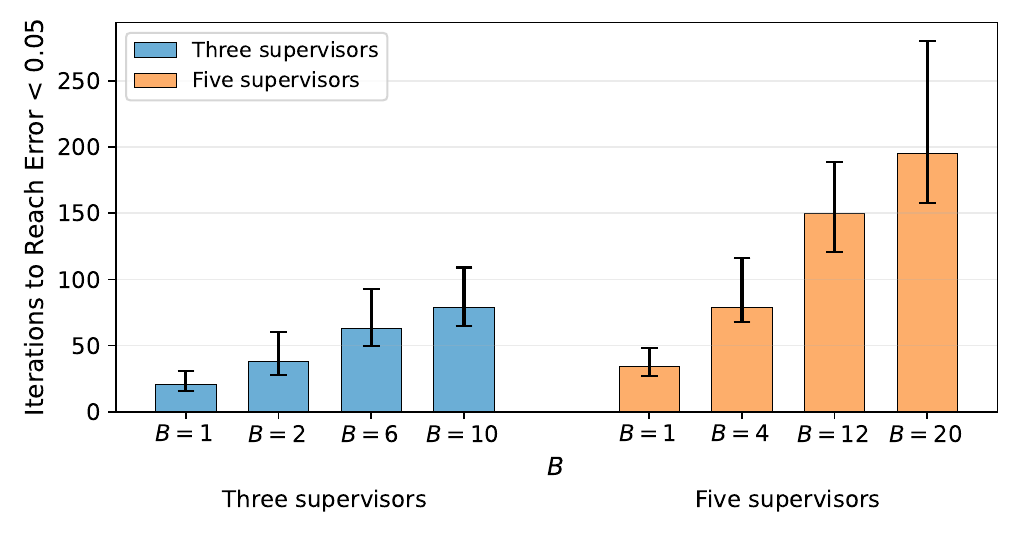}
                \label{experiment_1_3}
            \end{minipage}
        }
    }

    \caption{Convergence of belief-estimation errors.}
    \label{fig:belief_error}
\end{figure*}

\begin{theorem}
\label{honest_TNG}
For a Byzantine ZSPTG with a supervisor network $\mathcal{G}_{\!B}\!\!=\!\!\left(\mathcal{I},\!\mathcal{H},\!\mathcal{B},\!\mathcal{S},\!ST,\!\{\!SN_k\!\}_{k\ge 0},\!\{ \!\mathcal{A}^i \!\}_{i \in \mathcal{I}},\!\{ \!u^i\! \}_{i \in \mathcal{I}},\!p\right)$,
under the conditions of Theorem \ref{convergence_of_gap}, for any supervisor $s \in \mathcal{S}$, the following inequality holds for BR-DTOA with probability at least $1-\delta_B$
% \vspace{-3em}
\begin{align*}
     &\limsup_{k \to \infty}TNG_{\mathcal{H}}\left(\pi_k^s\right)\nonumber\\
     \le&
     \begin{cases}
        \tau \log \left\vert \mathcal{A}_{\mathcal{H}} \right\vert, &\text{for BR-DTOA},\\
        \tau \log \left\vert \mathcal{A}_{\mathcal{H}} \right\vert+\left\vert \mathcal{H}  \right\vert^2 \overline{\phi} \Lambda(\delta,\epsilon), &\text{for BR-iDTOA},
    \end{cases}
\end{align*}
% \vspace{-2em}
where $\delta_B$ is defined as in Theorem \ref{byzantine_belief_errors_convergence}, and $\mathcal{A}_{\mathcal{H}}=\prod_{m \in \mathcal{H}} \underline{\mathcal{A}}^m$ is the joint action set of honest teams.
\end{theorem}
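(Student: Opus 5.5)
The plan is to reduce Theorem~\ref{honest_TNG} to Theorem~\ref{convergence_of_gap} on a high-probability ``good identification'' event. Let $E$ be the event that, for every supervisor $s$ and every supervised team $m\in\mathcal{T}^s$, no honest team is ever labeled Byzantine after the burn-in $K$, and every Byzantine team is labeled by round $K$. This is the same event used in Theorem~\ref{byzantine_belief_errors_convergence}.

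The failure probability of $E$ is handled by a union bound. For honest teams, sum Lemma~\ref{theorem_of_checking_performance} over $k\ge K$; the geometric series gives $\delta_B^{\mathcal{H}}$. For Byzantine teams, a single deviation at time $K$ gives $\delta_B^{\mathcal{B}}$. Hence $\mathbb{P}(E)\ge 1-\delta_B$. Everything that follows is argued on $E$.

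On $E$, every Byzantine team is frozen after round $K$. Each supervisor stops updating its beliefs about such a team ($\omega_k^{m,s}=0$) and stops serving its agents. Every honest team, by contrast, is updated with truthful reports, so by Theorem~\ref{byzantine_belief_errors_convergence} we have $\Vert\pi_k^{m,s}-\pi_k^m\Vert_\infty\le O(\alpha_{[k/2]})$ for $m\in\mathcal{H}$.

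We then view the honest teams as playing a ZSPTG-like game among themselves. By network separability, each $\phi^m=\sum_{l}\phi^{ml}$ splits into interactions with honest teams and interactions with Byzantine teams. The Byzantine part enters agent $i$'s smoothed best response only through the supervisors' frozen beliefs $\pi^{l,s}_{K}$, $l\in\mathcal{B}$, which are asymptotically common across supervisors since consensus averaging still runs on them. Their contribution is therefore a fixed exogenous linear term, which does not alter the potential structure within each honest team. Consequently $TNG^m$, for $m\in\mathcal{H}$, evaluated with the Byzantine strategies held fixed, is exactly the quantity controlled by the team-FP analysis.

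Next comes the epoch argument behind Theorem~\ref{convergence_of_gap}, restricted to $\mathcal{H}$. There are three steps.
\begin{enumerate}
\item Lemma~\ref{distribution_error} compares the actual and ideal honest action distributions, with $\delta_{(n)}$ now summed over $m\in\mathcal{H}$. It still tends to $0$ by Theorem~\ref{byzantine_belief_errors_convergence}.
\item The ideal scenario is compared with the reference scenario.
\item The Lyapunov/potential argument of \citep{donmez2024team} is applied to the honest teams. Its entropy term involves only honest joint actions, so $\tau\log|\mathcal{A}|$ becomes $\tau\log|\mathcal{A}_{\mathcal{H}}|$. In the independent variant, the $|\mathcal{T}|^2\overline{\phi}\Lambda$ term becomes $|\mathcal{H}|^2\overline{\phi}\Lambda$, because only honest pairs $(m,l)$ contribute coupling errors.
\end{enumerate}
Finally, $\pi_k^s$ and $\pi_k$ agree up to vanishing error on $\mathcal{H}$, and TNG is Lipschitz in beliefs. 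This transfers the bound to $TNG_{\mathcal{H}}(\pi_k^s)$.

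The main obstacle is the zero-sum structure. The honest-only game is not zero-sum, since $\sum_{m\in\mathcal{H}}\phi^m$ includes the terms against the Byzantine teams. The Lyapunov function of \citep{donmez2024team} relies on the potentials summing to a constant. My first attempt would use the observation after Definition~\ref{smoothed_best_response}, taking the Byzantine beliefs as fixed after $K$. Then $\sum_{m\in\mathcal{H}}\phi^m(\cdot,\pi^{\mathcal{B}})$ differs from the zero-sum case only by the pairwise terms $\phi^{mb}$ against fixed strategies, and these are linear in each honest team's own strategy. These linear terms would be absorbed as per-team ``external payoff'' terms in the Lyapunov function; this is the standard trick for FP with fixed opponents. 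I would then check that the resulting drift still yields the same $\tau\log|\mathcal{A}_{\mathcal{H}}|$ bound.
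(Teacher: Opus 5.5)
This is essentially the paper's proof. The paper builds a good event from Lemma~\ref{theorem_of_checking_performance} by a union bound (a geometric tail over $t\ge K$ for honest teams, a single-time bound at $t=K$ for Byzantine teams), freezes Byzantine beliefs and updates honest beliefs truthfully on that event, and then reruns the argument of Theorem~\ref{convergence_of_gap} on $\mathcal{H}$. Your fix for the non-zero-sum honest subgame is sound and fills a step the paper skips. By the full zero-sum identity, $\sum_{m\in\mathcal{H}}\phi^m$ is a constant plus terms each linear in a single honest team's strategy, and these cancel in the drift of the sum of entropy-regularized gaps.

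One bookkeeping caveat concerns the constant in $\delta_B$. Your event ranges over every supervision link, so the union bound gives $\delta_B$ with $|\mathcal{H}|$ and $|\mathcal{B}|$ replaced by the numbers of honest and Byzantine links. This matches the stated $\delta_B$ only when each team has a single supervisor. The paper sidesteps this by fixing one $s$, but that alone does not control $\pi_k^s$.
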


% Theorem \ref{honest_TNG} provides a probabilistic performance guarantee for Algorithm \ref{alg:robust_team_fp_with_decentralized_belief_learning}.
% The probability parameter $\delta_B$ quantifies the risk caused by imperfect identification, including false alarms for honest teams and missed identifications of Byzantine teams.
% Compared with Theorem \ref{convergence_of_gap}, the honest team-Nash gap admits the same type of temperature-dependent bound after the identified Byzantine teams are excluded.
% As $\delta_B \rightarrow 0$, the resulting upper bound on the honest TNG approaches the Byzantine-free TNG bound in Theorem \ref{convergence_of_gap} restricted to honest teams.
% The above results show that supervisors can identify Byzantine teams with vanishing error probabilities and that the honest TNG remains asymptotically bounded after excluding the identified teams. 
% Thus, the proposed Byzantine-resilient TOA provides an affirmative answer to Problem \ref{problem_2}.

% Theorem \ref{honest_TNG} extends the TNG convergence result of Theorem \ref{convergence_of_gap} to the Byzantine attacking setting. 
% Conditional on the high-probability event characterised by Lemma \ref{theorem_of_checking_performance}, the honest-team learning dynamics satisfy the same type of TNG bound as in the non-Byzantine case, restricted to honest teams. 
% This restriction uses the identified Byzantine-team information and therefore yields a sharper guarantee than treating all teams uniformly.

% \vspace{-3pt}
Theorem \ref{honest_TNG} extends the TNG result in Theorem \ref{convergence_of_gap} to the Byzantine attack setting.
Conditional on the high-probability event characterized by Lemma \ref{theorem_of_checking_performance}, the honest-team learning dynamics satisfy an honest-TNG bound of the same form as in the non-Byzantine case, but with the gap evaluated only over honest teams. This yields a sharper guarantee than treating all teams uniformly.

% \vspace{-3pt}
Together, BR-DTOA can identify Byzantine teams with vanishing error probability and the Byzantine resilience of BR-DTOA is further established by showing that, in the presence of Byzantine teams, it preserves the convergence of belief-estimation errors for honest teams and provides an honest-TNG guarantee.
These results affirmatively answer Problem \ref{problem_2}.

%% file: sections_IEEE/7_Experiments.tex
\section{Experimental Validation}
\label{sec:experiments}

% \vspace{-3pt}
In this section, we present numerical experiments to illustrate the theoretical results in Sections \ref{sec:algorithm_1} and \ref{sec:algorithm_2}, including the convergence, optimality, and Byzantine resilience of the proposed algorithms.
We also examine the scalability of DTOA.
We further compare DTOA with two baseline methods and numerically extend DTOA to the Markov decision process (MDP) setting.

\begin{figure}[t]
    \centering
%    \vspace{0em}
    \subfloat[DTOA.]{
        
        \includegraphics[width=0.47\linewidth,trim=0 5 10 10, clip]{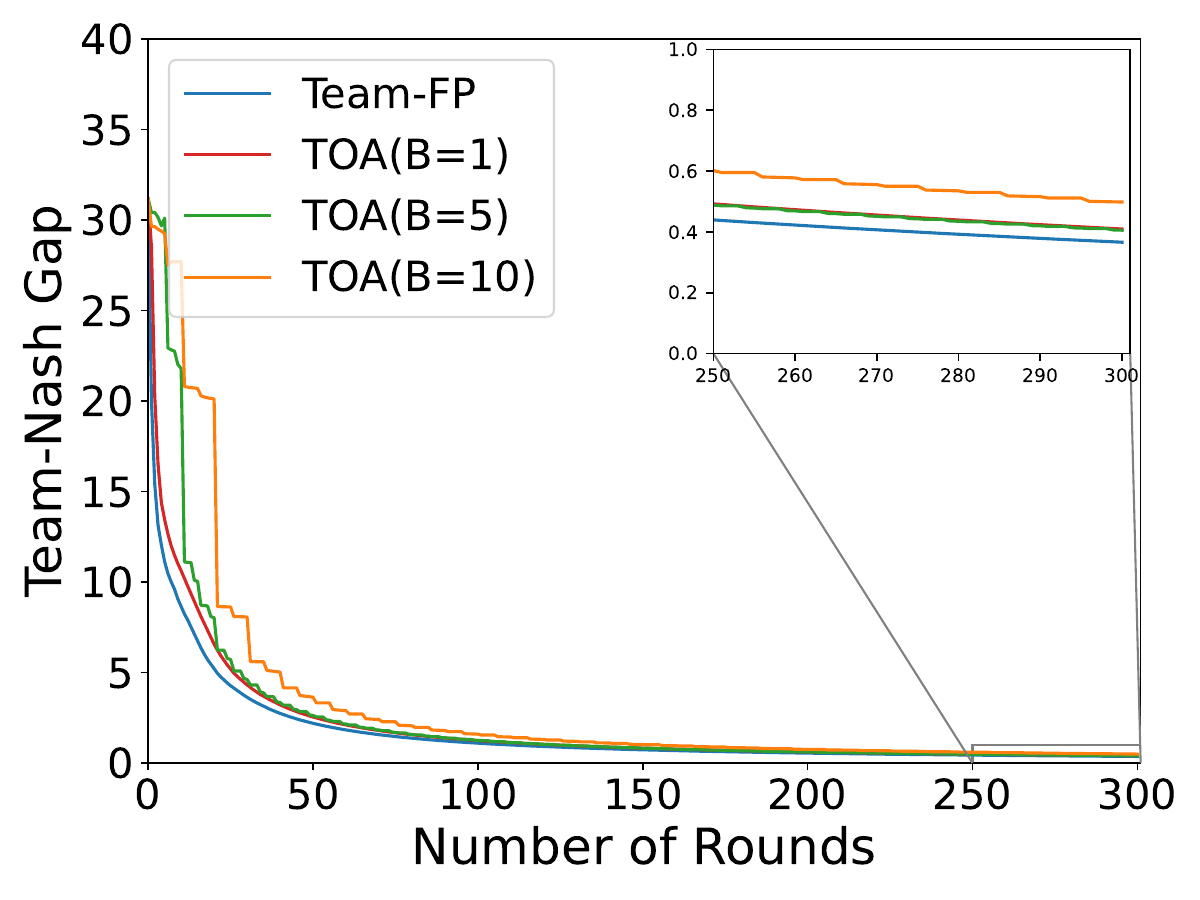}
        \label{experiment_2_1}
    }
    \subfloat[Independent DTOA.]{
        \includegraphics[width=0.47\linewidth,trim=0 5 10 10, clip]{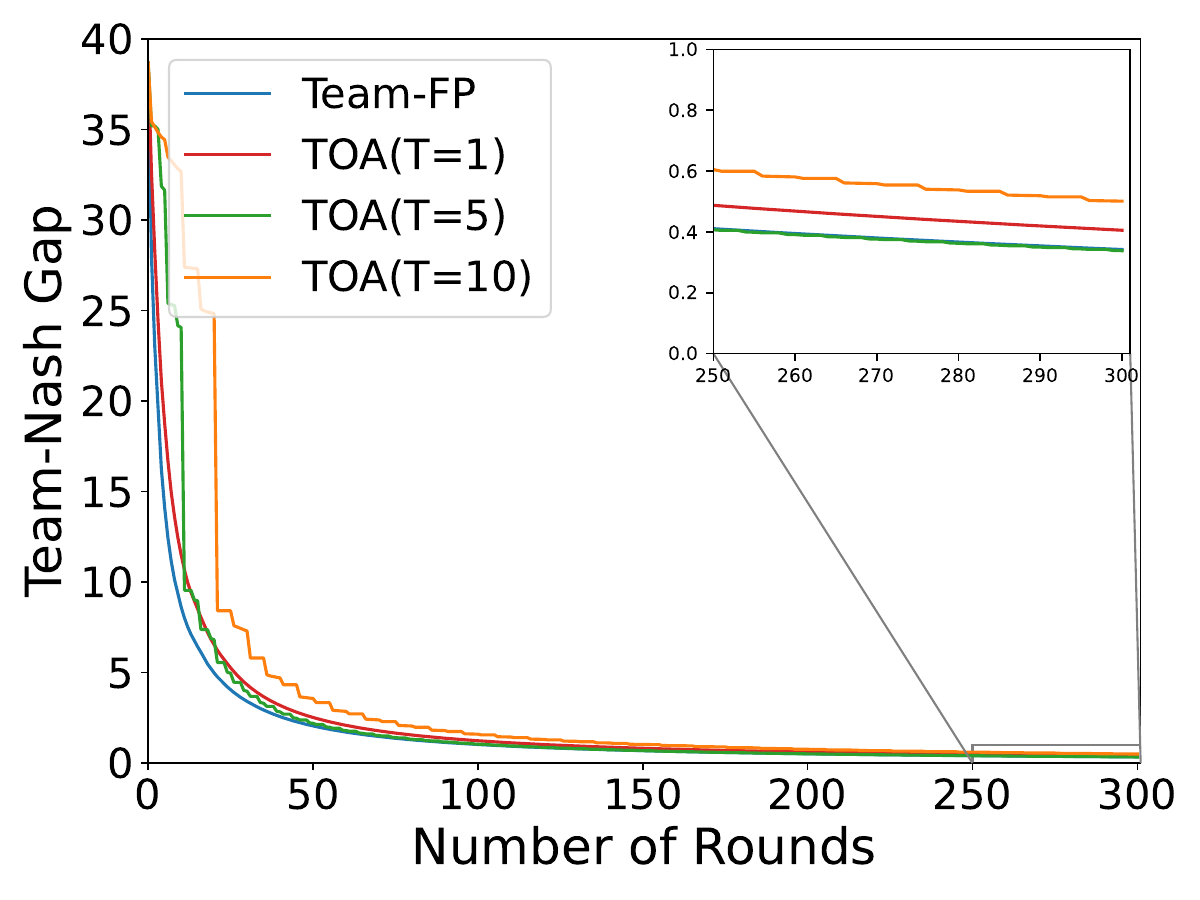}
        \label{experiment_2_2}
    }
    \caption{Convergence of TNG. }
    \label{fig:TNG}
\end{figure}

% \vspace{-8pt}

\begin{figure}[t]
\vspace{-0.1em}
    \centering
    \subfloat[More teams.]{
        \includegraphics[width=0.47\linewidth,trim=0 0 20 33, clip]{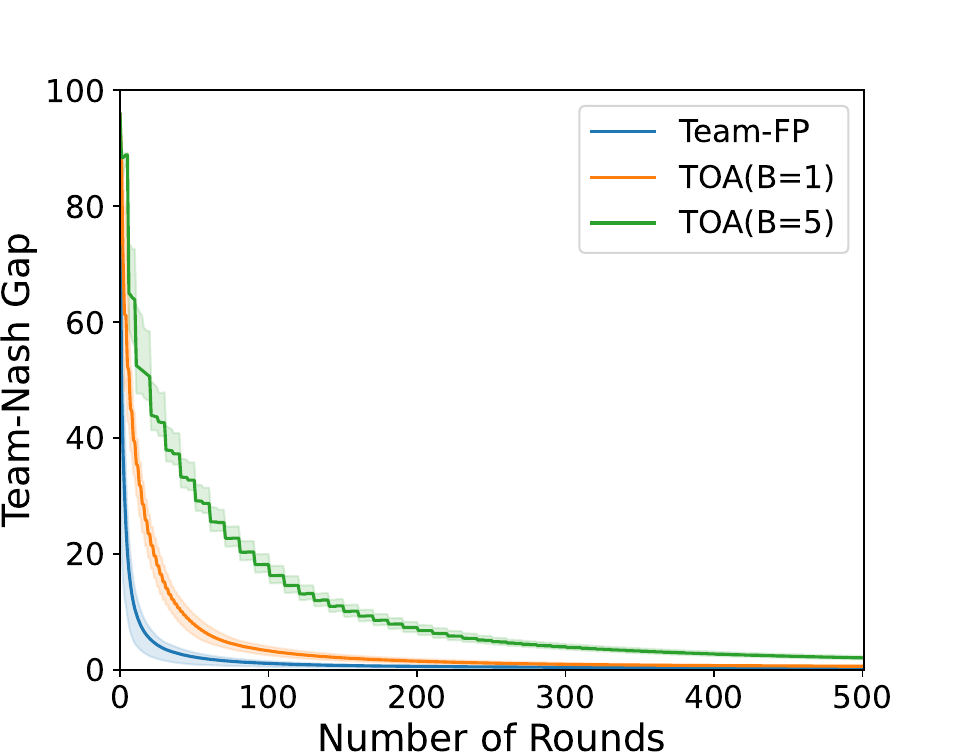}
        \label{experiment_4_1}
    }
    \subfloat[More agents per team.]{
        \includegraphics[width=0.47\linewidth,trim=0 0 20 33, clip]{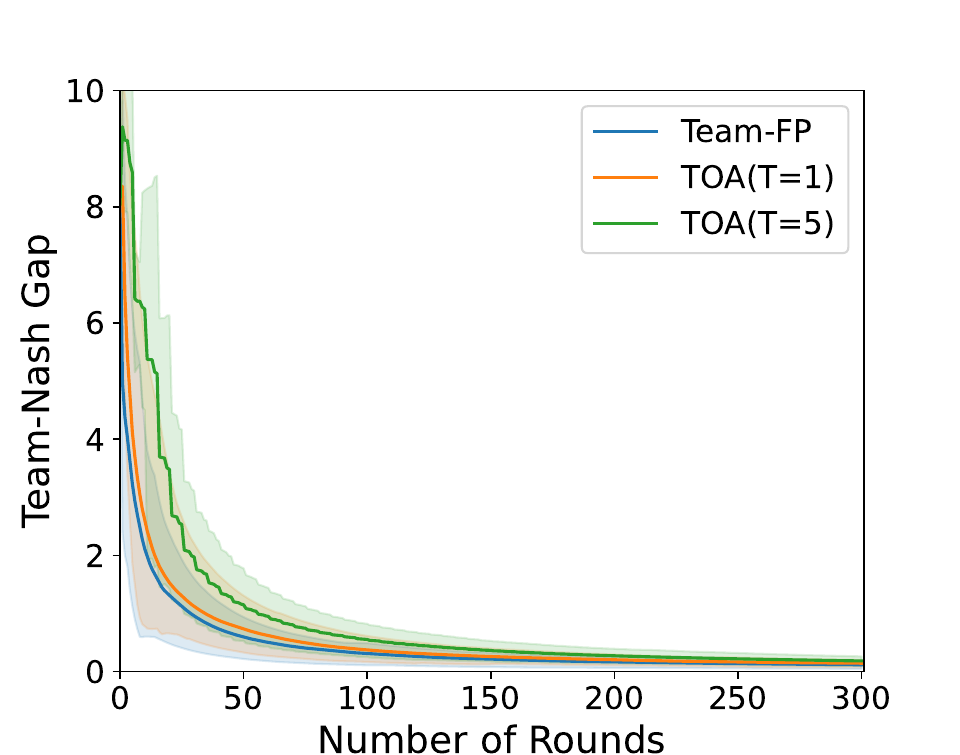}
        \label{experiment_4_2}
    }
    \caption{Scalability tests.}
    \label{fig:scalability}
\end{figure}

% \vspace{-3pt}
Unless otherwise specified, the experiments are conducted on a repeated ZSPTG with a supervisor network consisting of five teams, three supervisors, and two agents in each team.
Each agent has the binary action set $\{0,1\}$.
The utility and potential functions are chosen separately for different experiments to reflect the corresponding scenarios while preserving the ZSPTG structure.
The supervision relationship between teams and supervisors is given by $\mathcal T^1\!\!=\!\!\{1,2\}$, $\mathcal T^2\!\!=\!\!\{3,4\}$, and $\mathcal T^3\!\!=\!\!\{5\}$. 
We set the step size to $\alpha_k\!=\!1/(k\!+\!1)$ and the temperature parameter to $\tau\!=\!0.1$. 
The shaded regions in the figures indicate the variability across ten independent runs.
The code is available at \url{https://github.com/zjt-1229/team_game_with_supervisor_network}.

\subsection{Numerical Results for DTOA}
% \vspace{-3pt}
We first present numerical results for DTOA.
These experiments aim to illustrate the convergence behavior and TNG performance characterized by the theoretical analysis and to examine the scalability of DTOA.

\begin{figure*}[t]
    \centering

    \subfloat[Honest TNG.]{
    \begin{minipage}[t]{0.22\textwidth}
        \centering
        \raisebox{0.05em}{
            \includegraphics[width=\linewidth,trim=5 0 33 33, clip]{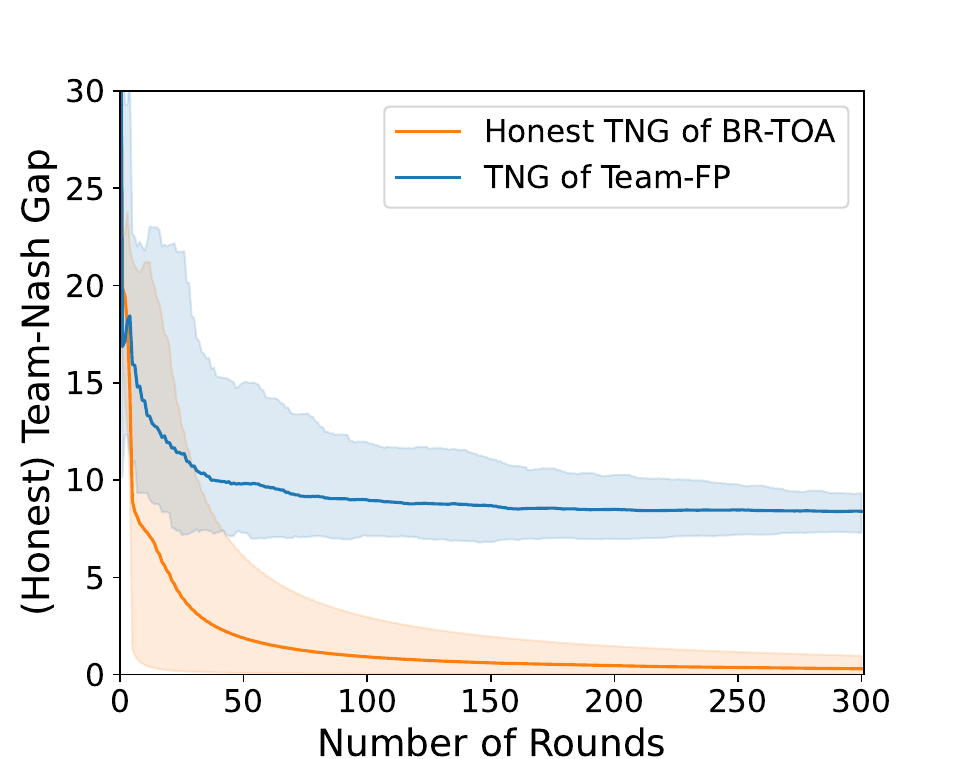}
        }
        \label{fig:experiment_3_1-1}
    \end{minipage}
    }
    \hfill
    \subfloat[Fail-signal frequencies.]{
    \begin{minipage}[t]{0.22\textwidth}
        \centering
        \raisebox{0.15em}{
            \includegraphics[width=\linewidth,trim=5 0 33 33, clip]{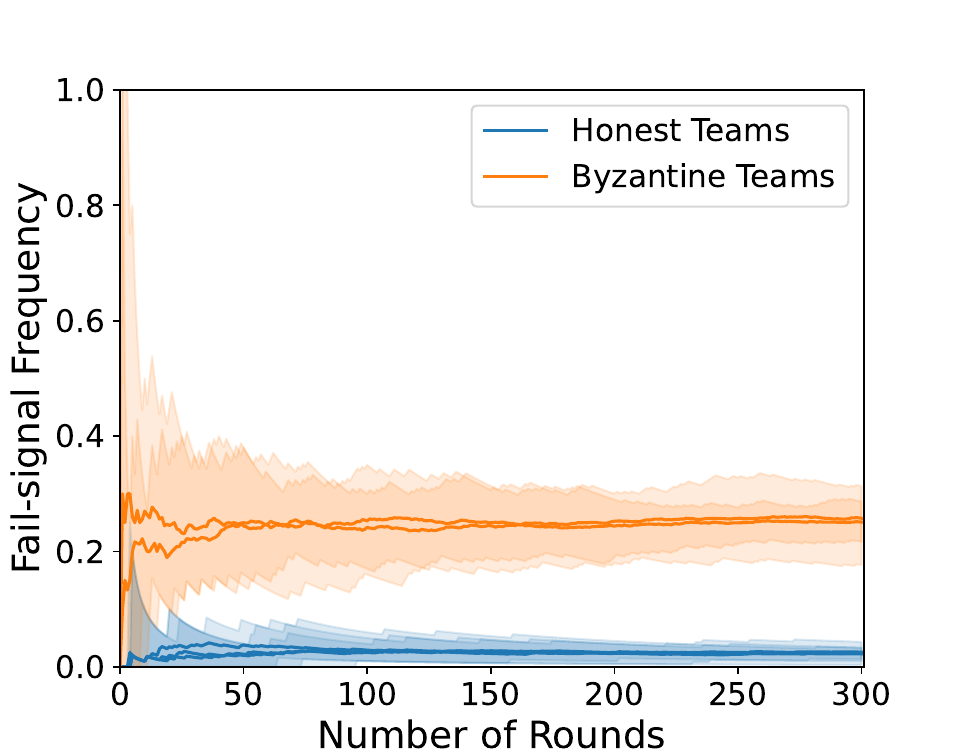}
        }
        \label{fig:experiment_3_1-2}
    \end{minipage}
    }
    \hfill
    \subfloat[Effect of $q$.]{
        \begin{minipage}[t]{0.25\textwidth}
            \centering
            \includegraphics[width=\linewidth,trim=35 65 40 40, clip]{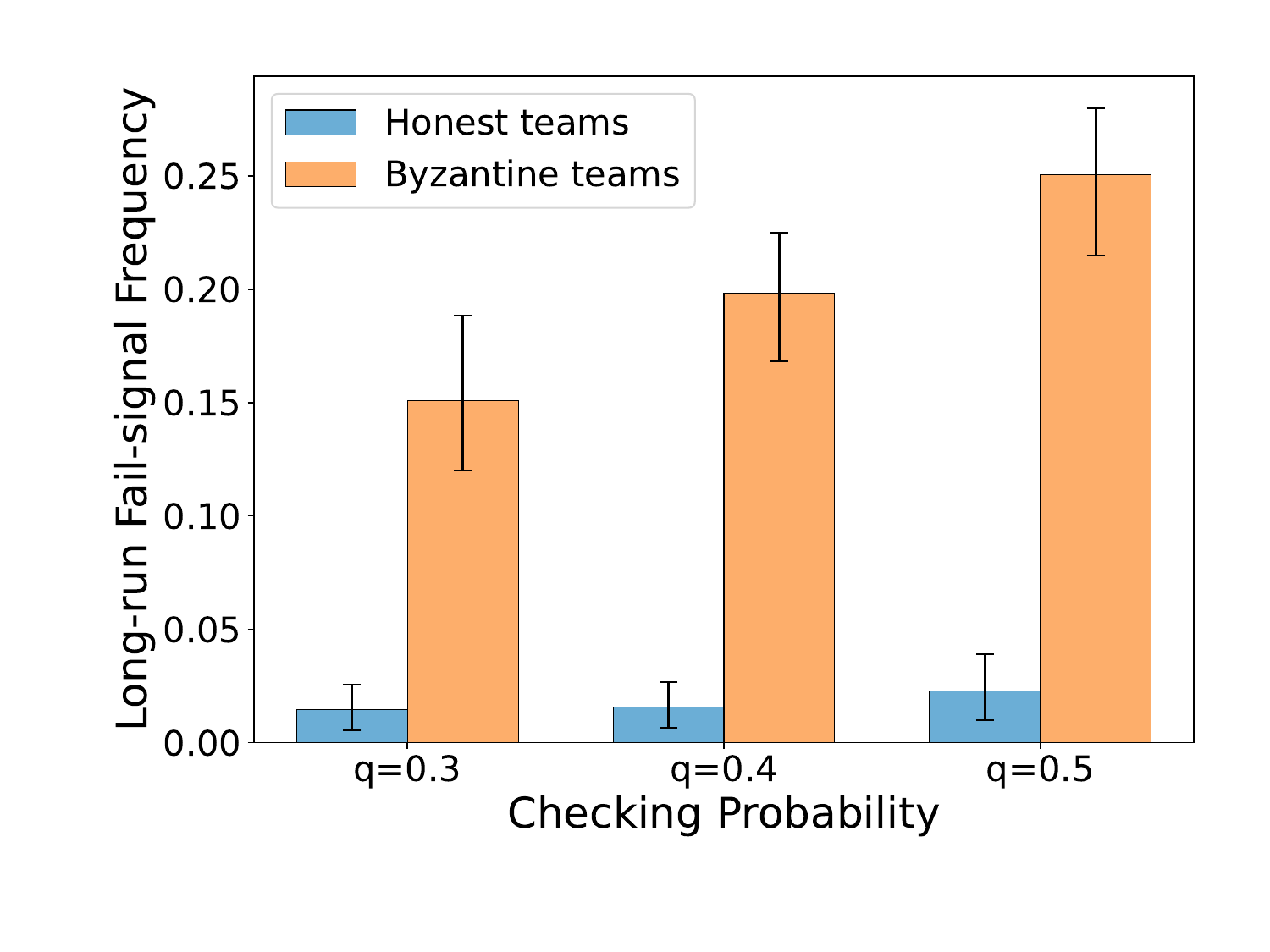}
            \label{fig:experiment_3_2}
        \end{minipage}
    }
    \hfill
    \subfloat[Effect of $p_{lie}$.]{
        \begin{minipage}[t]{0.25\textwidth}
            \centering
            \includegraphics[width=\linewidth,trim=35 65 40 40, clip]{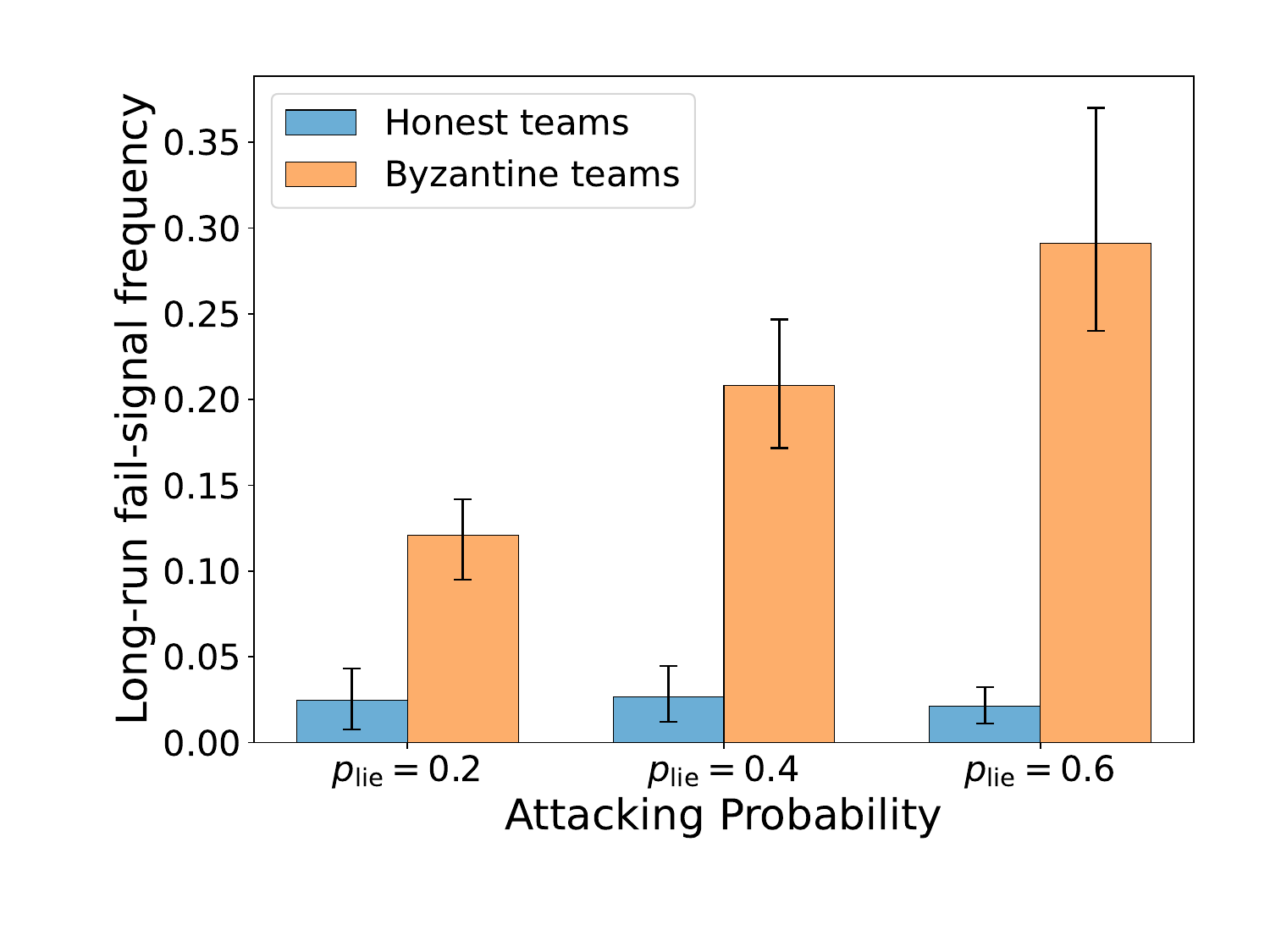}
            \label{fig:experiment_3_3}
        \end{minipage}
    }

    \caption{Performance of the BR-DTOA.}
    \label{fig:byzantine_results}
\end{figure*}

\begin{figure}[t]
\centering

\begin{minipage}[t]{0.48\linewidth}
    %\vspace{0.97em} 
    \centering
    \includegraphics[width=\linewidth,trim=0 10 5 0,clip]{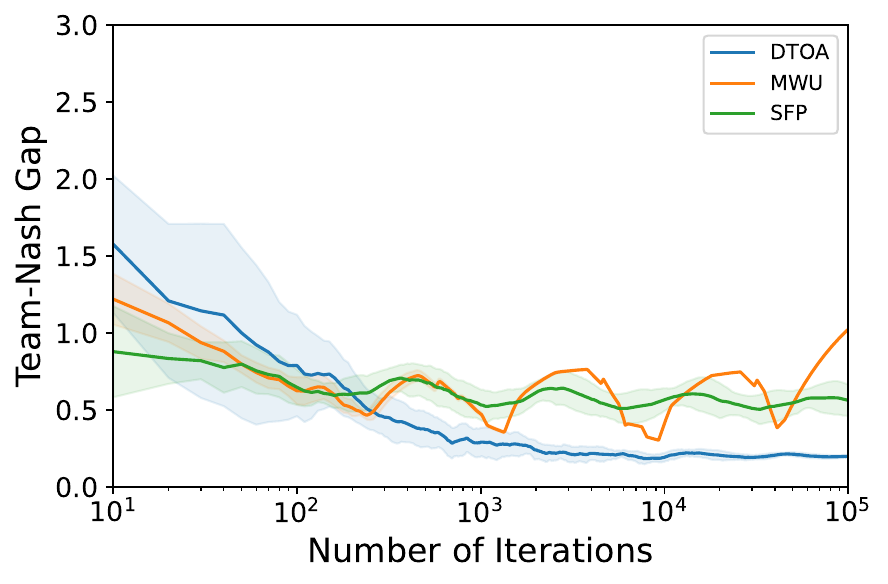}
    \captionof{figure}{Comparison.}
    \label{fig:comparison}
\end{minipage}
\hfill
\begin{minipage}[t]{0.48\linewidth}
    %\vspace{1.13em} 
    \centering
    \includegraphics[width=\linewidth,trim=0 -7 0 0,clip]{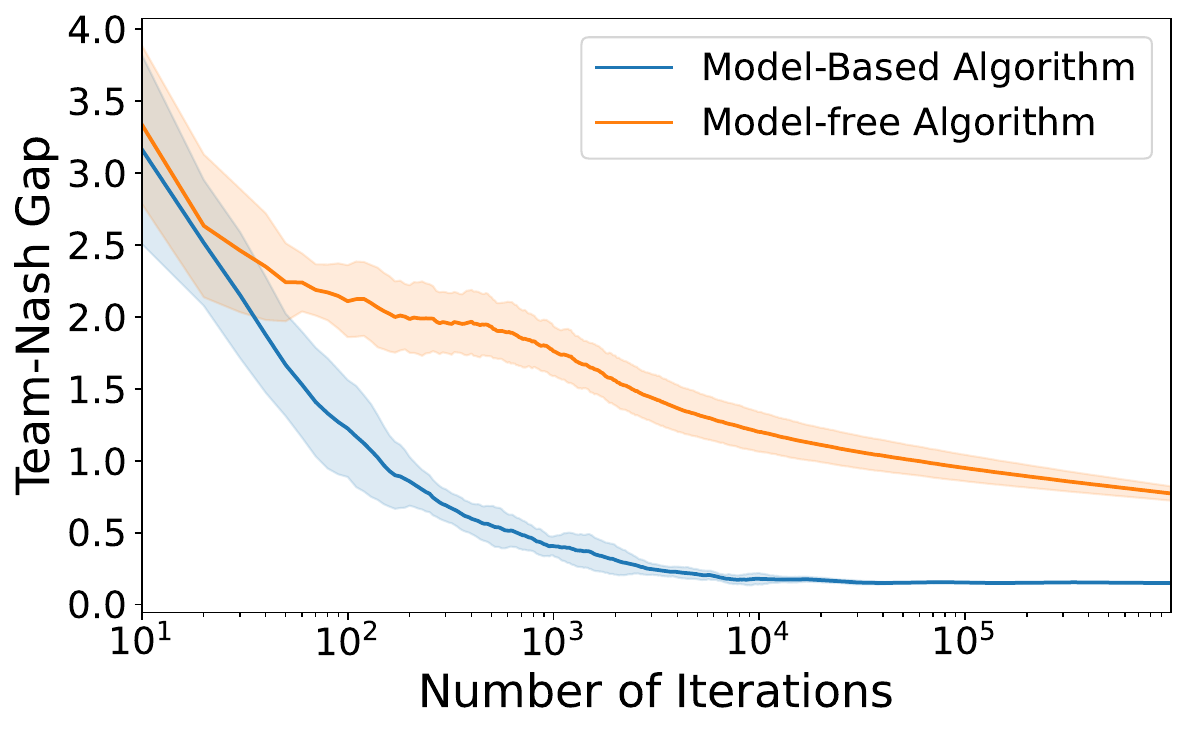}
    \captionof{figure}{MDP setting.}
    \label{fig:MDP}
\end{minipage}

\end{figure}

% \vspace{-3pt}
\textbf{Hypothesis I: convergence of belief-estimation errors.} 
Theorem \ref{belief_convergence_dynamic} shows that supervisors' belief-estimation errors converge to zero, with slower convergence when the number of supervisors $\vert \mathcal{S} \vert$ or the communication window length $B$ increases.
We examine the convergence of supervisors' belief-estimation errors in two instances: one with three supervisors and six teams, and the other with five supervisors and ten teams.
In both instances, each supervisor supervises two teams, each team contains two agents, and each agent has the binary action set $\{0,1\}$. 
Each agent randomly selects an action at every round, so the experiment isolates supervisor-network belief learning under random action sequences.
Fig. \ref{fig:belief_error} reports the results.
The errors in Fig. \ref{experiment_1_2} converge more slowly than those in Fig. \ref{experiment_1_1}, illustrating that increasing $\vert \mathcal{S} \vert$ slows the convergence rate.
Moreover, in both instances, a larger $B$ leads to slower convergence, as reflected in Fig. \ref{experiment_1_3} by the increased number of rounds required to reach a small error level.  
These observations are consistent with Theorem \ref{belief_convergence_dynamic}.

% \vspace{-3pt}
\textbf{Hypothesis II: upper bound on TNG.}
Theorem \ref{convergence_of_gap} implies that DTOA and independent DTOA converge to a near TNE in terms of the TNG, with an asymptotic bound comparable to that of team-FP \cite{donmez2024team}, while sparser supervisor communication can slow convergence.
We examine TNG convergence under DTOA and independent DTOA. 
The independent update probability in independent DTOA is set to $\delta=0.5$.
Fig. \ref{fig:TNG} shows that both DTOA and independent DTOA eventually attain TNGs comparable to those of the corresponding team-FP benchmarks, which is consistent with Theorem \ref{convergence_of_gap}.
The numerical results also show that the TNG curves under DTOA and independent DTOA converge more slowly than those under team-FP.
The slower convergence can be attributed to belief-estimation errors induced by distributed belief learning over the supervisor network.
A larger communication window length $B$ further slows convergence in both instances, suggesting that sparser supervisor communication delays the reduction of the TNG.

% \vspace{-3pt}
\textbf{Scalability tests.}
To examine the scalability, we consider two larger instances: one with increased numbers of supervisors and teams, and the other with an increased number of agents per team.
In the first instance, we consider eight supervisors and fifteen teams with two agents per team; in the second, we consider three supervisors and five teams with five agents per team. 
Fig. \ref{fig:scalability} shows that DTOA still ensures a small TNG in both larger instances, consistent with the small-scale experiments.

\subsection{Numerical Results for BR-DTOA}
% \vspace{-3pt}
We next present numerical results for BR-DTOA.
These experiments illustrate the implications of the theoretical analysis for Byzantine-team identification and honest-TNG convergence.

% We present numerical results for Byzantine-resilient DTOA to validate the empirically verifiable hypotheses on Byzantine-team identification and honest-TNG convergence suggested by the theoretical analysis.
% \vspace{-3pt}
\textbf{Hypothesis III: convergence of honest TNG.}
The theoretical results indicate that BR-DTOA identifies Byzantine teams through accumulated $\mathrm{fail}$-signal frequencies and provides an honest-TNG guarantee under Byzantine attacks.
We consider an instance with two Byzantine teams.
The parameters are set as $\eta_{FP}=\eta_{FN}=0.05$, $K=50$, $f=0.1$, $q=0.5$, and $p_{\mathrm{lie}}=0.5$.
To examine parameter sensitivity, we also vary $p_{\mathrm{lie}}$ and $q$ separately while keeping all other parameters unchanged.
Fig. \ref{fig:experiment_3_1-1} shows that the honest TNG decreases to a small value under Byzantine attacks, which is consistent with Theorem \ref{honest_TNG}.
% By contrast, team-FP does not include a Byzantine-resilient mechanism, and its TNG remains higher in this setting because Byzantine teams continue to affect the learning process.
By contrast, team-FP does not include a Byzantine-resilient mechanism and cannot distinguish Byzantine teams from honest teams. 
Consequently, the overall TNG under team-FP remains high in the Byzantine attack setting, indicating that team-FP fails to provide an effective performance guarantee for honest teams.
Fig. \ref{fig:experiment_3_1-2} shows that the $\mathrm{fail}$-signal frequencies of honest and Byzantine teams become separated after the burn-in period $K=50$.
Figs. \ref{fig:experiment_3_2} and \ref{fig:experiment_3_3} show that the $\mathrm{fail}$-signal frequencies of Byzantine teams increase as $p_{\mathrm{lie}}$ and $q$ increase, respectively. 
These observations are consistent with Lemma \ref{theorem_of_checking_performance}.

\subsection{Comparison With Baseline Methods}
% \vspace{-3pt}
We compare DTOA with multiplicative weights update (MWU) and smoothed fictitious play (SFP) on a two-team ZSPTG instance with one supervisor assigned to each team. 
For MWU and SFP, we relax the information constraint by allowing access to full opponent information, whereas DTOA relies on supervisor-based distributed belief learning. 
Fig. \ref{fig:comparison} shows the results. 
In this instance, DTOA attains a lower TNG after sufficiently many rounds, while MWU exhibits persistent oscillations and SFP stabilizes at a higher TNG level.
This suggests that DTOA can maintain a smaller TNG even under a more restrictive information structure.

% \begin{figure}[t]
%     \centering

%         \centering
%         \includegraphics[width=0.8\linewidth,trim=0 0 0 0, clip]{figures/comparison.pdf}
%         \caption{Comparison with MWU and SFP.}
%         \label{fig:comparison}

% \end{figure}

\subsection{Extension to an MDP Setting}
% \vspace{-3pt}
Although the theoretical analysis in this paper focuses on repeatedly played ZSPTGs, we also examine the proposed supervisor-network learning framework in an MDP setting.
The experiment considers three teams and three supervisors, with each team supervised by one supervisor. 
Fig. \ref{fig:MDP} shows the results. 
Both the model-based and model-free variants reduce the TNG through learning, suggesting that the proposed framework can be applied numerically in an MDP setting.
Compared with the corresponding results in \citep{donmez2024team}, the reduction in the TNG is slower, which may be attributed to the additional belief-estimation errors induced by the supervisor network.

% \begin{figure}[t]
%         \centering
%         \includegraphics[width=0.8\linewidth,trim=0 0 0 0, clip]{figures/MDP.pdf}
%         \caption{MDP setting.}
%         \label{fig:MDP}

% \end{figure}

%% file: sections_IEEE/8_Conclusion.tex
\section{Conclusions}
\label{sec:conclusion}
\vspace{-3pt}
In this paper, we studied team-orchestrating learning in repeatedly played ZSPTGs with a supervisor network under distributed belief information. 
We proposed the DTOA, which combines team-FP with distributed belief learning over a supervisor network, established the convergence of supervisors' belief-estimation errors, and showed that the induced learning dynamics converged to a near TNE with a small TNG.
We further considered a Byzantine attack setting, where Byzantine teams could misreport their joint actions and developed the BR-DTOA by integrating DTOA with a supervisor-based identification mechanism.
For BR-DTOA, we established the convergence of supervisors' belief-estimation errors for honest teams and derived an honest TNG guarantee. 
Numerical simulations illustrated the convergence of the proposed algorithms and the effectiveness of the Byzantine-resilient mechanism.

%Future work may proceed in three directions. 
%One direction is to extend the framework to mean-field team games with infinite number of agents. 
%Another direction is to study continuous or more general action spaces. 
%It is also of interest to develop theoretical guarantees for the MDP setting, which has been explored numerically in this paper.

%% file: appendices_IEEE/proof_of_belief_convergence.tex
\subsection{Proof of Theorem \ref{belief_convergence_dynamic}}
\label{proof_of_believe_convergence}
We prove Theorem \ref{belief_convergence_dynamic} by first establishing a technical result on the supervisor networks $\{SN_k\}_{k\ge0}$ and then proving the convergence of the belief-estimation errors.

\begin{lemma}
\label{dynamic_path_lemma}
Let $W_k \in \mathbb{R}^{\vert \mathcal{S} \vert \times \vert \mathcal{S} \vert}$ denote the gossip matrix associated with the supervisor network at round $k$, where 
% \vspace{-3em}
\begin{align*}
    W_k(s,s')\!=\!
\begin{cases}
   \frac{SN_k(s,s')}{\Vert SN_k(s,\cdot) \Vert_1}, &\!\!\! \text{if } \Vert SN_k(s,\cdot) \Vert_1\!>\!0,\\
   \mathbf{1} \{ s'\!=\!s \}, & \!\!\!\text{if } \Vert SN_k(s,\cdot) \Vert_1\!=\!0.
\end{cases}
\end{align*}

% \vspace{-2em}
If Assumption \ref{dynamic_assumption} holds, there exists some $L \in \mathbb{N}_+$ such that, for any supervisors $s$, $s'$ and any $k \ge 0$, the following statement holds:
%\begin{align*}
    $\big[W_{k+L} \cdots W_k\big](s,s')\ge \left[\frac{1}{\vert \mathcal{S}\vert}\right]^L$.
%\end{align*}
\end{lemma}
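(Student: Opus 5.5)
The plan is a standard walk-counting argument for products of nonnegative stochastic matrices; the bulk of the work is bookkeeping over the communication windows of Assumption~\ref{dynamic_assumption}.

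\textbf{Step 1: entry-wise bounds for a single $W_k$.} Each $W_k$ is row-stochastic and nonnegative. Every positive entry satisfies $W_k(s,s')\ge 1/\Vert SN_k(s,\cdot)\Vert_1\ge 1/\vert\mathcal S\vert$, because a row contains at most $\vert\mathcal S\vert$ ones. I will also need $W_k(s,s)>0$ for every $s$ and $k$.
\begin{itemize}
\item If $s$ has no neighbors at round $k$, then $W_k(s,s)=1$ by definition.
\item Otherwise I adopt the convention $SN_k(s,s)=1$, i.e.\ a supervisor averages its own belief together with its neighbors'.
\end{itemize}
This self-loop convention must be made explicit. Without it the lemma can fail: on a fixed bipartite graph with $B=1$, the products of the $W_k$ are periodic and have zero entries for every $L$.

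\textbf{Step 2: reduce to a walk.} Expanding the product gives
\[
[W_{k+L}\cdots W_k](s,s')=\sum_{t_1,\dots}W_{k+L}(s,t_1)W_{k+L-1}(t_1,t_2)\cdots W_k(t_\cdot,s').
\]
Every term is nonnegative, so the entry is at least the weight of any single chain $s\to t_1\to\cdots\to s'$ whose consecutive pairs are edges or self-loops of the corresponding $G_j$. Each factor of such a chain is at least $1/\vert\mathcal S\vert$. It therefore suffices to exhibit, for every $s,s'$ and $k$, one such chain of the required length.

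\textbf{Step 3: construct the chain.} Since every $SN_j$ is symmetric, I can equivalently build a time-respecting walk forward in time from $s'$ at round $k$ to $s$ at round $k+L$. Let $R_j$ be the set of supervisors reachable from $s'$ by a time-respecting walk using the graphs $G_k,\dots,G_{j}$.
\begin{itemize}
\item Because of the self-loops, $R_j$ is nondecreasing in $j$.
\item Within at most $B-1$ rounds the window reaches a boundary $nB$. The following windows $[nB,(n+1)B-1]$ are complete.
\item In each complete window the union graph $G_{(n)}$ is connected, so some edge joins $R$ to its complement at some round of that window. Hence $\vert R\vert$ grows by at least one per complete window; one waits on self-loops until that edge becomes active.
\item After $\vert\mathcal S\vert-1$ complete windows, $R$ is all of $\mathcal S$.
\end{itemize}
This uses at most $(B-1)+(\vert\mathcal S\vert-1)B<(\vert\mathcal S\vert+1)B$ rounds. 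Padding the walk with self-loops makes it reach $s$ at exactly round $k+L$, for the uniform choice $L=(\vert\mathcal S\vert+1)B$. This choice is consistent with the exponent appearing in $\rho$ in Theorem~\ref{belief_convergence_dynamic}.

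\textbf{Main obstacle.} The combinatorial growth argument is routine. The delicate points are two. The first is the self-loop/aperiodicity issue noted in Step 1. The second is an off-by-one in the exponent. The product $W_{k+L}\cdots W_k$ has $L+1$ factors, so the chain argument directly yields $(1/\vert\mathcal S\vert)^{L+1}$. To match the stated bound $(1/\vert\mathcal S\vert)^{L}$, I would read the product as having $L$ factors (equivalently, relabel $L$). This relabeling is harmless for the downstream contraction argument, which only needs some uniform positive lower bound $\epsilon$ on all entries over windows of fixed length. That bound yields the contraction factor $1-\epsilon$ per window and hence $\rho<1$.
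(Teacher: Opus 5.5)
Your proof is correct and follows the paper's own route. You grow the set reachable from one endpoint by at least one node per connected window, using positive diagonals for monotonicity, take $L=(|\mathcal S|+1)B$, and lower-bound the entry by a single path whose factors are each at least $1/|\mathcal S|$. You are also right that the paper silently uses $W_l(s',s')>0$, i.e.\ self-loops, and glosses over the order of the product, which your time-reversal via the symmetric sparsity pattern handles cleanly.

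The off-by-one needs no relabeling, and the paper's own path $X_k,\dots,X_{k+L}$ likewise uses only $L$ transitions. Prove the bound for $P=W_{k+L-1}\cdots W_k$. Then $[W_{k+L}P](s,s')=\sum_t W_{k+L}(s,t)P(t,s')\ge\min_t P(t,s')$ because $W_{k+L}$ is row-stochastic, so the extra factor costs nothing and the lemma holds exactly as stated.
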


% \vspace{-1em}
\begin{proof}
We consider a setting in which $W_k(s,s')$ represents the probability of $s$ reaching $s'$ at round $k$; therefore, $W_k$ is a transition matrix.
A path on the dynamic supervisor network $\{SN_k\}_{k \ge 0}$ is defined as $\mathcal{P}=(X_k \in \mathcal{S})_{k \ge 0}$ such that $W_k(X_k,X_{k+1})>0$.
Given $k \ge0$ and $s =X_k$, there exists $n \in \mathbb{N}$ such that $nB \le k < (n+1)B$, and we have $\big [\!W_{(n+1)B} \cdots W_{k}\!\big] \!(s,s) \!\!\ge\!\! \left[\!\frac{1}{\vert \mathcal{S} \vert} \!\right]^{(n\!+\!1)\!B\!-\!k}\! \!\ge\!\! \left[\!\frac{1}{\vert \mathcal{S} \vert} \!\right]^B$.

% \vspace{-0.5em}
Let $\mathcal{N}^s(l)$ denote the supervisors that can be reached with positive probability at time step $(\!n\!+\!1\!)\!B\!+\!l$.
It follows that $\mathcal{N}^s\!(l) \!\!\subseteq \!\!\mathcal{N}^s(l\!+\!1)$ since $W_l(\!s'\!,s')\!\!>\!\!0$ for all $s' \!\!\in\!\! \mathcal{N}^s(l)$.
If $\mathcal{N}^s\!(n'\!B) \!\!\neq\!\! \mathcal{S}$, there exists $n'\!B\! \le\!\! l'\!\! <\!\! (n'\!+\!1)B$, $s \!\!\in\!\! \mathcal{N}^s\!(n'\!B)$ and $s' \!\!\notin\!\! \mathcal{N}^s(n'\!B)$ such that $\left\{ \!s,s' \!\right\} \!\!\in\! \!E_l$ due to Assumption \ref{dynamic_assumption}.
Hence, we obtain $\mathcal{N}^s\!(n'\!B)\!\! \subsetneq \!\!\mathcal{N}^s(l') \!\!\subseteq\!\! \mathcal{N}^s\big(\!(\!n'\!+\!1\!)\!B \!\big)$.
Then we can conclude that $\mathcal{N}^s\!(\vert \mathcal{S} \vert B)\!\!=\!\!\mathcal{S}$.
Let $L\!\!=\!\!(\vert \mathcal{S} \vert\!+\!1)\! B$, then we have
$
\mathbb{P}\!\big( \!X_{k\!+\!L}\!\!=\!\!s' \big\vert \!X_k\!\!=\!\!s \!\big)\!\! > \!\!0
$.
By the definition of $W_{l'}$, $W_l(s,s')\!\!>\!\!0$ implies $W_{l'}(s,s')\!\!\ge\!\! \frac{1}{\vert \mathcal{S} \vert}$.
It follows that there exists at least one path satisfying $\mathbb{P}\big( \!X_{k+L}\!\!=\!\!s' \big\vert\! X_k\!\!=\!\!s \!\big)\!\! > \!\!0$, 
namely, 
$\mathcal{P}_{\!s,s'\!,L}\!\!=\!\!\big(\!X_k\!\!=\!\!s,\!\dots\!,X_{l'}\!\!=\!\!s,X_{l'\!+\!1}\!\!=\!\!s'\!,\!\dots\!,X_{k\!+\!L}\!\!=\!\!s'\big)$.
Thus, we have $\mathbb{P}\big(\! X_{k\!+\!L}\!\!=\!\!s' \!\big\vert\! X_k\!\!=\!\!s \!\big) \!\!\ge\!\! \mathbb{P} \!\left(\mathcal{P}_{\!s,s'\!,L}\right) \!\!\ge\!\! \left[\!\frac{1}{\vert \mathcal{S} \vert} \!\right]^{\!L}$.
This completes the proof.
\end{proof}
Now we can prove Theorem \ref{belief_convergence_dynamic}.
We first define the error of $\pi_k^{m,s}$ as the difference between $\pi_k^{m,s}$ and $\pi_k^{m}$:
% \vspace{-3em}
\begin{align}
\label{supervisor_belief_error}
e_k^{m,s}=\pi_k^{m,s}-\pi_k^m,\quad \forall m \in \mathcal{T},\ \forall s \in \mathcal{S},
\end{align}
% \vspace{-2em}
and then we analyze the decay rate of $e_k^{m,s}$.
Given a team $m$, we introduce the following notation: $D^m=\mathrm{diag}\big(ST(m,1),\dots,ST(m,\vert \mathcal{S} \vert)\big)$.
Then we obtain
% \vspace{-3em}
\begin{align}
\label{supervisor_belief_update}
\left (\!\!\begin{array}{c}
\pi_{k+1}^{m,1}\\[-3pt]
\vdots\\[-3pt]
\pi_{k+1}^{m,\vert \mathcal{S} \vert}\\
\end{array}\!\!\right)\!\!=\!D^m\!\!
\left (\!\!\begin{array}{c}
\pi_{k+1}^{m}\\[-3pt]
\vdots\\[-3pt]
\pi_{k+1}^{m}\\
\end{array}\!\!\right)\!\!+\!(I\!-\!D^m) W_k \!
\left (\!\!\begin{array}{c}
\pi_{k}^{m,1}\\[-3pt]
\vdots\\[-3pt]
\pi_{k}^{m,\vert \mathcal{S} \vert}\\
\end{array}\!\!\right).
\end{align}
% \vspace{-2em}
Since $\Vert W_k(s,\cdot) \Vert_1>0$, combining (\ref{supervisor_belief_error}) and (\ref{supervisor_belief_update}), we obtain
% \vspace{-3em}
\begin{align}
&
\left (\begin{array}{c}
e_{k+1}^{m,1}\\[-3pt]
\vdots\\[-3pt]
e_{k+1}^{m,\vert \mathcal{S} \vert}\\
\end{array}\right)
=
\left (\begin{array}{c}
\pi_{k+1}^{m,1}\\[-3pt]
\vdots\\[-3pt]
\pi_{k+1}^{m,\vert \mathcal{S} \vert}\\
\end{array}\right)
-
\left (\begin{array}{c}
\pi_{k+1}^{m}\\[-3pt]
\vdots\\[-3pt]
\pi_{k+1}^{m}\\
\end{array}\right)\nonumber\\ 
% =&(I\!-\!D^m) W_k \!\!
% \left[\!\left (\!\begin{array}{c}
% e_{k}^{m,1}\\
% \vdots\\
% e_{k}^{m,\vert \mathcal{S} \vert}\\
% \end{array}\!\right)\!\!+\!\!\left (\!\begin{array}{c}
% \pi_k^m\\
% \vdots\\
% \pi_k^m\\
% \end{array}\!\right)\!\right]\!+\!(D^m\!-\!I)\!\!
% \left (\!\begin{array}{c}
% \pi_{k+1}^{m}\\
% \vdots\\
% \pi_{k+1}^{m}\\
% \end{array}\!\right)\nonumber\\
=&(I\!-\!D^m) W_k \!
\left (\!\!\begin{array}{c}
e_{k}^{m,1}\\[-3pt]
\vdots\\[-3pt]
e_{k}^{m,\vert \mathcal{S} \vert}\\
\end{array}\!\!\right) \!+\!(I\!-\!D^m)\!\!
\left (\!\!\begin{array}{c}
\pi_k^m-\pi_{k+1}^{m}\\[-3pt]
\vdots\\[-3pt]
\pi^m_k-\pi_{k+1}^{m}\\
\end{array}\!\!\right)\!.
\label{matrix_update_of_belief_error}
\end{align}
% \vspace{-2em}

Let $e_k^m \!\!\triangleq\!\!\! \left (\!\!\!\begin{array}{c}
e_{k}^{m,1}\\[-5pt]
\vdots\\[-5pt]
e_{k}^{m,\vert \mathcal{S} \vert}\\
\end{array}\!\!\!\right)$, $b_{k+1}^m \!\!\triangleq\!\! (I-D^m)\!\!
\left (\!\!\!\begin{array}{c}
\pi_{k}^m\!-\!\pi_{k+1}^{m}\\[-5pt]
\vdots\\[-5pt]
\pi^m_{k}\!-\!\pi_{k+1}^{m}\\
\end{array}\!\!\!\right)$ and $B^m_k \!\!\triangleq\!\! (I\!-\!D^m)W_k$.
Then we rewrite (\ref{matrix_update_of_belief_error}) as $e_{k+1}^m\!=\!B^m_k e_k^m\!+\!b_{k+1}^m$.
% \begin{align*}
% % \label{supervisor_belief_error_update}
%     e_{k+1}^m=B^m_k e_k^m+b_{k+1}^m.
% \end{align*}
Using the above recursion, we obtain $e_k^m\!=\!\mathcal{B}^m(k,0)e_0^m\!+\!\sum_{l=0}^{k-1}\mathcal{B}^m(k,l\!+\!1)b_l^m$, where $\mathcal{B}^m(k,l)\!=\!B^m_{k-1} \cdots B_{l}^m$.
By reordering the indices, $W_k^m$ and $D^m$ can be written as 
% \vspace{-2.8em}
% \begin{align*}
%     e_k^m=\mathcal{B}^m(k,0)e_0^m+\sum_{l=0}^{k-1}\mathcal{B}^m(k,l+1)b_l^m,
% \end{align*}
% \vspace{-1.8em}
% \vspace{-3em}
\begin{align*}
    W^m_k=
\left (\begin{array}{cc}
W_{k,U^mU^m} & W_{k,U^mO^m}\\
W_{k,O^mU^m} & W_{k,O^mO^m}\\
\end{array}\right),\ \tilde{D}^m=
\left (\begin{array}{cc}
0 & 0\\
0 & I\\
\end{array}\right).
\end{align*}
% \vspace{-2em}
Then we have
% \vspace{-3em}
\begin{align*}
    &B_k^m\!\!\! =\!\!\! \left (\!\!\!\begin{array}{cc}
    W_{k,U^m\!U^m} & \!\!W_{k,U^m\!O^m}\\
    0 & 0\\
    \end{array}\!\!\!\!\right)\!\!,
    \mathcal{B}^m\!(\!k,\!l)\!\!=\!\!\! \left (\!\!\!\begin{array}{cc}
    B^m_{U\!U}\!(\!k,\!l) & \!\!B^m_{U\!O}\!(\!k,\!l)\\
    0 & 0\\
    \end{array}\!\!\!\right)\!\!,
\end{align*}
% \vspace{-2em}
where $B^m_{U\!U}\!(k,l)\!\!=\!\!W_{k\!-\!1,U^m\!U^m} \!\cdots\! W_{l,U^m\!U^m}$ and $B^m_{UO}(k,l)\!\!=\!\! W_{k-1,U^mO^m} \cdots W_{l,U^mO^m}$.
We denote $\Vert\! W^m\!(k,\!l) \!\Vert_{\infty}\!\! =\!\!R^m\!(\!k,\!l)\!\!=\!\!\mathop{\max}_{s \in U^m} \!\!\sum_{s' \!\in\! U^m}\!\!\left[W^m\!(\!k,\!l)\right]\!(\!s,\!s')$ and $W^m(k,l)\!\!=\!\!W_{\!k\!-\!1,U^{\!m}\!U^{\!m}} \!\cdots\! W_{\!l,U^{\!m}\!U^{\!m}}$.
By Lemma \ref{dynamic_path_lemma}, we have $R^m\!(l\!+\!L,l) \!\!\le\!\! 1\!\!-\!\!\alpha$ with $\alpha\!\!=\!\!\!\left[\!\frac{1}{\vert \mathcal{S} \vert} \!\right]^{\!\!L}$.
Thus, we get $ R^m\!(k,l)\!\! \le\!\! \alpha^{[\frac{k\!-\!l}{L}]} \!\!\le\!\! C'\! \rho^{k\!-\!l}$, where $C'\!\!=\!\!(1\!\!-\!\!\alpha)^{\!-\!1}\!\!>\!\!1$ and $\rho\!\!=\!\!(1\!\!-\!\!\alpha)^{\!\frac{1}{L}}\!\!<\!\!1$.
For any $x\!\!=\!\!(x_{U^m},x_{O^m})^T \!\!\in\!\! \mathbb{R}^m$, we get $B^m_l x\!\!=\!\! (y_{U^m},0)^T$.
Then we obtain $\mathcal{B}^m(k,l)x\!\!=\!\!(W^m(k,l\!+\!1)y_{U^m},0)^T$ and have
% \vspace{-3em}
\begin{align}
\label{1}
\Vert \mathcal{B}^m(k,l)x \Vert_{\infty} &\le \Vert W^m(k,l+1)\Vert_{\infty} \Vert y_{U^m} \Vert_{\infty} \nonumber\\[-3pt]
&\le C'\!\rho^{k-l-1}\Vert y_{U^m} \Vert_{\infty}.
\end{align}

% \vspace{-2em}
From $y_{U^m}\!\!=\!\!W_{l,U^mU^m}x_{U^m}\!\!+\!\!W_{l,U^mO^m}x_{O^m} \!\!\triangleq\!\! M(l)x$ and $\Vert\! M(l) \!\Vert_{\infty} \!\!\le\!\! 1$, we get that $\Vert y_{U^m} \Vert_{\infty} \!\!\le\!\! \Vert x \Vert_{\infty}$, combing which and (\ref{1}), we have $\Vert \mathcal{B}^m(k,l) \!\Vert_{\infty} \!\!\le \!\!C''\! \rho^{k-l}$.
Hence, we obtain
% \vspace{-3em}
\begin{align*}
    &\Vert e_k^m \Vert_{\infty}
    =\Vert \mathcal{B}^m(k,0)e_0^m+\sum_{l=0}^{k-1}\mathcal{B}^m(k,l+1)b_l^m\Vert_{\infty} \\[-5pt]
    \le& C'' \rho^k \Vert e_0^m \Vert_{\infty}+\sum_{l=1}^{k-1}C''\rho^{k-l-1} \Vert b_l^m\Vert_{\infty}\\[-3pt]
    =& C'' \!\!\rho^k \Vert e_0^m\! \Vert_{\infty}\!\!+\!\!\!\sum_{l=1}^{[\frac{k}{2}]-1}\!\!C''\!\rho^{k-l-1} \Vert b_l^m\!\Vert_{\infty}\!\! +\!\!\!\sum_{l=[\frac{k}{2}]}^{k-1}\!\!C''\!\rho^{k-l-1} \!\Vert b_l^m\!\Vert_{\infty}\\[-3pt]
    \le& C''\!\! \rho^k\! \Vert e_0^m \!\Vert_{\infty}\!\!+\!C''\!\!\!\!\sum_{l=k-[\frac{k}{2}]}^{\infty}\!\!\!\rho^{l} \!\mathop{sup}_{l \ge0}\!\Vert b_l^m\!\Vert_{\infty}\!\!+\!C''\!\sum_{l=0}^{\infty}\!\rho^{l} \!\!\mathop{sup}_{l \ge [\frac{k}{2}]}\!\Vert b_l^m\!\Vert_{\infty}\\[-5pt]
    \le&  C''\!\!\rho^k \!\Vert e_0^m \!\Vert_{\infty}\!\!+\!C''\!\mathop{sup}_{l \ge0}\!\Vert b_l^m\!\Vert_{\infty}  \frac{\rho^{k-[\frac{k}{2}]}}{1-\rho}\!\!+\!C''\!\!\mathop{sup}_{l \ge [\frac{k}{2}]}\!\!\Vert b_l^m\!\Vert_{\infty}  \frac{1}{1-\rho}\\[-4pt]
    \le&  C'' \!\rho^k\! \Vert e_0^m \!\Vert_{\infty}\!\!+\!C''\!\mathop{sup}_{l \ge0}\!\Vert b_l^m\!\Vert_{\infty} \frac{\rho^{\frac{k}{2}}}{1-\rho}\!\!+\!C''\!\!\mathop{sup}_{l \ge [\frac{k}{2}]}\!\Vert b_l^m\!\Vert_{\infty} \frac{1}{1-\rho}\\[-4pt]
    =&  C''\! \rho^k\! \Vert e_0^m \!\Vert_{\infty}\!\!+\!C''\!\!\mathop{sup}_{l \ge0}\alpha_{l-1}\!\Vert (\!I\!\!-\!\!D^m\!)(\underline{a}_{l-1}^m\!\!-\!\!\pi_{l-1}^m\!) \Vert_{\infty} \frac{\rho^{[\frac{k}{2}]}}{1-\rho}\\[-5pt]
    &+\frac{C''}{1-\rho}\mathop{sup}_{l \ge [\frac{k}{2}]}\alpha_{l-1}\Vert (I-D^m)(\underline{a}_{l-1}^m-\pi_{l-1}^m) \Vert_{\infty}\\[-5pt]
    \le& C'' \!\rho^k \!\Vert e_0^m \!\Vert_{\infty}\!\!+\!\!2C''\alpha_{0}\cdot \frac{\rho^{[\frac{k}{2}]}}{1-\rho}+\frac{2C''}{1-\rho}\cdot \alpha_{[\frac{k}{2}]-1}\\[-5pt]
    =&O(\rho^{[\frac{k}{2}]})+O(\alpha_{[\frac{k}{2}]}).
\end{align*}
% Thus. we have
% \vspace{-2.8em}
% \begin{align*}
%     &\Vert e_k^m \Vert_{\infty}\\
%     &+\frac{C''}{1-\rho}\mathop{sup}_{l \ge [\frac{k}{2}]}\alpha_{l-1}\Vert (I-D^m)(\underline{a}_{l-1}^m-\pi_{l-1}^m) \Vert_{\infty}\\
%     \le& C'' \!\rho^k \!\Vert e_0^m \!\Vert_{\infty}\!\!+\!\!2C''\alpha_{0}\cdot \frac{\rho^{\frac{k}{2}}}{1-\rho}+\frac{2C''}{1-\rho}\cdot \alpha_{[\frac{k}{2}]-1}
% \end{align*}
% \vspace{-2em}
Finally, we can conclude $\Vert \pi_k^{m,s}-\pi_k^m \Vert_{\infty}\le O(\rho^{[\frac{k}{2}]})+O(\alpha_{[\frac{k}{2}]})$ and complete the proof of Theorem \ref{belief_convergence_dynamic}.

%% file: appendices_IEEE/Proof_of_gap.tex
\subsection{Proof of Theorem \ref{convergence_of_gap}}
\label{proof_of_gap}
The TNG of Algorithm \ref{alg:team_fp_with_decentralized_belief_learning} is closely related to the true belief $\pi_k^m$.
The proof of Theorem \ref{convergence_of_gap} proceeds in two steps: 1) constructing a reference scenario where true beliefs are frozen and showing that members within a team can learn to team up in spite of the errors caused by the supervisor networks; 2) bounding the TNG by leveraging stochastic differential inclusion approximations.

% \textbf{Reference scenario and error analysis.}
% We first introduce the reference scenario. 
% \vspace{-0.5em}
Given a team strategy profile $\pi\!\!= \!\!\big\{\! \pi^m\! \!\in\! \!\Delta\!\left(\underline{\mathcal{A}}^m\right) \!\!\big\}_{\!m \in \!\mathcal{T}}$, for any agent $i\!\! \in\!\! \mathcal{I}^m$ and any $a^{\!-\!i} \!\!\in\!\! \prod_{j \in \mathcal{I}^m\!, j\neq i} \!\mathcal{A}^j$, (\ref{potential_function_difference}) implies $br_{\!\tau} \!\big(\! u^i\!( \!\cdot,a^{\!-\!i}\!,\!\pi^{\!-\!m} \!)\! \big)\!\!=\!\!br_{\!\tau} \big(\! \phi^m\!(\! \cdot,a^{\!-\!i}\!,\!\pi^{\!-\!m} \!)\! \big)$, where $\pi^{\!-\!m} \!\!\triangleq \!\!\{ \pi^l \}_{\!l \neq m}$.
We divide the learning process into epochs of length $T$.
By accumulating the true belief update (\ref{belief_update}) from $nT$ to $(\!n\!+\!1\!)T\!-\!1$, we obtain
% \vspace{-3em}
\begin{align}
\label{belief_accumulation_1}
    \pi_{(\!n\!+\!1\!)T}^m 
    \!\!=\!\!\!\left[\! \prod_{k=nT}^{(\!n\!+\!1\!)T\!-\!1}\!\!\!\! (\!1\!\!-\!\!\alpha_k\!)\! \right] \!\!\pi_{nT}^m\!+\!\!\!\sum_{k=nT}^{(\!n\!+\!1\!)T\!-\!1} \!\!\!\!\!\alpha_k \!\!\left[ \!\prod_{l=k\!+\!1}^{(\!n\!+\!1\!)T\!-\!1} \!\!\!\!(\!1\!\!-\!\!\alpha_l\!) \!\right]\!\!\underline{a}_k^m\!,
\end{align}
% \vspace{-2em}
where $n\!\!=\!\!0,\!1,\!\dots$ is the epoch index.
Let $\pi^m_{(n)} \!\!\triangleq\!\! \pi^m_{nT}$ and $\pi^{m,s}_{(n)} \!\!\triangleq\!\! \pi^{m,s}_{nT}$ denote the true belief and the actual belief about team $m$ after learning $n$ epochs.
We further define
% \vspace{-3em}
\begin{align}
\label{definitions_of_beta}
    \beta_k \triangleq \alpha_k \prod_{l=k+1}^{(n+1)T-1} \!\!(1-\alpha_l)\ \text{ and }\ \beta_{(n)} \triangleq \sum_{k=nT}^{(n+1)T-1} \!\!\beta_k.
\end{align}
% \vspace{-2em}
Then, we rewrite (\ref{belief_accumulation_1}) as
% \vspace{-3em}
\begin{align}
\label{belief_accumulation_2}
    \pi_{(n\!+\!1)T}^m\!=\!\left(\!1\!-\!\beta_{(n)} \!\right)\! \cdot\! \pi_{(n)}^m\!+\!\beta_{(n)} \!\!\left[ \sum_{k=nT}^{(n\!+\!1)T\!-\!1} \!\!\frac{\beta_k}{\beta_{(n)}} \cdot \underline{a}_k^m\right].
\end{align}
% \vspace{-2em}
By Assumption 3 and Lemma 5.4 of \cite{donmez2024team}, we obtain
% \vspace{-3em}
\begin{align}
\label{beta_properties}
\beta_{(n)}\! \!\to\! 0 \text{ as } n\! \!\to\! \infty,\ 
\sum_{n=0}^{\infty}\! \beta_{(n)}\!=\!\infty \ \text{and}\ 
\sum_{n=0}^{\infty} \!\beta_{(n)}^2\! \!< \!\infty.
\end{align}

% \vspace{-2em}
Let $\mathcal{F}_{(n)}$ denote the filtration generated by the $\sigma$-algebra $\sigma(\mathcal{S},ST,\{SN_k\}_{k \ge 0},A_0,\dots,A_{nT-1})$, where $A_t$ denotes the joint action profile of all teams at round $t$, i.e., $A_t=(\underline{a}_t^1,\dots,\underline{a}_t^{|\mathcal{T}|})$.
Note that $\pi_{(n)}^m$ and $\pi_{(n)}^{m,s}$ are $\mathcal{F}_{(n)}$-measurable.
The joint action distributions of team $m$ based on the true beliefs, i.e., in the ideal scenario, are defined for DTOA and iDTOA at time $k$ in epoch $n$ as
% \vspace{-3em}
\begin{align*}
    \text{team-}\nu_{(n),k}^m \triangleq \mathbb{E} \left[ \underline{a}_k^m \big| \mathcal{F}_{(n)} \right],\ \text{indp-}\nu_{(n),k}^m \triangleq \mathbb{E} \left[ \underline{a}_k^m \big| \mathcal{F}_{(n)} \right]
\end{align*}
% \vspace{-2em}
for $k=nT,\dots,(n\!+\!1)T\!-\!1$.
We also define the distributions of the joint action of team $m$ based on the actual beliefs at round $k$ in epoch $n$ for DTOA and iDTOA as
% \vspace{-3em}
\begin{align*}
    \text{a-}\text{team-}\nu_{(n),k}^m \!\triangleq\! \mathbb{E} \left[ \underline{a}_k^m \big| \mathcal{F}_{(n)} \right]\!\!,\text{a-}\text{indp-}\nu_{(n),k}^m \!\triangleq\! \mathbb{E} \left[ \underline{a}_k^m \big| \mathcal{F}_{(n)} \right]
\end{align*}
% \vspace{-2em}
for $k\!\!=\!\!nT\!,\dots,\!(\!n\!+\!1\!)T\!\!-\!\!1$.
For notational simplicity, we use $\nu_{(n),k}^m$ to represent $\text{team-}\nu_{(n),k}^m$ for DTOA and $\text{indp-}\nu_{(n),k}^m$ for iDTOA, and use $\text{a-}\nu_{(n),k}^m$ to represent $\text{a-}\text{team-}\nu_{(n),k}^m$ for DTOA and $\text{a-}\text{indp-}\nu_{(n),k}^m$ for iDTOA.

Then, we rewrite (\ref{belief_accumulation_2}) in form of stochastic approximation:
% \vspace{-3em}
\begin{align}
\label{belief_accumulation_3}
    \pi_{(\!n\!+\!1\!)}^m
    \!\!=\!\!\left(\! 1\!\!-\!\!\beta_{(\!n\!)} \!\right)\!\pi_{(\!n\!)}^m \!\!+\!\!\beta_{(\!n\!)}\!\!\! \left[\! \sum_{k=nT}^{(\!n\!+\!1\!)T\!-\!1} \!\!\!\!\frac{\beta_k}{\beta_{(\!n\!)}}\! \nu_{(\!n\!),k}^m\!\!+\!\text{a-}\omega_{(\!n\!+\!1\!)}^m \!\right]\!\!,
\end{align}
% \vspace{-2em}
where $\text{a-}\omega_{(n+1)}^m$ is defined as
% \vspace{-3em}
\begin{align}
\label{decomposition_of_omega}
    \text{a-}\omega_{(n+1)}^m \triangleq &\sum_{k=nT}^{(n+1)T-1} \frac{\beta_k}{\beta_{(n)}} \left[ \underline{a}_k^m-\nu_{(n),k}^m \right] \nonumber\\[-1pt]
    =&\!\!\!\sum_{k\!=\!nT}^{(\!n\!+\!1\!)\!T\!-\!1} \!\!\!\!\frac{\beta_k}{\beta_{(\!n\!)}}\!\! \left[ \underline{a}_k^m\!\!-\!\!\text{a-}\nu_{(\!n\!),k}^m \!\right] \!\!+\!\!\!\!\!\!\sum_{k\!=\!nT}^{(\!n\!+\!1\!)\!T\!-\!1} \!\!\!\!\frac{\beta_k}{\beta_{(\!n\!)}} \!\!\left[ \text{a-}\nu_{(n),k}^m\!\!-\!\!\nu_{(\!n\!),k}^m\! \right] \nonumber\\[-1pt]
    =&\omega_{(n+1)}^m+\text{a-}e_{(n+1)}^m,
\end{align}
% \vspace{-2em}
where $\omega_{(\!n\!+\!1\!)}^m\!\!\triangleq\!\! \sum_{\!k\!=\!nT}^{\!(\!n\!+\!1\!)T\!-\!1} \!\!\frac{\beta_k}{\beta_{(\!n\!)}} \!\!\left[ \!\underline{a}_k^m\!\!-\!\!\text{a-}\nu_{(\!n\!)\!,k}^m \!\right]$ is a martingale difference sequence and we define the actual error as $\text{a-}e_{(n\!+\!1)}^m \!\!\triangleq\!\! \sum_{k=nT}^{(n\!+\!1)T\!-\!1} \!\!\frac{\beta_k}{\beta_{(n)}} \!\left[ \!\text{a-}\nu_{(n),k}^m\!\!-\!\!\nu_{(n),k}^m \!\right]$.
It follows that $\mathbb{E} \!\left[ \omega_{(n+1)}^m \big \vert \mathcal{F}_{(n)} \right ]\!\!\!\!=0$.
Then we are ready to prove Lemma \ref{distribution_error}.
Recall the result in Lemma \ref{distribution_error}: the difference between $\text{a-}\nu_{(n),k}^m$ and $\nu_{(n),k}^m$ can be bounded as
%\begin{align*}
    $\left\Vert \text{a-}\nu_{(n),k}^m\!\!-\!\!\nu_{(n),k}^m \right\Vert_1\!\! \le\!\! C \delta_{(n)}$,
%\end{align*}
where $\delta_{(n)}\!\! \rightarrow \!0$ as $n \!\!\rightarrow\! \infty$.

% \vspace{-1em}
\begin{proof}
    We first define the joint-action process from the start of epoch $n$.
    We denote the joint-action profiles of all teams by $\{ {\omega}_k \}_{k \ge nT} \triangleq \left\{ \left( \underline{a}_k^m \right)_{m \in \mathcal{T}} \big|\mathcal{F}_{(n)} \right\}$, and view each joint-action profile ${\omega}_k$ as a state.
    The transitions between states depend only on the beliefs about all teams, $ST$, and $\{ SN_k \}_{k \ge 0}$.
    Let $P_{(n),k}$ denote the transition probabilities between states and let $a\text{-}\nu_{(n),k}$ denote the state distribution at round $k$ in the actual scenario.
    Let ${P}^*_{(n)}$ denote the transition probabilities between states and let $\nu_{(n),k}$ denote the state distribution at round $k$ in the ideal scenario.
    Then, the joint-action distributions of team $m$ at round $k$ within epoch $n$ in the actual and ideal scenarios are defined as
    % \vspace{-3em}
    \begin{align}
    \label{wo_yao_yong}
    a\text{-}\nu_{(\!n\!),k}^m(a)\!\!=\!\!\!\!\!\!\sum_{\omega:a^m\!=\!a}\!\!\!a\text{-}\nu_{(\!n\!),k}(\omega),
    \nu_{(\!n\!),k}^m(a)\!\!=\!\!\!\!\!\!\sum_{\omega:a^m\!=\!a}\!\!\!\nu_{(\!n\!),k}(\omega).
    \end{align} 

    % \vspace{-2em}
    Since $\phi^m$ is linear in $\pi$, the smoothed best response is Lipschitz in $\pi$.
    Thus, there exists $L_{\pi}>0$ such that
    % \vspace{-3em}
    \begin{align*}
        &\left\Vert \!P_{(n),k}\! \left( \cdot \big| \omega_k, \!\dots\!, \omega_{nT};\!\pi_k^{ \mathcal{S}} \right)\!\!-\!\!P^*_{(n),k} \!\left( \cdot \big| \omega_k, \!\dots\!, \omega_{nT};\!\pi_k \right)\! \right\Vert_{TV} \nonumber\\[-2pt]
        \le& L_{\pi} \sum_{m \in \mathcal{T}} \mathop{\max}_{s \in \mathcal{S}} \Vert \pi_{k}^{m,s}-\pi_{k}^m \Vert_1,
    \end{align*}
    % \vspace{-2em}
    where $\pi_k^{\mathcal{S}}$ denotes the belief estimates provided by the supervisor network, based on which agents choose their actions.
    Consider the error $ D_k \!\!\triangleq\!\! \left\Vert \text{a-}\nu_{(n),k}\!\!-\!\!{\nu}_{(n),k} \right\Vert_1$.
    Since $\text{a-}\nu_{(\!n\!),k\!+\!1}\!\!=\!\!\text{a-}\nu_{(\!n\!),k}P_{(\!n\!),k}$ and ${\nu}_{(\!n\!),k\!+\!1}\!\!=\!\!{\nu}_{(\!n\!),k}P_{(\!n\!),k}^*$, we have 
    % \vspace{-3em}
    \begin{align*}
    % \label{fen_bu_cha_ju_1}
        D_{k+1}=&\Vert \text{a-}\nu_{(n),k}P_{(n),k}- {\nu}_{(n),k}P_{(n),k}^* \Vert_1\nonumber\\[-2pt]
        =&\bigg\Vert \!\!\left( \text{a-}\nu_{(n),k}\!\!-\!\!{\nu}_{(n),k} \right)\!\!P_{(n),k}^*\!\!+\!\text{a-}\nu_{(n),k} \!\!\left( \!\!P_{(n),k}\!\!-\!\!P_{(n),k}^* \! \right) \!\!\bigg\Vert_1\!.
    \end{align*}
    % \vspace{-2em}
    Using the fact that $\Vert xP \Vert_1 \le \Vert x \Vert_1$ and the above decomposition, we obtain 
    % \vspace{-3em}
    \begin{align*}
        D_{k+1} \le& \left\Vert \!\left( \text{a-}\nu_{(\!n\!),k}\!-\!{\nu}_{(\!n\!),k} \right)\!\!P_{(\!n\!),k}^*\!\right\Vert_1\!\!+\!\!\left\Vert \text{a-}\nu_{(\!n\!),k} \!\!\left(\!\!P_{\!(\!n\!),k}\!\!-\!\!P_{\!(\!n\!),k}^* \! \right)\! \right\Vert_1\\[-2pt]
        \le &D_k+\left\Vert \text{a-}\nu_{(n),k} \left(P_{(n),k}-P_{(n),k}^*  \right) \right\Vert_1\\[-2pt]
        \le &D_k+2 \mathop{\sup}_{\omega} \Vert P_{(n),k}(\omega,\cdot)-P_{(n),k}^*(\omega,\cdot) \Vert_{TV}\\[-3pt]
        \le &D_k+2L_{\pi} \sum_{m \in \mathcal{T}} \mathop{\max}_{s \in \mathcal{S}} \Vert \pi_{k}^{m,s}-\pi_{k}^m \Vert_1.
    \end{align*}

    % \vspace{-2em}
    By the definition of the maximum belief-estimation error in epoch $n$, $\delta_{(\!n\!)}\!\!=\!\!\mathop{\max}_{k \in\! [nT\!\!-\!\!1,(\!n\!+\!1\!)T\!-\!1]}\!\!\sum_{m \in \mathcal{T}} \!\mathop{\max}_{s \in \mathcal{S}} \!\!\Vert \pi_{k}^{m,s}\!\!-\!\!\pi_{k}^m \Vert_1$, we have $D_{k} \le D_{nT}+2TL_{\pi} \delta_{(n)}$.
   Moreover, $ D_{nT} \!\le \!\sum_{m \in \mathcal{T}} C'\Vert \pi_{nT-1}^{m,s}\!-\!\pi_{nT-1}^{m} \Vert_1$.
    Thus we have $D_k \le C \delta_{(n)}$, where $C\!=\!C'\!+\!2TL_{\pi}$.
    From (\ref{wo_yao_yong}), we have
    % \vspace{-3em}
    \begin{align}
        \Vert \text{a-}\nu_{(n),k}^m\!\!-\!\!\nu_{(n),k}^m \Vert_1 \!\!\le \!\!\Vert \text{a-}\nu_{(n),k}\!\!-\!\!\nu_{(n),k} \Vert_1 \!\!=\!\!\! D_k \!\!\le\!\! C \delta_{(n)}.
    \end{align}
    % \vspace{-2em}
    This completes the proof of Lemma \ref{distribution_error}.
\end{proof}

% \vspace{-1em}
Finally, we can get $\Vert \text{a-}e_{(n\!+\!1)}^m\Vert_1 \!\le\! \sum_{k=nT}^{(n+1)T-1} \!\frac{\beta_k}{\beta_{(n)}} C \delta_{(n)}\!\!=C \delta_{(n)}$.
Furthermore, $\delta_{(n)}\to0$ as $n\to\infty$ by Theorem \ref{belief_convergence_dynamic}.
Now we introduce a reference scenario to facilitate the analysis.
Let $\hat{\pi}_t^m$ denote the belief for team $m$ at round $t$ in the reference scenario.
In the reference scenario, $\hat{\pi}_t^m$ is only updated at the end of each epoch.
In other words, for all $nT \!\le\! t \!\le\! (n\!+\!1)T\!-\!1$ and $m \!\in\! \mathcal{T}$, we have $\hat{\pi}_t^m\!=\!\pi_{(n)}^m$.
Since the beliefs are fixed, DTOA dynamics reduce to log-linear learning in the reference scenario.
Let $\hat{a}^m_{(n),k}$ denote the joint action of team $m$ at round $k$ under the fixed beliefs in the reference scenario.

% \vspace{-0.5em}
Due to the nature of log-linear learning, $\{ \hat{a}_{(n),k}^m \}_{k=nT}^{\infty}$ forms a homogeneous Markov chain (MC).
In contrast, the actual action profiles $\{ \underline{a}_k^m \}_{k=nT}^{\infty}$ do not.
We define the stationary distributions of the MC in the reference scenario as $\check{\nu}_{(n),*}^m$ and $\hat{\nu}_{(n),*}^m$ for DTOA and iDTOA, respectively.
By \cite{blume1993statistical,marden2012revisiting}, it follows that $\check{\nu}_{(n),*}^m\!\!=\!br_{\!\tau} \!\!\left(\! \phi^m(\cdot,\pi_{(n)}^{-\!m})\!\right)$.
Therefore, we write the true belief update (\ref{belief_accumulation_3}) as
% \vspace{-3em}
\begin{align}
\label{belief_accumulation}
    \pi_{\!(\!n\!+\!1\!)}^{\!m}\!\!=\!\!\left( \!1\!\!-\!\!\beta_{(\!n\!)} \!\right)\! \pi_{\!(\!n\!)}^{\!m} 
   \!\! +\!\! \beta_{(\!n\!)}\!\!  \left[\! br_{\!\tau}\!\!\left( \!\!\phi^{m}\!(\!\cdot,\!\pi_{\!(\!n\!)}^{\!\!-\!m}\!)\!\! \right)\!\!+\!\!\omega_{\!(\!n\!+\!1\!)}^{\!m}\!\!\!+\!\!e_{\!(\!n\!)}^{\!m}\!\!\!+\!\!a\text{-}e_{\!(\!n\!+\!1\!)}^{\!m} \!\right]\!\!.
\end{align}
% \vspace{-2em}
The error for DTOA is
% \vspace{-3em}
\begin{align}
\label{error_for_team_FP}
    \text{team-}e_{(n)}^m
    % &= \sum_{k=nT}^{(n+1)T-1}\!\! \frac{\beta_k}{\beta_{(n)}} \cdot \text{team-}\nu_{(n),k}^m\!-\!\check{\nu}_{(n),*}^m \nonumber\\[-3pt]
    &=\sum_{k=nT}^{(n+1)T-1} \!\!\frac{\beta_k}{\beta_{(n)}}\left( \text{team-}\nu_{(n),k}^m\!-\!\check{\nu}_{(n),*}^m \right).
\end{align}
% \vspace{-2em}
The error for iDTOA is $\text{indp}\text{-}e_{(n)}^m \!\!\triangleq\!\! \hat{e}^m_{(n)}\!\!+\!\!\check{e}^m_{(n)}$, where $\check{e}^m_{(n)} \!\!\triangleq\!\! \hat{\nu}_{(n),*}^m\!\!-\!\!\check{\nu}_{(n),*}^m$, and $\hat{e}^m_{(n)}\!\!= \!\!\sum_{k=nT}^{(n\!+\!1)T\!-\!1} \!\frac{\beta_k}{\beta_{(n)}} \!\left(\! \text{indp-}\nu_{(n),k}^m\!\!-\!\!\hat{\nu}_{(n),*}^m \!\right)$.
% \begin{align*}
%     \hat{e}^m_{(n)}= \sum_{k=nT}^{(n+1)T-1} \frac{\beta_k}{\beta_{(n)}} \left( \text{indp-}\nu_{(n),k}^m-\hat{\nu}_{(n),*}^m \right).
% \end{align*}
% \begin{align*}
%     \check{e}^m_{(n)} \triangleq \hat{\nu}_{(n),*}^m-\check{\nu}_{(n),*}^m.
% \end{align*}
% Note that $\nu_{(n),k}^m$ in (\ref{error_for_team_FP}) is based on the case of Team-FP, while $\nu_{(n),k}^m$ in (\ref{error_for_independent_team_FP}) is based on the case of Independent Team-FP.

\begin{lemma}
\label{bound_on_error}
    There exist constants $c,d,\rho \ge 0$ such that
    % \vspace{-3em}
    \begin{align*}
        \Vert \text{team-}\nu_{(n),k}^m\!-\!\check{\nu}_{(n),*}^m \Vert_1 \!\le\! c\rho^{k-nT}\!+\!dT\alpha_{nT},\ \forall m \in \mathcal{T},\\[-2pt]
        \Vert \text{indp-}\nu_{(n),k}^m\!-\!\hat{\nu}_{(n),*}^m \Vert_1 \!\le\! c\rho^{k-nT}\!+\!dT\alpha_{nT},\ \forall m \in \mathcal{T}.
    \end{align*}
\end{lemma}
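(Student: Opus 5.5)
We will compare the ideal-scenario action chain within epoch $n$ with the homogeneous Markov chain of the reference scenario, whose beliefs are frozen at $\pi_{(n)}$. In the reference scenario (DTOA) the team-$m$ joint action evolves as a log-linear learning chain on $\underline{\mathcal{A}}^m$. Its transition kernel $\hat P_{(n)}$ depends continuously on $\pi_{(n)}^{-m}$, and its stationary distribution is $\check{\nu}_{(n),*}^m = br_\tau(\phi^m(\cdot,\pi_{(n)}^{-m}))$.

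\textbf{Step 1: uniform geometric mixing of the reference chain.} Because $br_\tau$ puts mass at least $\varepsilon>0$ on every action, uniformly over beliefs (the utilities are bounded and $\tau>0$), the same holds for every kernel $\hat P_{(n)}$. In addition, every team joint action can be reached from every other in $N=|\mathcal I^m|$ steps with probability at least $\gamma>0$, uniformly in $n$. This gives a Doeblin minorization $\hat P_{(n)}^{N}(x,\cdot)\ge \gamma\,\mu(\cdot)$ with constants independent of $n$. Hence for any initial law $\mu_0$,
\[
\|\mu_0\hat P_{(n)}^{j}-\check{\nu}_{(n),*}^m\|_1\le 2(1-\gamma)^{\lfloor j/N\rfloor}\le c\rho^{j},
\]
with $\rho=(1-\gamma)^{1/N}$. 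The same argument applies to iDTOA with stationary law $\hat{\nu}_{(n),*}^m$: each agent updates with probability $\delta\in(0,1)$, so the minorization holds after a single step.

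\textbf{Step 2: drift of the ideal chain from the reference chain.} In the ideal scenario the beliefs at round $k$ of epoch $n$ are the common beliefs $\pi_k$. These satisfy $\|\pi_k^l-\pi_{(n)}^l\|_1\le \sum_{t=nT}^{k-1}2\alpha_t\le 2T\alpha_{nT}$ for every $l$, since $\alpha$ is eventually nonincreasing (by $\alpha_k-\alpha_{k+1}\ge\alpha_k\alpha_{k+1}\ge0$ in Assumption~\ref{step_size_assumption}). Because $br_\tau$ is Lipschitz in $\pi$ (as used in the proof of Lemma~\ref{distribution_error}), the ideal kernel $P^*_{(n),t}$ satisfies $\sup_x\|P^*_{(n),t}(x,\cdot)-\hat P_{(n)}(x,\cdot)\|_{TV}\le L_\pi' T\alpha_{nT}$. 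The ideal and reference chains both start at epoch $n$ from the same $\mathcal F_{(n)}$-measurable state. We then telescope,
\[
\nu_{(n),k}-\mu_0\hat P^{\,k-nT}=\sum_{t=nT}^{k-1}\nu_{(n),t}\big(P^*_{(n),t}-\hat P_{(n)}\big)\hat P_{(n)}^{\,k-1-t}.
\]
Each summand has $\ell_1$ norm at most $2L'_\pi T\alpha_{nT}$, and there are at most $T$ of them. This gives $O(T^2\alpha_{nT})$. To sharpen the bound to $dT\alpha_{nT}$, we apply Step 1 to each summand: every summand has zero total mass, so $\hat P_{(n)}^{k-1-t}$ contracts it by a factor $c\rho^{k-1-t}$. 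The sum is then bounded by $\sum_j c\rho^j\cdot 2L'_\pi T\alpha_{nT}\le dT\alpha_{nT}$ with $d=2cL'_\pi/(1-\rho)$.

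\textbf{Step 3: combine.} By the triangle inequality,
\[
\|\nu_{(n),k}^m-\check\nu_{(n),*}^m\|_1\le\|\nu_{(n),k}-\mu_0\hat P_{(n)}^{k-nT}\|_1+\|\mu_0\hat P_{(n)}^{k-nT}-\check\nu_{(n),*}\|_1,
\]
which yields $c\rho^{k-nT}+dT\alpha_{nT}$. Marginalizing onto team $m$ does not increase the $\ell_1$ norm. One subtlety is that the chain is on the full joint profile and cross-team coupling enters only through beliefs, so the reference chain factorizes over teams and the mixing can be argued team-by-team.

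The main obstacle is Step 1: obtaining a mixing rate that is uniform in $n$ (and in the beliefs), so that $c$ and $\rho$ are constants. The approach is the uniform lower bound on $br_\tau$ probabilities together with the finite-horizon reachability of the random-single-agent updates. If the perturbation bound in Step 2 is instead taken in its crude form $O(T^2\alpha_{nT})$, it can still be absorbed into $dT\alpha_{nT}$ because $T$ is fixed.
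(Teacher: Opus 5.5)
Your proposal is correct and follows essentially the same route as the argument the paper imports from Appendix B.1 of \cite{donmez2024team}: compare the within-epoch ideal chain with the frozen-belief log-linear reference chain, use geometric ergodicity of the reference chain, and control the kernel drift through the $O(T\alpha_{nT})$ movement of the beliefs together with the Lipschitz continuity of $br_\tau$. Your uniform Doeblin bound and the zero-mass contraction in the telescoping sum are what make $c,\rho,d$ independent of $n$ and $T$, so drop the closing fallback, which would make $d$ depend on $T$; also, because the ideal action process is not Markov by itself, read $P^*_{(n),t}(x,\cdot)$ as the averaged kernel $\mathbb{P}(\omega_{t+1}=\cdot\mid\omega_t=x,\mathcal{F}_{(n)})$, which satisfies the same TV bound by convexity.
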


% \vspace{-1.8em}
The proof of Lemma \ref{bound_on_error} can be found in Appendix B.1 of \cite{donmez2024team}.
Let $\text{team}\text{-}e_{(n)}^m$ and $\text{indp}\text{-}e_{(n)}^m$ denote the errors for DTOA and iDTOA, respectively.
Lemma \ref{bound_on_error} yields that $\text{team-} e_{(n)}^m$ and $\hat{e}_{(n)}^m$ can be bounded by
% \vspace{-3em}
\begin{align}
    &\Vert \text{team-}e_{(n)}^m \Vert_1 \le c \sum_{k=nT}^{(n+1)T-1} \frac{\beta_k}{\beta_{(n)}} \rho^{k-nT}+dR\alpha_{nT},\label{bound_on_team_error}\\[-2pt]
    &\Vert \hat{e}_{(n)}^m \Vert_1 \le c \sum_{k=nT}^{(n+1)T-1} \frac{\beta_k}{\beta_{(n)}} \rho^{k-nT}+dR\alpha_{nT}.\label{bound_on_indp_error_1}
\end{align}
% \vspace{-2em}
With Assumption \ref{step_size_assumption} and (\ref{definitions_of_beta}), we have $\frac{\beta_{k\!+\!1}}{\beta_k} \!\!\le\!\! 1$.
Thus,
% \vspace{-3em}
\begin{align}
\label{bound_on_beta_over_beta}
    \frac{\beta_k}{\beta_{(n)}} \le \frac{\beta_{nT}}{T\beta_{(n+1)T-1}} \le \frac{\alpha_{nT}}{T\alpha_{(n+1)T}}.
\end{align}
% \vspace{-2em}
By Assumption \ref{step_size_assumption}, we obtain
% \vspace{-3em}
\begin{align}
\label{limitation_of_alpha_over_alpha}
    \lim_{n \to \infty} \frac{\alpha_{nT}}{T\alpha_{(n+1)T}}=\frac{1}{T} \lim_{n \to \infty} \prod_{k=nT}^{(n+1)T-1} \frac{\alpha_k}{\alpha_{k+1}}=\frac{1}{T}.
\end{align}
% \vspace{-2em}
Using (\ref{bound_on_team_error}), (\ref{bound_on_indp_error_1}), (\ref{bound_on_beta_over_beta}), (\ref{limitation_of_alpha_over_alpha}), and the decay property of $\alpha_k$, we obtain, for all $m \in \mathcal{T}$,
% \vspace{-3em}
\begin{align*}
    \limsup_{n \to \infty}\! \left\Vert \text{team-}e_{(\!n\!)}^m \!\right\Vert_1 \!\!\le\!\! \frac{c}{T} \frac{1}{1\!\!-\!\!\rho}\text{ and }\limsup_{n \to \infty}\! \left\Vert \hat{e}_{(\!n\!)}^m \!\right\Vert_1 \!\!\le\!\! \frac{c}{T} \frac{1}{1\!\!-\!\!\rho}.
\end{align*}
% \vspace{-2em}

Let $\hat{\nu}$ and $\check{\nu}$ denote the stationary distributions induced by the classical and independent log-linear learning, respectively.
A small $\delta>0$ implies close stationary distributions in the classical and independent settings:
% \vspace{-3em}
\begin{align}
\label{bound_on_difference_between_classical_and_independent}
   \Vert  \hat{\nu}-\check{\nu} \Vert_1 \le \Lambda(\delta,\varepsilon_{\phi}),
\end{align}
% \vspace{-2em}
for some function $\Lambda$, and the difference $\Vert  \hat{\nu}-\check{\nu} \Vert_1$ decays to zero as $\delta \rightarrow 0^+$ for any $\varepsilon_{\phi}>0$.
Based on (\ref{bound_on_difference_between_classical_and_independent}), we can bound $\check{e}_{(n)}^m$ as $\check{e}_{(n)}^m \le \Lambda(\delta,\varepsilon)$ for some $\varepsilon>0$.
Hence, given $\varepsilon>0$, there exists $N_{\varepsilon} \in \mathbb{N}_+$ such that for any $ n > N_{\varepsilon}$,
% \vspace{-3em}
\begin{equation}
\begin{aligned}
    &\Vert \text{team-}e_{(n)}^m \Vert_1 < C_{team}(\varepsilon,T),\\
    & \Vert \text{indp-}e_{(n)}^m \Vert_1 < C_{indp}(\varepsilon,T),\Vert \text{a-}e_{(n+1)}^m\Vert_1 \le \varepsilon,
\end{aligned}
\label{11111}
\end{equation}
% \begin{align}
%     &\Vert \text{team-}e_{(n)}^m \Vert_1 < C_{team}(\varepsilon,T),\label{conclusion_bound_on_team_error}\\
%     &\Vert \text{indp-}e_{(n)}^m \Vert_1 < C_{indp}(\varepsilon,T),\label{conclusion_bound_on_indp_error}\\
%     &\Vert \text{a-}e_{(n+1)}^m\Vert_1 \le \varepsilon,\label{11111}
% \end{align}
% \vspace{-2em}
where $C_{team}(\varepsilon,T)\!\!=\!\!\varepsilon\!+\!\frac{c}{T}\! \frac{1}{1-\rho}$ and $C_{indp}(\varepsilon,T)\!\!=\!\!\varepsilon\!+\!\frac{c}{T} \!\frac{1}{1-\rho}\!+\!\Lambda(\delta,\varepsilon)$.
Note that $C_{team} (\varepsilon,T) \!\to\! 0$ as $\varepsilon \!\to\! 0,\ T \!\to\! \infty$, and $C_{indp}(\varepsilon,T) \!\to\!\Lambda(\delta,\varepsilon)$ as $\varepsilon \!\to\! 0,\ T \!\to\! \infty$.
% For any $k >0$, by the dual variable update (\ref{dual_variable_update}) we can get
% \begin{align}
% \label{dual_variable_inclusion}
%     \vert \lambda_{k+1}^m-\lambda \vert&=\left \vert \left[ \lambda_k^m+\frac{1}{t_{\lambda},k}c^m \left( \underline{a}_k^m \right) \right]_+-\lambda \right\vert \nonumber\\ 
%     &\le \left\vert \lambda_k^m+\frac{1}{t_{\lambda},k}c^m-\lambda \right \vert \nonumber\\ 
%     & \le \vert \lambda_k^m-\lambda \vert+\frac{c_{max}}{t_{\lambda},k}
% \end{align}
% With (\ref{dual_variable_inclusion}), we have
% \begin{align}
%     \vert \lambda_{(n+1)}^m-\lambda_{(n)}^m \vert \le \frac{c_{max}T}{t_{\lambda,nT}} \le \frac{1}{n^2T}
% \end{align}
% Like (\ref{belief_accumulation}), we obtain
% \begin{align}
%     \lambda_{(n+1)}^m=\lambda_{(n)}^m+\beta_{(n)} e_{\lambda,(n)}^m ,
% \end{align}
% where $\vert e_{\lambda,(n)}^m \vert \le \frac{1}{\beta_{(n)}n^2T}$.
% According to (\ref{beta_properties}) and the fact that $\sum_{n=1}^{\infty} \frac{1}{n^2}<\infty$, we have $\lim_{n \to \infty}\frac{1}{\beta_{(n)}n^2}=0$.
% So there exists $N_{\epsilon}' \in \mathbb{N}_+$ such that
% \begin{align}
%     \vert e_{\lambda,(n)}^m \vert \le \frac{\epsilon}{T}=C_{\lambda}(\epsilon,T)
% \end{align}

% \subsubsection{Convergence Analysis}
% For the dual variables, we have
% \begin{align}
%     \vert \lambda_k^m \vert \le \vert \lambda_1^m \vert+\sum_{l=1}^{k-1}\frac{c_{max}}{l^2} \le \vert \lambda_1^m \vert+\sum_{l=1}^{\infty}\frac{c_{max}}{l^2}=\vert\lambda_1^m \vert+\frac{\pi^2}{6}\triangleq \lambda_{max}
% \end{align}
% \vspace{-0.5em}
We then analyze the convergence of the TNG.
To this end, we define the set-valued mapping
% \vspace{-3em}
\begin{align*}
    F(\pi) &\triangleq \bigg \{\!\! \Big(   br_{\tau}\left(\phi^m(\cdot,\pi^{-m})\right)\!-\!\pi^m\!+\!e^m \Big)_{m \in \mathcal{T}} \bigg|\  \forall m \in \mathcal{T}: \nonumber\\[-5pt]
    &\Vert e^m \Vert_1 \!\le\! C(\varepsilon,T), \ br_{\tau} \left( \phi^m(\cdot,\pi^{-m}) \right)\!+\!e^m \!\in\! \Delta\! \left( \underline{\mathcal{A}}^m \right) \!\!\bigg \}
\end{align*}
% \vspace{-2em}
for all $\pi \in \prod_{m \in \mathcal{T}} \Delta \left( \underline{\mathcal{A}}^m \right)  \triangleq \Pi $, where
% \vspace{-3em}
\begin{align*}
    C(\varepsilon,T)=
    \begin{cases}
        C_{\text{team}}(\varepsilon,T)+\varepsilon \quad \text{for DTOA},\\
        C_{\text{indp}}(\varepsilon,T)+\varepsilon \quad \text{for iDTOA}.
    \end{cases}
\end{align*}
% \vspace{-2em}
Then, (\ref{belief_accumulation}) and (\ref{11111}) imply that, for sufficiently large $n$,
% \vspace{-3em}
\begin{align*}
% \label{set_update}
    \pi_{(n+1)}\!-\!\pi_{(n)}\!-\!\beta_{(n)} \!\cdot \omega_{(n+1)} \in \beta_{(n)} \!\cdot F\left( \pi_{(n)} \right).
\end{align*}
% \vspace{-2em}
The following argument is similar to the proof in Appendix A.2 of \cite{donmez2024team}.
Finally, we get
% \vspace{-3em}
\begin{align*}
     \limsup_{k \to \infty}\!T\!N\!G(\pi_k^s)
     \!\!\le\!\!  \begin{cases}
        \!\tau \!\log\! \vert \mathcal{A} \vert, \!\!\!\!&\text{for DTOA},\\
        \!\tau \!\log\! \vert \mathcal{A} \vert\!+\!\vert \mathcal{T}  \vert^2 \overline{\phi}  \Lambda(\delta,\varepsilon_{\phi}), \!\!\!\!&\text{for iDTOA},
    \end{cases}
\end{align*}
% \vspace{-2em}
This completes the proof of Theorem \ref{convergence_of_gap}.

%% file: appendices_IEEE/proof_of_byzantine.tex
\subsection{Proof of Lemma \ref{theorem_of_checking_performance} and Theorem \ref{honest_TNG}}

\textbf{Proof of Lemma \ref{theorem_of_checking_performance}.}
Given a team $m \in \mathcal{T}$ and its supervisor $s$, define
$X_k^{m,s}\!\!=\!\!\mathbf{1}\{\!v_k^{m,s}\!\!=\!\!1,z_k^{m,s}\!\!=\!\mathrm{fail}\}$.
We have
% \vspace{-3em}
\begin{align*}
    \mathbb{P} \left( X_k^{m,s}\!=\!1 \right)\!=\!
    \begin{cases}
        q \!\cdot\! \eta_{FP}, \!\!\!\!& m \!\in\! \mathcal{H},\\
        q \!\cdot\! \left[(1\!-\!p_{\mathrm{lie}})\eta_{FP}\!+\!p_{\mathrm{lie}}(1\!-\!\eta_{FN}) \right]\!, \!\!\!\!&m \!\in\! \mathcal{B}.
    \end{cases}
\end{align*}
% \vspace{-2em}
For any $t\!>\!0$, $F_t^{m,s}\!\!=\sum_{k=1}^t X_k^{m,s}$ is the number of $\mathrm{fail}$ signals up to round $t$.
Using the KL-form Chernoff bound (see, e.g., \cite{wainwright2019high}, Chapter 2) and (\ref{frequency}), we get 
% \vspace{-3em}
\begin{align*}
    &\mathbb{P}\!\!\left( \!\!\frac{F_t^{m,s}}{t} \!\!>\!\!f \!\!\right) \!\!\le\!\! \exp \!\big(\!\!\! -\!\!tD(f \Vert q \eta_{FP}) \!\big),m \in \mathcal{H},\text{ and}\\
    &\mathbb{P}\!\!\left(\!\! \frac{F_t^{m,s}}{t}\!\! <\!\!f \!\!\right)\! \!\!\le\!\! \exp \!\Big(\!\!\! \!-\!\!tD \!\big(f \Vert q  [(\!1\!\!-\!\!p_{\mathrm{lie}}\!)\!\eta_{F\!P}\!\!+\!\!p_{\mathrm{lie}}\!(\!1\!\!-\!\!\eta_{F\!N}\!) ] \!\big) \!\!\Big)\!,m \!\!\in\!\! \mathcal{B},
\end{align*}
% \vspace{-2em}
where $D(x \Vert y) \triangleq x \log \frac{x}{y}+(1-x)\log \frac{1-x}{1-y}$.
This completes the proof of Lemma \ref{theorem_of_checking_performance}.

% \vspace{-0.5em}
\textbf{Proof of Theorem \ref{honest_TNG}.}
For $m \in \mathcal{H}$,
% \vspace{-3em}
\begin{align*}
    \mathbb{P}\left( \exists t\ge t_0: \frac{F_t^{m,s}}{t} >f \right) \\
    \le \sum_{t=t_0}^{\infty} \exp \big(-tD(f \Vert q \eta_{FP}) \big) \\
    =\frac{\exp \left( -t_0 D(f \Vert q \eta_{FP}) \right)}{1-\exp\left( -D(f \Vert q \eta_{FP}) \right)}.
\end{align*}
% \vspace{-2em}
We define the following events:
% \vspace{-3em}
\begin{align*}
    \mathcal{E}_0( t_0 )  =    \bigcap_{m \in \mathcal{H}}    \left\{  \forall t \ge  t_0  :  \frac{F_t^{m,s}}{t}   \le   f   \right\}  ,\\  \mathcal{E}_1( t_1 )  =    \bigcap_{m \in \mathcal{B}}   \left\{  \exists t \le  t_1   :  \frac{F_t^{m,s}}{t}    >    f  \right\}.
\end{align*}
% \vspace{-2em}
Then we obtain
% \vspace{-3em}
\begin{align*}
    &\mathbb{P} \left( \mathcal{E}_0(t_0) \right) \ge 1 - \vert \mathcal{H} \vert \cdot \frac{\exp \left( -t_0D(f \Vert q \eta_{FP}) \right)}{1-\exp \left( -D(f \Vert q \eta_{FP}) \right)} = 1 - \delta_{t_0},\\
    &\mathbb{P}   \left(  \mathcal{E}_1 ( t_1 )  \right)    \ge    1  -  \vert \mathcal{B} \vert   \cdot  \exp   \Big(     -  t_1 D  \big( f \Vert q   \big[ ( 1   -  p_{\mathrm{lie}} ) \eta_{F P} \\
    &\quad\quad\quad\quad\quad+  p_{\mathrm{lie}} ( 1   -  \eta_{F N} )  \big]  \big)   \Big)   =   1  -  \delta_{t_1}.
\end{align*}

% \vspace{-2em}
Define the good event as $\mathcal{E}( t_0, t_1 )  =  \mathcal{E}_{0}( t_0 )  \cap  \mathcal{E}_{1}( t_1 )$. Then we have
$$\mathbb{P} ( \mathcal{E}( t_0, t_1 ) )    \ge   1  - \delta_{t_0}   - \delta_{t_1}   =  1  - \delta_B.$$
On the good event $\mathcal{E}$, for $k\! >\!t_1$, the beliefs of Byzantine teams are frozen, while those of honest teams are not.
Following the proof of Theorem \ref{convergence_of_gap}, we can conclude
% \vspace{-3em}
\begin{align*}
     &\limsup_{k \to \infty}TNG_{\mathcal{H}}\left(\pi_k^s\right)\nonumber\\
     \le&
     \begin{cases}
        \tau \log \left\vert \mathcal{A}_{\mathcal{H}} \right\vert, &\text{for BR-DTOA},\\
        \tau \log \left\vert \mathcal{A}_{\mathcal{H}} \right\vert+\left\vert \mathcal{H}  \right\vert^2 \overline{\phi} \Lambda(\delta,\epsilon), &\text{for BR-iDTOA},
    \end{cases}
\end{align*}
% \vspace{-2em}
This completes the proof of Theorem \ref{honest_TNG}.